\def\BibTeX{{\rm B\kern-.05em{\sc i\kern-.025em b}\kern-.08em
    T\kern-.1667em\lower.7ex\hbox{E}\kern-.125emX}}
\definecolor{lightgray}{rgb}{0.74, 0.74, 0.74} 
\DeclareMathOperator*{\argmin}{argmin}
\DeclareMathOperator*{\diag}{Diag}
\newcommand{\norm}[1]{\left\lVert#1\right\rVert}
\newcommand{\abs}[1]{\left|#1\right|}
\newcommand{\IversonBracket}[1]{\llbracket #1 \rrbracket}
\newcommand{\brk}[1]{\left(#1\right)}
\newcommand{\bsq}[1]{\left[#1\right]}
\newcommand{\obsq}[1]{\left]#1\right[}
\newcommand{\bcur}[1]{\left\{#1\right\}}
\newcommand{\ceil}[1]{\lceil#1\rceil}
\newcommand{\A}{\mathcal{A}}
\newcommand{\F}{\mathcal{F}}
\newcommand{\C}{\mathcal{C}}
\renewcommand{\O}{\mathcal{O}}
\renewcommand{\P}{\mathcal{P}}
\renewcommand{\S}{\mathcal{S}}
\renewcommand{\H}{\mathcal{H}}
\newcommand{\I}{\mathcal{I}}
\newcommand{\J}{\mathcal{J}}
\newcommand{\T}{\mathcal{T}}
\newcommand{\Q}{\mathcal{Q}}
\newcommand{\R}{\mathcal{R}}
\newcommand{\X}{\mathcal{X}}
\newcommand{\U}{\mathcal{U}}
\newcommand{\NN}{\mathbb{N}}
\newcommand{\ZZ}{\mathbb{Z}}
\newcommand{\RR}{\mathbb{R}}
\DeclareMathAlphabet{\mymathbb}{U}{BOONDOX-ds}{m}{n}
\newcommand{\one}{\mymathbb{1}}
\newcommand{\zero}{\mymathbb{0}}
\newtheorem{theorem}{Theorem}[section]
\newtheorem{corollary}[theorem]{Corollary}
\newtheorem{lemma}[theorem]{Lemma}
\newtheorem{definition}{Definition}
\newtheorem{problem}{Problem}
\newtheorem{assumption}{Assumption}
\newtheorem{proposition}{Proposition}[section]
\newtheorem{remark}{Remark}[section]
\newcommand{\lightcomment}[1]{\Comment{{\color{lightgray}\small {#1}}}}
\newcommand{\blue}[1]{\textcolor{blue}{#1}}
\begin{document}



\title{\LARGE{Control Invariant Sets for Neural Network Dynamical Systems and Recursive Feasibility in Model Predictive Control}}



\author{Xiao Li$^{*}$, Tianhao Wei$^{\dagger}$, Changliu Liu$^{\dagger}$, \IEEEmembership{Senior Member, IEEE}, \\ Anouck Girard$^{*}$, \IEEEmembership{Senior Member, IEEE}, Ilya Kolmanovsky$^{*}$, \IEEEmembership{Fellow, IEEE}
\thanks{$^{*}$Xiao Li, Anouck Girard, and Ilya Kolmanovsky are with the Department of Aerospace Engineering, University of Michigan, Ann Arbor, MI 48109, USA. {\tt\small \{hsiaoli, anouck, ilya\}@umich.edu}}
\thanks{$^{\dagger}$Tianhao Wei, and Changliu Liu are with the Robotics Institute, Carnegie Mellon University, PA 15213, USA. {\tt\small \{twei2, cliu6\}@andrew.cmu.edu}}
}

\maketitle

\begin{abstract}
Neural networks are powerful tools for data-driven modeling of complex dynamical systems, enhancing predictive capability for control applications. However, their inherent nonlinearity and black-box nature challenge control designs that prioritize rigorous safety and recursive feasibility guarantees.
This paper presents algorithmic methods for synthesizing control invariant sets specifically tailored to neural network based dynamical models. These algorithms employ set recursion, ensuring termination after a finite number of iterations and generating subsets in which closed-loop dynamics are forward invariant, thus guaranteeing perpetual operational safety. 
Additionally, we propose model predictive control designs that integrate these control invariant sets into mixed-integer optimization, with guaranteed adherence to safety constraints and recursive feasibility at the computational level.
We also present a comprehensive theoretical analysis examining the properties and guarantees of the proposed methods.
Numerical simulations in an autonomous driving scenario demonstrate the methods' effectiveness in synthesizing control-invariant sets offline and implementing model predictive control online, ensuring safety and recursive feasibility.

\end{abstract}

\begin{IEEEkeywords}
Neural Networks, Invariant Sets, Model Predictive Control, Recursive Feasibility
\end{IEEEkeywords}

\section{Introduction}\label{sec:intro}
\IEEEPARstart{N}{eural} Networks (NN), from early developments in Multi-Layer Perceptrons (MLP) \cite{hornik1989multilayer} to recent advances in recurrent NNs \cite{mandic2001recurrent} and physics-informed NNs \cite{raissi2019physics}, have been widely adopted for modeling dynamical systems due to their ability to learn complex nonlinear behaviors directly from data, reducing the reliance on explicit analytical models. 
Theoretically, NNs serve as universal function approximators, allowing for accurate representation of intricate dynamical systems \cite{hornik1989multilayer, cybenko1989approximation}. In practice, the availability of well-established NN training algorithms \cite{goodfellow2016deep} and platforms \cite{pytorch} facilitates their use for data-driven identification \cite{chen1990non}. 
This versatility has enabled their deployment in diverse fields, including autonomous vehicles for human behavior prediction \cite{li2024decision}, finance for stock market prediction \cite{kimoto1990stock}, and biology for protein structure and interaction discovery \cite{baek2021accurate}. 

Building on the predictive power of NN-modeled Dynamical Systems (NNDS), their integration into Model Preditive Control (MPC) and decision-making has significantly enhanced the responsiveness \cite{zhang2023modeling} and adaptability \cite{li2024decision} of control strategies, leading to improved practical performance \cite{salzmann2023real}. 
Those algorithms utilize NNDS to forecast system evolutions while optimizing control actions to achieve task-related objectives and enforce engineering constraints \cite{mayne2000constrained}. Among these constraints, safety is often paramount, ensuring that system states remain within a predefined safe region and avoid potentially hazardous or undesirable conditions. However, the derivation of formal safety guarantees in NNDS-based control remains a critical challenge due to the inherent nonlinearity and lack of interpretability of NNs \cite{huang2020survey}.

Literature addressing safety in NNDS spans both open-loop and closed-loop considerations. The safety verification of open-loop NNDS primarily focuses on reachability analysis or output range estimation, which maps a prescribed NNDS input domain to an output range and examines whether it remains within predefined safe regions. Techniques such as interval bound propagation \cite{mirman2018differentiable}, mixed-integer programming \cite{dutta2018output} and nested optimization schemes \cite{ruan2018reachability} have been developed for this purpose. However, open-loop approaches inherently lack corrective or feedback control mechanisms to regulate the NNDS's behavior, making their safety guarantees conservative.

To address this limitation, reachability analysis has been extended to closed-loop systems, demonstrating its ability to verify safety under predefined feedback control policies. Studies in this domain analyze the evolution of linear \cite{hu2020reach} and nonlinear \cite{huang2019reachnn} dynamical systems governed by fixed control policies, which are often learned NN controllers. Popular methods include interval analysis \cite{jafarpour2024efficient}, Bernstein polynomials \cite{huang2019reachnn}, and Semi-Definite Programming \cite{hu2020reach}. A comprehensive tutorial on reachability analysis, along with its transcription into mathematical programs and numerical algorithms, is provided in \cite{liu2021algorithms}. 
However, in this safety verification process, the safety of the NNDS is tied to a fixed control policy within a given horizon, which may limit adaptability to dynamic environments where online optimization for control is required. In contrast, our approach establishes Control Invariant Sets (CISs) \cite{blanchini1999set} for NNDS, ensuring the safety of the closed-loop system without dependence on a specific control law. 

In particular, CIS synthesis \cite{blanchini1999set, bertsekas1972infinite, kerrigan2000invariant} involves designing a generalizable control strategy offline that ensures safety indefinitely. This makes CIS synthesis a more general and challenging problem, particularly for NNDS. Researchers have developed function-based methods that prescribe CIS as level sets of control barrier/Lyapunov functions \cite{ames2019control} and safety indices \cite{liu2014control}. Meanwhile, set-based methods have been explored for piecewise affine discrete-time systems \cite{rakovic2004computation}, alongside function-based approaches for NNDS \cite{mazouz2022safety}, where subsequent online optimization for control is not discussed. Since function-based methods might have limited ability to handle multi-horizon prediction while ensuring the recursive feasibility of online optimization for control, our method focuses on set-based developments tailored for NNDS.

Additionally, similar to function-based methods \cite{wei2022safe, wei2022persistently}, we integrate CIS into MPC design, facilitating online optimization for control while ensuring recursive feasibility in real-time operation. In contrast to \cite{wei2022safe}, which requires the formation selection of safety index functions, restricting the representational flexibility. It is also limited to a one-step prediction horizon for control due to computational constraints. Our method introduces an algorithmic framework for CIS synthesis.
This framework accommodates more flexible set representations for the CIS and incorporates feasibility considerations within broader MPC formulations. 

In particular, the proposed method synthesizes CIS for discrete-time NNDS, modeled by MLP with ReLU activation, with state and control constraints.
The proposed algorithm has several distinguished features that represent the main contribution of this article:

\begin{itemize}
\item The proposed method synthesizes CIS through a nested set recursion and reduces the set recursion into subproblems handled by interval reachability analysis and mixed-integer linear optimization, ensuring computational efficiency. The individual optimization problem scales linearly with the number of neurons in NNDS and the dimensionality of state and control variables.
\item Theoretical properties of the algorithmic framework guarantee that the set recursion terminates in a finite number of steps, making the CIS synthesis process computationally feasible. 
\item The algorithmic framework employs a state-space quantization approach, enabling flexible set representations as unions of hyperboxes, and minimizing tuning complexity by reducing parameter adjustments to a single variable governing quantization resolution. This parameter balances the resulting CIS size with the theoretical maximum number of iterations required for synthesis.
\item The CISs are incorporated into generalized MPC designs as mixed-integer linear constraints, ensuring the closed-loop NNDS stays within the CIS indefinitely. The MPC optimization remains recursively feasible and can be warm-started with feasible solutions from the offline CIS synthesis process, while its complexity scales linearly with the number of neurons in the NNDS, the prediction horizon, and the dimensions of state and control variables.
\item The theoretical results are validated through simulations in a vehicle lane-keeping application, demonstrating the ability to ensure recursive feasibility and maintain computational times suitable for real-time implementation.
\end{itemize}

This article is organized as follows: In Sec.~\ref{sec:problem}, we review the theoretical foundations of set invariance and introduce the NNDS, along with the considered CIS synthesis and MPC problems. Sec.~\ref{sec:methodSyn} presents the algorithmic framework and optimization problems formulated to synthesize CIS for NNDS. In Sec.~\ref{sec:methodSynThm}, we establish the theoretical properties of the CIS synthesis process. Sec.~\ref{sec:methodCtrl} integrates the synthesized CIS into the MPC framework for online control optimization with established theoretical properties. Sec.~\ref{sec:results} provides experimental validation through offline CIS synthesis and online MPC implementation in a simulated lane-keeping application. Finally, conclusions are given in Sec.~\ref{sec:conclusion}.
\section{Problem}\label{sec:problem} 
In this section, we first review set invariance in Sec.~\ref{problem:prelim}.  
Next, in Sec.~\ref{problem:nndm}, we introduce the closed-loop NNDS and outline the two primary directions for this work.  
Throughout the paper, lower-case letters (e.g., $a, b, x, y, u$) are used to represent vectors, upper-case letters (e.g., $W, I$) denote matrices, and scripted letters (e.g., $\I, \T, \R, \A, \X, \U$) are reserved for sets or set-valued mappings.

\subsection{Preliminaries}\label{problem:prelim}

We consider a dynamical system represented by a discrete-time model,
\begin{equation}\label{eq:dynamicalSys}
    x_{k+1} = f(x_k, u_k),
\end{equation}
where $x_k\in \X \subseteq \RR^{n_x}$ is a vector of state variables and $u_k\in \U \subseteq \RR^{n_u}$ is a vector of control inputs. 

\begin{definition}[Forward Invariant Set \cite{blanchini1999set}]\label{def:forwardInv}
For~an autonomous system $x_{k+1} = f(x_k)$, a non-empty set $\S \subseteq \X$ is a Forward Invariant Set (FIS) if and only if, for all $x_0 \in \S$, the system trajectory satisfies $x_k \in \S$ for all $k \in \NN$.
\end{definition}

\begin{definition}[Control Invariant Set \cite{blanchini1999set}]\label{def:ctrlInv}
For a non-autonomous system $x_{k+1} = f(x_k, u_k)$, a non-empty set $\C \subseteq \X$ is a CIS if and only if there exists an admissible feedback control law $u_k = \pi(x_k) \in \U$ such that $\C$ is a FIS for the closed-loop system $x_{k+1} = f(x_k, \pi(x_k))$.
\end{definition}

\begin{remark}
The FIS and CIS for their respective dynamical systems are not necessarily unique. The largest sets that contain all the invariant sets that are constraint admissible are referred to as the Maximal FIS \cite{gilbert1991linear} and the Maximal CIS \cite{blanchini1994ultimate}, respectively.
\end{remark}

\begin{definition}[One-Step Returnable Set \cite{blanchini1994ultimate}]\label{def:returnable_set}
For a dynamical system $x_{k+1} = f(x_k, u_k)$, the One-Step Returnable Set $\R(\I, \T)$ is a set of all the states in a non-empty set $\I \subseteq \X$, from which there exists an admissible control that drives the system to a non-empty target set $\T \subseteq \X$, i.e.,  
\end{definition}
\begin{equation}\label{eq:returnable_set}
    \R(\I,\T):= \left\{x_k\in\I: \exists u_k\in \U,~ s.t.\;x_{k+1}\in \T \right\}.
\end{equation}

\begin{definition}[One-Step Returnable Subset]\label{def:returnable_subset}
For a dynamical system $x_{k+1} = f(x_k, u_k)$, a set $\Tilde{\R}(\I, \T)$ is a One-Step Returnable Subset of a non-empty set $\I \subseteq \X$ to a non-empty target set $\T \subseteq \X$ if and only if, for all $x_k \in \Tilde{\R}(\I, \T) \subseteq \I$, there exists an admissible control $u_k \in \U$ such that $x_{k+1} \in \T$.
\end{definition}

\begin{definition}[$i$-Step Admissible Set 
\cite{dorea1999b}]\label{def:i_th_adm_set}
For a dynamical system $x_{k+1} = f(x_k, u_k)$ and $i \in \ZZ^{\geq 0}$, the $i$-Step Admissible Set $\A_i(\T)$ is a set of all initial states in the state space $\X$, starting from which there exists an admissible control sequence of length $i$ that keeps the system state trajectory inside the non-empty target set $\T \subseteq \X$ for $i$ steps, i.e.,  
\end{definition}
\begin{equation}\label{eq:i_th_adm_set}
    \A_i(\T) := 
    \left\{
    x_0\in\X:  
    \begin{array}{c}
         \exists u_k \in \U,~k = 0,\dots, i-1,
         \\
         s.t.~  x_k \in \T,~k = 1,\dots, i
    \end{array}
    \right\}.
\end{equation}

\begin{definition}[$i$-Step Admissible Subset]\label{def:i_th_adm_subset}
For a dynamical system $x_{k+1} = f(x_k, u_k)$ and $i \in \ZZ^{\geq 0}$, a set $\tilde{\A}_i(\T) \subseteq \T$ is an $i$-Step Admissible Subset of a non-empty set $\T \subseteq \X$ if and only if, for all $x_0 \in \tilde{\A}_i(\T)$, there exists an admissible control sequence of length $i$, i.e., $u_k \in \U$, $k = 0, 1, \dots, i-1$, such that the system trajectory satisfies $x_k \in \T$ for all $k \in \bcur{1, 2, \dots, i}$.
\end{definition}

\begin{remark}\label{rmk:i_th_adm_subset}
Any subset of the One-Step Returnable Set $\Tilde{\R} \subseteq \R(\I, \T)$ and the $i$-Step Admissible Set $\Tilde{\A}_i \subseteq \A_i(\T)$, including the empty set, is itself a valid One-Step Returnable Subset and an $i$-Step Admissible Subset, respectively. Additionally, a $j$-Step Admissible Subset is always an $i$-Step Admissible Subset for any $j \geq i$.
\end{remark}

\subsection{Closed-Loop Neural Network Dynamical System}\label{problem:nndm}
\begin{figure}[ht!]
\begin{center}
\includegraphics[width=0.8\linewidth]{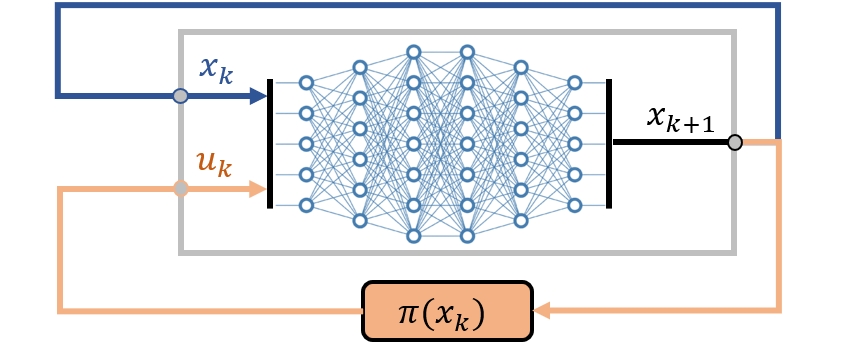}
\end{center}\vspace{-1.2em}
\caption{A schematic diagram of a closed-loop NNDS.}
\label{fig:nndm}
\end{figure}

As shown in Fig.~\ref{fig:nndm}, we consider a dynamical system represented by a pre-trained NNDS $f:\X\times\U\rightarrow\X$ that admits the following form of an MLP,
\begin{subequations}\label{eq:nn_structure}
\begin{equation}\label{eq:nn_structure:out}    
    x_{k+1} = W^{(\ell)}z^{(\ell-1)} +B^{(\ell)},
\end{equation}
\begin{equation}\label{eq:nn_structure:relu}        
    z^{(i)} = {\tt activate}^{(i)}(\hat{z}^{(i)}),
\end{equation}
\begin{equation}\label{eq:nn_structure:linear}     
    \hat{z}^{(i)} = W^{(i)}z^{(i-1)} +B^{(i)},
    ~
    i=1,\dots,\ell-1,
\end{equation}
\begin{equation}\label{eq:nn_structure:in}     
    z^{(0)} = [x_k^T, u_k^T]^T,
\end{equation}
\end{subequations}
where $\ell-1$ is the number of hidden layers, $n_i$, for $i=1,\dots,\ell-1$, represents the number of neurons in each hidden layer \eqref{eq:nn_structure:relu} and \eqref{eq:nn_structure:linear}, with $n_0 = n_x + n_u$ in the input layer \eqref{eq:nn_structure:in} and $n_\ell = n_x$ in the output layer \eqref{eq:nn_structure:out}, $W^{(i)}\in\mathbb{R}^{n_i\times n_{i-1}}$, $B^{(i)}\in\mathbb{R}^{n_i}$ and ${\tt activate}^{(i)}:\mathbb{R}^{n_i}\rightarrow\mathbb{R}^{n_i}$ are the weight matrix, the bias vector and an element-wise nonlinear activation function in the $i${th} layer, respectively. 
We assume that the nonlinear functions are ${\tt ReLU}$ activation functions
\begin{equation}\label{eq:relu}
    {\tt activate}^{(i)}(x) = \max\{0,x\},\;i=1,\cdots,\ell-1,
\end{equation}
that are commonly adopted in contemporary NN architectures and have demonstrated good empirical performance \cite{lecun2015deep}. This study addresses the following problems related to constrained control using NNDS. 

\begin{problem}[Control Invariant Set Synthesis]\label{problem:1}
Given the NNDS \eqref{eq:nn_structure}, synthesize offline a CIS $\C \subseteq \X_s$ such that the closed-loop system remains within $\C$ when starting from an arbitrary initial state $x_0 \in \C$, under some admissible feedback control law $u_k = \pi(x_k) \in \U$. Here, $\X_s \subseteq \X$ represents a safe set in the state space defined by a set of state constraints.
\end{problem}

\begin{problem}\label{problem:2}\textit{(Model Predictive Control and Recursive Feasibility)}
Consider the feedback control law as the solution to the following MPC Problem:
\begin{subequations}\label{eq:cocp} 
\begin{equation}\label{eq:cocp:obj} 
\pi(x_k) \in \argmin\limits_{\substack{u_{k+n},~x_{k+n+1}, \\ n = 0, 1, \dots, N-1}} K(x_{k+N}) + \sum_{n=0}^{N-1} l_{k+n}(x_{k+n}, u_{k+n}), 
\end{equation} 
\text{subject to:} \vspace{-1.5\baselineskip}
\begin{align} 
x_k = x_0,
\\
u_{k+n} \in \U,\label{eq:cocp:ctrl}
\\
x_{k+n+1} = f(x_{k+n}, u_{k+n}),\label{eq:cocp:dyna}
\\
x_{k+n+1} \in \X_s,~n = 0, 1, \dots, N-1   \label{eq:cocp:state}
\end{align} 
\end{subequations}
where $N$ is the prediction horizon length, $K$ is the terminal cost, $l_{k+n}$ is the running cost, and $x_0 \in \X_s$ is an initial state satisfying the constraints.
The MPC \eqref{eq:cocp} computes control laws $u_k = \pi(x_k)$ to regulate the state trajectory, as encoded by the equality constraints \eqref{eq:cocp:dyna}, while respecting control constraints \eqref{eq:cocp:ctrl} and state constraints \eqref{eq:cocp:state}. 
However, the MPC \eqref{eq:cocp} may not always be feasible and might lack an admissible control solution. To address this, additional measures are required to ensure the recursive feasibility of the MPC, i.e., if a feasible solution exists at a given time step $k$, it must also exist at every subsequent time step.
\end{problem}

\section{Control Invariant Set Synthesis}\label{sec:methodSyn}
In this section, we focus on addressing Problem~\ref{problem:1}. We first introduce assumptions and quantization of the state space in Sec.~\ref{method:syn:quant} as a preparation for numerical algorithms.  
Subsequently, we present numerical methods in Sec.~\ref{method:syn:algorithm} (also see Fig.~\ref{fig:algorithm_arch}) to synthesize a CIS for the closed-loop NNDS.
These approaches simplify Problem~\ref{problem:1} by reducing it to a more computationally tractable problem, which is then addressed using constrained optimization with integer variables and linear constraints, as detailed in Sec.~\ref{method:synthesis:propagation}. The theoretical properties of the proposed algorithms are summarized in subsequent~Sec.~\ref{sec:methodSynThm}. 

In the sequel, we use $a \leq b$ to denote the element-wise order between two vectors $a, b \in \RR^n$. Additionally, given a lower bound $a \in \mathbb{R}^n$, an upper bound $b \in \mathbb{R}^n$, and $a \leq b$, we use $[a, b]$ and $]a, b[$ to represent a closed and an open hyperbox in $\mathbb{R}^n$, respectively. In the following discussion, the term ``hyperbox" refers to closed ones unless otherwise~specified. 

\begin{figure*}[ht!]
    \centering
    \includegraphics[width=0.97\textwidth]{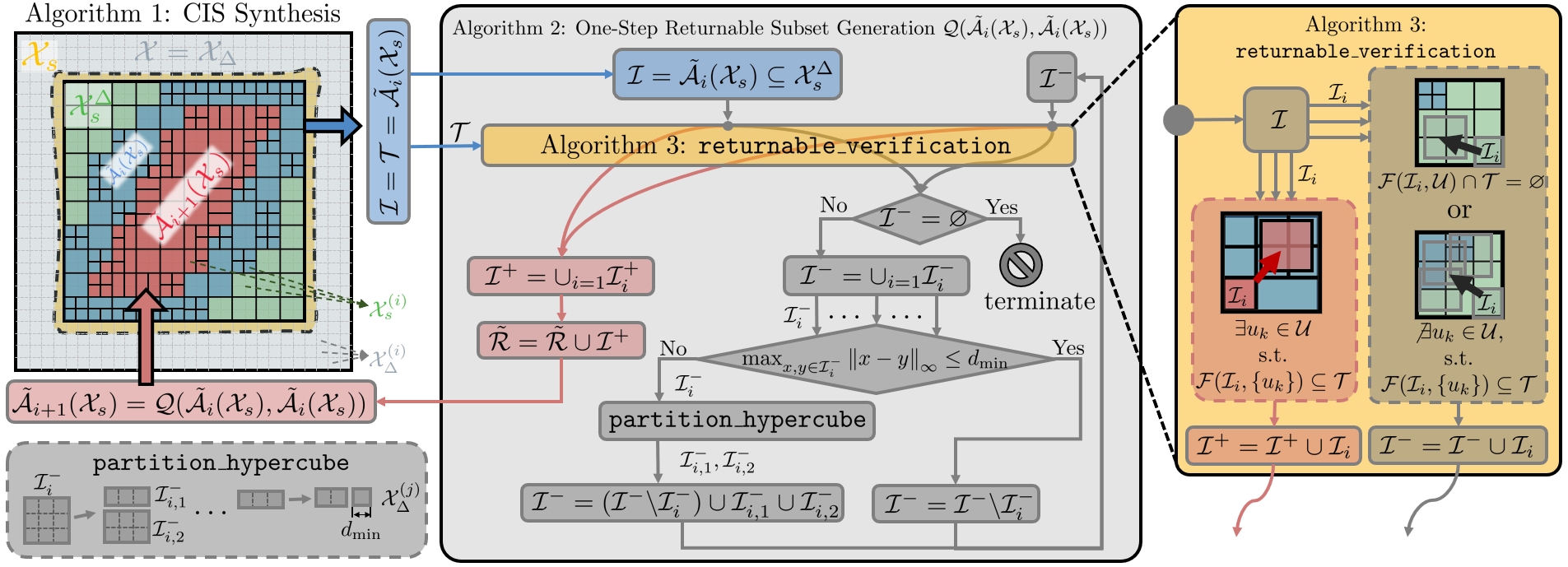}\vspace{-0.5em}
    \caption{A schematic diagram of the proposed numerical algorithms for synthesizing a CIS: Algorithm~\ref{al:synthesisInvSet} executes a set recursion that generates a sequence of $i$-Step Admissible Subsets and terminates when reaching a fixed point. The set recursion is advanced by Algorithm~\ref{al:synthesisUnderQ}, which computes a One-Step Returnable Subset $\Q(\tilde{\A}_i(\X_s), \tilde{\A}_i(\X_s))$ of the $i$-Step Admissible Subset $\tilde{\A}_i(\X_s)$. Algorithm~\ref{al:isReturnable} supports this process by verifying the one-step returnability of hyperboxes in the set $\tilde{\A}_i(\X_s)$ to itself.}
    \label{fig:algorithm_arch}
\end{figure*}

\subsection{Algorithmic Setup and State Space Quantization}\label{method:syn:quant}

We assume that the state space $\X$, safe set $\X_s\subseteq\X$, and admissible control set $\U$ satisfy Assumption~\ref{assume:x_u_box}. 

\begin{assumption}\label{assume:x_u_box}
We assume that the state space $\X$ and the set $\U$ are bounded hyperboxes that take the following form,
\begin{subequations}\label{eq:feasible_set}
\begin{equation}\label{eq:x_feasible_set}
    \X = [\underline{x}, \overline{x}]\in \H^{n_x},~\underline{x},\overline{x}\in\mathbb{R}^{n_x},~\underline{x}\leq\overline{x},
\end{equation}
\begin{equation}\label{eq:u_feasible_set}
    \U = [\underline{u}, \overline{u}]\in \H^{n_u},~\underline{u},\overline{u}\in\mathbb{R}^{n_u},~\underline{u}\leq\overline{u}.
\end{equation}
\end{subequations}
where $\H^{n}$ denotes the collection of hyperboxes in $\RR^n$. We assume the interior of the safe set, $\mathrm{int} \X_s$, is not empty.
\end{assumption}

We now define the following State Space Quantization (SSQ) process that discretizes the state space $\X$ into a union of hyperboxes, producing a collection of basis hyperboxes $\X_{\Delta}$ that support the quantization of the safe set $\X_s$ into many smaller safe hyperboxes $\X_s^{(i)}$, $i=1,\dots,n_s$.  
In particular, each safe hyperbox $\X_s^{(i)} \subseteq \X_{s}^{\Delta}$ can be represented as a union of basis hyperboxes.

\begin{definition}[\textbf{SSQ Process}]
    Given Assumption~\ref{assume:x_u_box}, $\X$ can be quantized into a finite union of disjoint \textbf{\textit{basis hyperboxes}} $\X_{\Delta}$, namely, 
    \begin{multline}\label{eq:x_feasible_set:quantize}
        \X= \cup_{i=1}^{n_\Delta} \X_{\Delta}^{(i)},
        ~
        \X_{\Delta}^{(i)} \in \X_{\Delta} \subseteq \H^{n_x},
        ~
        \X_{\Delta} :=
        \\
        \left\{
        \begin{array}{c}
            \bsq{a, b} = 
            \\
            \bigotimes_{j=1}^{n_x} [a_j, b_j]
        \end{array}
        :
        \begin{array}{c}
             a_j =\underline{x}_j + k d_{\min}, 
             ~
             b_j = 
             \\
             \min\{\underline{x}_j + (k+1) d_{\min}, \overline{x}_j\}, 
             \\
             k \in\{ 0,1,\dots, \ceil{\frac{(\overline{x}_j-\underline{x}_j)}{d_{\min}}} \}
        \end{array}
        \right\}
    \end{multline}
    where $n_{\Delta} = \prod_{j=1}^{n_x} \ceil{\frac{(\overline{x}_j-\underline{x}_j)}{d_{\min}}}$ is the number of basis hyperboxes in $\X_{\Delta}$, $\ceil{x}$ is the smallest integer greater than or equal to $x$, $d_{\min}$ is the SSQ resolution denoting the side length of the basis hyperboxes, $\bigotimes$ denotes the Cartesian product, and the subscript $j$ represents the $j$th element in the respective~vectors.   
\end{definition}
    
Afterward, our discussion focuses on sets that belong to the power set of $\X_{\Delta}$ under union, denoted as $\P_{\cup}(\X_{\Delta})$, which contains all possible unions of basis hyperboxes in $\X_{\Delta}$, i.e.,
\begin{equation}\label{eq:ssq:power_union}
    \P_{\cup}(\X_{\Delta}) := 
    \{ 
    \S
    :
    \exists \J \subseteq \bsq{n_\Delta},
    ~
    \S = \cup_{j\in\J} \X_{\Delta}^{(j)}
    \}
\end{equation}
where $\bsq{n_{\Delta}} := \{1,2,\dots,n_\Delta\}$ represents a set of indices smaller than or equal to $n_{\Delta}$.    
In addition, given an non-empty $\mathrm{int} \X_s$, the safe set $\X_s$ can be approximated from within, i.e., $\X_{s}^{\Delta}\subseteq \X_s$, by a finite union of safe hyperboxes $\X_{s}^{\Delta}$ (see the left side of Fig.~\ref{fig:algorithm_arch}) that obeys the following condition,
\begin{equation}\label{eq:x_safe:quantize}
    \begin{array}{c}            
        \X_{s}^{\Delta} := \bigcup_{i=1}^{n_s} \X_s^{(i)},
        \text{ s.t. }
        \X_s^{(i)}\in\H^{n_x}, 
        \X_s^{(i)} \subseteq \X_s,
        \\
        \X_s^{(i)} \in \P_{\cup}(\X_{\Delta}),
        ~
        \forall i =1,\dots,n_s,
    \end{array}
\end{equation}
where $n_s$ is the number of safe hyperboxes in $\X_{s}^{\Delta}$. We also note that $\X_{s}^{\Delta}\in \P_{\cup}(\X_{\Delta})$.  

\begin{remark}   
    We note that the inner approximation of the safe set $\X_s$, i.e., $\X_s^\Delta=\bigcup_{i=1}^{n_s} \X_s^{(i)}\subseteq\X_s$, is achieved using a coarser quantization, as described in \eqref{eq:x_safe:quantize}, compared to the SSQ process in \eqref{eq:x_feasible_set:quantize}. Namely, each safe hyperbox $\X_s^{(i)}\in \P_{\cup}(\X_{\Delta})$ can be larger than the basis hyperboxes (see Fig.~\ref{fig:algorithm_arch}).  
    This approach is motivated by memory considerations for $\X_s^{\Delta}$. In the case of a finer SSQ \eqref{eq:x_feasible_set:quantize} with a smaller $d_{\min}$, directly approximating $\X_s$ using smaller basis hyperboxes, i.e., making each safe hyperbox a basis hyperbox $\X_s^{(i)} \in \X_{\Delta}$ instead, can be memory-intensive.
    The coarser quantization \eqref{eq:x_safe:quantize} reduces memory demands while still allowing flexibility to increase the resolution of individual safe hyperbox $\X_s^{(i)}$ by further dividing it into smaller basis hyperboxes when necessary.
\end{remark}

In addition, due to the basis hyperboxes defined in \eqref{eq:x_feasible_set:quantize} being disjoint, we note that the set $\P_{\cup}(\X_{\Delta})$ forms an algebra of sets $(\X_{\Delta}, \P_{\cup}(\X_{\Delta}))$ that possesses the following properties.

\begin{proposition}\label{prop:ssq:unique}
    Consider the SSQ process, for all $\S\in \P_{\cup}(\X_{\Delta})$, there exists a unique set of indices $\J=\Delta(\S) \subseteq [n_\Delta]$ such that $\S = \cup_{j\in\J} \X_{\Delta}^{(j)}$, namely,
    \begin{equation*}
        \forall \S \in \P_{\cup}(\X_{\Delta}),
        ~
        \exists! \J = \Delta(\S) \subseteq \bsq{n_\Delta},
        ~
        \text{s.t. } \S = \cup_{j\in\J} \X_{\Delta}^{(j)},
    \end{equation*}
    where $\Delta(\cdot): \P_{\cup}(\X_{\Delta}) \to \NN$ extracts the indices of the basis hyperboxes that construct the set $\S$.
\end{proposition}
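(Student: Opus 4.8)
The plan is to split the statement into its existence and uniqueness parts, since only the latter carries real content. Existence is immediate from the defining equation \eqref{eq:ssq:power_union}: membership of $\S$ in $\P_{\cup}(\X_{\Delta})$ \emph{means} that $\S = \cup_{j \in \J} \X_{\Delta}^{(j)}$ for some index set $\J \subseteq \bsq{n_\Delta}$, so at least one admissible $\J$ exists by construction. The whole proposition therefore reduces to showing that this index set is unique, i.e.\ that assigning $\Delta(\S) := \J$ is well defined and does not depend on which representing union we happen to write down.

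For uniqueness I would use a double-inclusion argument driven by the two structural properties of the basis hyperboxes supplied by the SSQ construction \eqref{eq:x_feasible_set:quantize}: each $\X_{\Delta}^{(i)}$ is non-empty (because $d_{\min} > 0$ together with the admissible range of $k$ guarantees $a_j < b_j$ on every coordinate), and distinct basis hyperboxes are pairwise disjoint (as asserted in the paragraph preceding the proposition, where $(\X_{\Delta}, \P_{\cup}(\X_{\Delta}))$ is declared an algebra of sets). Concretely, suppose $\S = \cup_{j \in \J_1} \X_{\Delta}^{(j)} = \cup_{j \in \J_2} \X_{\Delta}^{(j)}$ for two index sets $\J_1, \J_2 \subseteq \bsq{n_\Delta}$. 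Fix any $i \in \J_1$ and choose a point $p \in \X_{\Delta}^{(i)}$, possible by non-emptiness. Since $p \in \S = \cup_{j \in \J_2} \X_{\Delta}^{(j)}$, there is some $j \in \J_2$ with $p \in \X_{\Delta}^{(j)}$; then $p \in \X_{\Delta}^{(i)} \cap \X_{\Delta}^{(j)}$, so this intersection is non-empty, and disjointness of distinct basis hyperboxes forces $i = j \in \J_2$. This yields $\J_1 \subseteq \J_2$, and the symmetric argument gives $\J_2 \subseteq \J_1$, hence $\J_1 = \J_2$.

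The one point I would treat with care — and the only place the argument could quietly break — is the interaction between non-emptiness and disjointness for \emph{closed} hyperboxes, since adjacent closed cells share boundary faces and are not literally disjoint. The proof goes through verbatim only under the convention (already implicit in calling $\P_{\cup}(\X_{\Delta})$ an algebra of sets) under which \eqref{eq:x_feasible_set:quantize} produces genuinely disjoint, non-degenerate cells, e.g.\ a half-open partition along each quantized coordinate. Once that convention is fixed, the distinguishing-point step is unambiguous, and I would close by setting $\Delta(\S)$ equal to the unique $\J$ obtained above, noting that the codomain of $\Delta(\cdot)$ is properly the set of index subsets of $\bsq{n_\Delta}$.
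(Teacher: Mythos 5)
Your proof is correct and takes the same route as the paper, whose entire argument is the one-line observation that uniqueness "follows from the fact that the basis hyperboxes $\X_\Delta$ are disjoint"; your existence-plus-double-inclusion elaboration is exactly the content that sentence compresses. Your caveat about adjacent \emph{closed} cells sharing boundary faces is a fair catch — the paper's definition \eqref{eq:x_feasible_set:quantize} does produce overlapping closed boxes as written, and the disjointness the paper asserts (and your argument needs) only holds under the half-open convention you identify, so flagging it is appropriate rather than a defect of your proof.
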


\begin{proof}
    The proof follows from the fact that the basis hyperboxes $\X_\Delta$ are disjoint.
\end{proof}

\begin{theorem}\label{thm:set_algebra}
    The power set of $\X_{\Delta}$ under union, $\P_{\cup}(\X_{\Delta})$, as defined in \eqref{eq:ssq:power_union}, forms \textit{\textbf{an algebra of sets}} $(\X_{\Delta}, \P_{\cup}(\X_{\Delta}))$, namely, it contains the empty set, is closed under complement in $\X_{\Delta}$, and is closed under binary unions.
\end{theorem}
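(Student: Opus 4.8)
The plan is to reduce the statement to the elementary fact that the ordinary power set $2^{\bsq{n_\Delta}}$ is a Boolean algebra, and to transport that structure along the index map $\Delta(\cdot)$ supplied by Proposition~\ref{prop:ssq:unique}. That proposition already guarantees that every $\S \in \P_{\cup}(\X_\Delta)$ is encoded by a \emph{unique} index set $\Delta(\S) \subseteq \bsq{n_\Delta}$, while conversely every $\J \subseteq \bsq{n_\Delta}$ yields an element $\cup_{j\in\J}\X_\Delta^{(j)} \in \P_{\cup}(\X_\Delta)$ by definition~\eqref{eq:ssq:power_union}. Hence $\Delta$ is a bijection between $\P_{\cup}(\X_\Delta)$ and $2^{\bsq{n_\Delta}}$, and it suffices to check that $\Delta$ intertwines the three required set operations with their index-level counterparts.

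First I would dispose of the empty set: choosing $\J = \emptyset$ in~\eqref{eq:ssq:power_union} gives $\emptyset \in \P_{\cup}(\X_\Delta)$, so the collection is grounded. Next, for closure under binary union, given $\S_1, \S_2$ with index sets $\J_1 = \Delta(\S_1)$ and $\J_2 = \Delta(\S_2)$, I would observe that
\begin{equation*}
    \S_1 \cup \S_2 = \Brk{\cup_{j\in\J_1}\X_\Delta^{(j)}} \cup \Brk{\cup_{j\in\J_2}\X_\Delta^{(j)}} = \cup_{j\in\J_1\cup\J_2}\X_\Delta^{(j)},
\end{equation*}
which exhibits $\S_1\cup\S_2$ as a union of basis hyperboxes indexed by $\J_1\cup\J_2\subseteq\bsq{n_\Delta}$, hence $\S_1\cup\S_2\in\P_{\cup}(\X_\Delta)$ with $\Delta(\S_1\cup\S_2)=\J_1\cup\J_2$. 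This step is purely formal and uses only associativity and commutativity of union.

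The main obstacle is closure under complement, taken relative to the universe $\X = \cup_{j\in\bsq{n_\Delta}}\X_\Delta^{(j)}$ (the union of all basis hyperboxes, which is what ``complement in $\X_\Delta$'' should mean). For $\S$ with $\J = \Delta(\S)$, the claim is $\X\setminus\S = \cup_{j\in\bsq{n_\Delta}\setminus\J}\X_\Delta^{(j)}$, so that $\Delta(\X\setminus\S) = \bsq{n_\Delta}\setminus\J$ and the complement lands back in $\P_{\cup}(\X_\Delta)$. The inclusion $\supseteq$ follows from disjointness: any $x$ in a box $\X_\Delta^{(j)}$ with $j\notin\J$ cannot lie in $\S$, since $x\in\S$ would place $x$ in some box indexed by $\J$, contradicting that distinct basis hyperboxes are disjoint (the property underlying Proposition~\ref{prop:ssq:unique}). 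The reverse inclusion $\subseteq$ uses that the basis hyperboxes \emph{cover} $\X$ by~\eqref{eq:x_feasible_set:quantize}: any $x\in\X\setminus\S$ lies in at least one box $\X_\Delta^{(j)}$, and $j\notin\J$ since otherwise $x\in\S$. This partition structure --- disjointness together with covering --- is exactly the crux I would isolate, and it is the one place where the geometric properties of the quantization (rather than pure bookkeeping) enter. Once the three operations are verified, closure under finite intersection follows for free by De~Morgan's law, confirming that $(\X_\Delta,\P_{\cup}(\X_\Delta))$ is an algebra of sets.
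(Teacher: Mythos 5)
Your proposal is correct and follows essentially the same route as the paper's proof: both verify the three axioms by passing to the index sets $\Delta(\cdot)$ from Proposition~\ref{prop:ssq:unique} and checking that union and complement correspond to union and complement of indices in $\bsq{n_\Delta}$. Your treatment of the complement (splitting it into two inclusions, one from disjointness and one from the covering property of the basis hyperboxes) merely makes explicit a step the paper asserts in one line, so it is a refinement of detail rather than a different argument.
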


\begin{proof}
    According to definition \eqref{eq:ssq:power_union}, it is evident that $\varnothing\in \P_{\cup}(\X_\Delta)$ since $\varnothing =\cup_{j\in\Delta (\varnothing)} \X_{\Delta}^{(j)}$ and $\Delta (\varnothing) = \varnothing \subseteq \bsq{n_\Delta}$. Meanwhile, for all sets $\S\in \P_{\cup}(\X_{\Delta})$, we have $\S=\cup_{j\in\Delta(\S)} \X_{\Delta}^{(j)}$. Then, due to the basis hyperboxes defined in \eqref{eq:x_feasible_set:quantize} being disjoint, the complement $\S^c = \X_{\Delta}\backslash \S \in \P_{\cup}(\X_{\Delta})$ since $\S^c= \cup_{j\in\Delta(\S^c)} \X_{\Delta}^{(j)}$ with an index set $\Delta(\S^c)= \bsq{n_\Delta}\backslash \Delta(\S) $, where $\backslash$ denotes the set difference. Lastly, if $\S_1=\cup_{j\in\Delta(\S_1)} \X_{\Delta}^{(j)}$, $\S_2=\cup_{j\in\Delta(\S_2)} \X_{\Delta}^{(j)}$ and $\Delta(\S_1),\Delta(\S_2)\subseteq \bsq{n_\Delta}$, then  $\S_1 \cup \S_2 \in \P_{\cup}(\X_{\Delta})$ since  $\S_1 \cup \S_2=\cup_{j\in\Delta(\S_1)\cup \Delta(\S_2)} \X_{\Delta}^{(j)}$.
\end{proof}

Then, the Corollary below follows from De Morgan's laws.

\begin{corollary}\label{coro:set_algebra}
    The algebra of sets $(\X_{\Delta}, \P_{\cup}(\X_{\Delta}))$ is closed under finite intersections, and finite unions.
\end{corollary}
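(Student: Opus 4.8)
The plan is to build directly on Theorem~\ref{thm:set_algebra}, which already supplies the three defining properties of an algebra of sets: $\varnothing \in \P_{\cup}(\X_{\Delta})$, closure under complement in $\X_{\Delta}$, and closure under binary unions. From these, closure under finite intersections and finite unions follows by combining De Morgan's laws with a short induction on the number of sets involved, so no new structural facts about the basis hyperboxes are needed.

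First I would establish closure under \emph{binary} intersection. For any $\S_1, \S_2 \in \P_{\cup}(\X_{\Delta})$, De Morgan's law gives $\S_1 \cap \S_2 = (\S_1^c \cup \S_2^c)^c$, where each complement $\S^c = \X_{\Delta} \backslash \S$ is taken within the quantized universe as in the proof of Theorem~\ref{thm:set_algebra}. Reading the right-hand side from the inside out: $\S_1^c, \S_2^c \in \P_{\cup}(\X_{\Delta})$ by closure under complement; their union $\S_1^c \cup \S_2^c \in \P_{\cup}(\X_{\Delta})$ by closure under binary union; and the complement of that union is again in $\P_{\cup}(\X_{\Delta})$ by closure under complement once more. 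Hence $\S_1 \cap \S_2 \in \P_{\cup}(\X_{\Delta})$.

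Next I would extend both operations to an arbitrary finite family $\{\S_i\}_{i=1}^{m}$ by induction on $m$. For finite unions, the base case $m=2$ is exactly the binary-union closure from Theorem~\ref{thm:set_algebra}, and the inductive step writes $\bigcup_{i=1}^{m} \S_i = \left(\bigcup_{i=1}^{m-1} \S_i\right) \cup \S_m$, applying binary-union closure to the induction hypothesis. Finite intersections proceed identically: the base case is the binary-intersection result just derived, and the inductive step uses $\bigcap_{i=1}^{m} \S_i = \left(\bigcap_{i=1}^{m-1} \S_i\right) \cap \S_m$.

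I do not expect a substantive obstacle, as this is the standard consequence of the algebra-of-sets axioms; the only point demanding care is that every complement must be taken relative to $\X_{\Delta}$ (equivalently, the full quantized state space $\X = \bigcup_{i} \X_{\Delta}^{(i)}$ of \eqref{eq:x_feasible_set:quantize}) rather than the ambient $\RR^{n_x}$, so that the De Morgan identities apply within the universe on which the algebra is defined. As a cross-check, one may verify the claim at the level of index sets: by Proposition~\ref{prop:ssq:unique}, unions and intersections of sets in $\P_{\cup}(\X_{\Delta})$ correspond precisely to unions and intersections of their index sets $\Delta(\cdot) \subseteq \bsq{n_\Delta}$, and the collection of subsets of $\bsq{n_\Delta}$ is trivially closed under both, confirming the result.
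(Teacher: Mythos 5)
Your proposal is correct and follows the same route the paper intends: the paper simply states that the corollary ``follows from De Morgan's laws'' applied to the properties established in Theorem~\ref{thm:set_algebra}, and your argument fills in exactly those details (binary intersection via $\S_1 \cap \S_2 = (\S_1^c \cup \S_2^c)^c$, then induction for finite families). Your care in taking complements relative to $\X_{\Delta}$ and the index-set cross-check via Proposition~\ref{prop:ssq:unique} are both consistent with the paper's framework.
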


\begin{remark}
    We also note that a more general basis hyperboxes $\X_{\Delta}$ can be constructed according to, 
    \begin{multline*}
        \X = \cup_{i=1}^{n_{\Delta}} \X_\Delta^{(i)},
        ~
        \X_\Delta^{(i)} \in \X_{\Delta}, 
        \text{ s.t.}
        \\
        \X_\Delta^{(i)}\cap\X_{\Delta}^{(j)} = \varnothing, 
        \forall i\neq j,~ i,j\in [n_\Delta]
        \\
        \text{and }
        \X_\Delta^{(i)} \in \H^{n_x},
        ~
        \forall i = 1,2,\dots,n_{\Delta}.
    \end{multline*}
    Here, we adopt an evenly spaced mesh grid quantization for a simpler code implementation.
\end{remark}

\subsection{Problem Reduction and Numerical Algorithms}\label{method:syn:algorithm}

Given an NNDS $f$ in \eqref{eq:nn_structure} and the SSQ process, we develop a computational procedure illustrated in Fig.~\ref{fig:algorithm_arch} and detailed in Algorithms~\ref{al:synthesisInvSet},~\ref{al:synthesisUnderQ}, and~\ref{al:isReturnable}. This method synthesizes a CIS $\C$ within the union of safe hyperboxes $\X_s^{\Delta}\subseteq \X_s$.

\begin{algorithm}[htp]
\caption{CIS Synthesis}
\label{al:synthesisInvSet}
\begin{algorithmic}[1]
\Require $\X=\X_\Delta$, $\U$, $\X_s^{\Delta}\subseteq\X_s$, $f(\cdot,\cdot)$
\lightcomment{the inputs are the state space $\X=\X_\Delta$ and control admissible set $\U$ according to Assumption~\ref{assume:x_u_box}; the quantized safe set $\X_s^{\Delta}$ according to the SSQ process; and the NNDS $f$}
\State $i = 0$, $\tilde{\A}_0(\X_s)=\X_s^{\Delta}$
\lightcomment{initialize a $0$-Step Admissible Subset using $\X_s^{\Delta}\subseteq\X_s$, which is trivially safe without the application of control}
\State Initialize $\tilde{\A}_1(\X_s)=\varnothing$
\lightcomment{initialize a $1$-Step Admissible Subset with an empty set according to Remark~\ref{rmk:i_th_adm_subset}}
\While{$\tilde{\A}_{i+1}(\X_s)\neq\tilde{\A}_{i}(\X_s)$}
\lightcomment{terminate the iterative algorithm once the $i$-Step Admissible Subset converges}
\State $\tilde{\A}_{i+1}(\X_s)=\Q\left(\tilde{\A}_i(\X_s), \tilde{\A}_i(\X_s) \right)$
\lightcomment{derive an $(i+1)$-Step Admissible Subset using a One-Step Returnable Subset of the $i$-Step Admissible Subset according to Algorithm~\ref{al:synthesisUnderQ}}
\State $i = i + 1$
\lightcomment{increase iteration index number by one}
\EndWhile
\\
\Return{$\C=\tilde{\A}_{i}(\X_s)$}
\lightcomment{ return a CIS}
\end{algorithmic}
\end{algorithm}

The key idea of Algorithm~\ref{al:synthesisInvSet} lies in the set recursion defined in line 4, which follows a structure similar to those adopted in \cite{bertsekas1972infinite, kerrigan2000invariant}. Here, the set-valued function $\Q(\I, \T)$ determines a One-Step Returnable Subset from the initial set $\I$ to the target set $\T$, ensuring that $\Q(\I, \T) \subseteq \R(\I, \T) \subseteq \I$. This implies $\tilde{\A}_{i+1}(\X_s) = \Q(\tilde{\A}_i(\X_s), \tilde{\A}_i(\X_s)) \subseteq \tilde{\A}_i(\X_s)$. 
Then, initialized with $\tilde{\A}_0(\X_s) = \X_s^{\Delta} \subseteq \X_s$, the set recursion generates a nested sequence of $i$-Step Admissible Subsets within the safe set $\X_s$, i.e., $\X_s \supseteq \X_s^\Delta = \tilde{\A}_0(\X_s) \supseteq \tilde{\A}_1(\X_s) \supseteq \tilde{\A}_2(\X_s) \supseteq \cdots$.
Intuitively, for the NNDS, initial states within $\Q(\tilde{\A}_i(\X_s), \tilde{\A}_i(\X_s))$ can be regulated to $\tilde{\A}_i(\X_s)\subseteq \X_s$ in the next step. From $\tilde{\A}_i(\X_s)$ which is an $i$-Step Admissible Subset, i.e., $\tilde{\A}_i(\X_s)\subseteq \A_i(\X_s)$, the system can remain within the safe set $\X_s$ for at least $i$ subsequent steps. This property ensures that $\tilde{\A}_{i+1}(\X_s) = \Q(\tilde{\A}_i(\X_s), \tilde{\A}_i(\X_s))$ is an $(i+1)$-Step Admissible Subset of the safe set $\X_s$, i.e., $\tilde{\A}_{i+1}(\X_s)\subseteq \A_{i+1}(\X_s)$.
Upon termination, the $i${th} and $(i+1)$-Step Admissible Subsets become equal, i.e., $\C = \tilde{\A}_i(\X_s) = \tilde{\A}_{i+1}(\X_s)$. By Definition~\ref{def:returnable_subset} and the relationship $\C = \Q(\C, \C)$, it follows that for all $x_k \in \C$, there exists an admissible control $u_k \in \U$ such that $x_{k+1} = f(x_k, u_k) \in \C$.  
This ensures that the closed-loop NNDS can remain within the set $\C = \tilde{\A}_i(\X_s) \subseteq \X_s$ indefinitely, rendering $\C$ a CIS.

Meanwhile, the set-valued function $\Q(\I, \T)$ is computed following Algorithm~\ref{al:synthesisUnderQ}. The primary challenge in computing this One-Step Returnable Subset lies in verifying the existence of an admissible control for each state vector $x_k \in \I$, which typically consists of infinitely many states, such that it can return to the target set $\T$ in one step.  
To address this, we note that the function $\Q$, defined according to Algorithm~\ref{al:synthesisUnderQ}, is only called by Algorithm~\ref{al:synthesisInvSet} with inputs $\I=\T = \tilde{\A}_i(\X_s)$, where $\tilde{\A}_i(\X_s)\in \P_{\cup}(\X_\Delta)$ is a union of hyperboxes (as described in later Remark~\ref{rmk:livesin_power_union}).  
Instead of verifying each state individually, we verify each hyperbox $\I_i$ within the quantized $i$-Step Admissible Subset $\tilde{\A}_i(\X_s)$, i.e., $\I_i \subseteq \tilde{\A}_i(\X_s) \in \P_{\cup}(\X_\Delta)$. The goal is to determine whether there exists a common admissible control $u_k \in \U$ such that, for any initial state $x_k \in \I_i$, the system can return to the target set $\T = \tilde{\A}_i(\X_s)$ in the next step using this single $u_k$. This ensures that $\I^+_i \subseteq \R(\I , \T)$ for all $\I^+_i \subseteq \I^+\subseteq \I$ in Algorithm~\ref{al:synthesisUnderQ}.

\begin{algorithm}[htp]
\caption{One-Step Returnable Subset Generation $\Q$}
\label{al:synthesisUnderQ}
\begin{algorithmic}[1]
\Require $\I=\tilde{\A}_i(\X_s)$, $\T=\tilde{\A}_i(\X_s)$, $d_{\min}$
\lightcomment{the inputs are an $i$-Step Admissible Subset $\tilde{\A}_i(\X_s)\subseteq\X_s$ as a union of hyperboxes, and parameter $d_{\min}$ defined in the SSQ process}
\State $\tilde{\R} = \varnothing$
\lightcomment{initialize Returnable Subset using empty set}
\State $\I^+$, $\I^- = {\tt returnable\_verification} (\I , \T)$
\lightcomment{check each hyperbox $\I_i\subseteq\I$ using Algorithm~\ref{al:isReturnable}, and verify if $\exists u_k \in \U$ such that $\I_i$ can be made a One-Step Returnable Subset of $\I$ to the target set $\T$, i.e., $\I_i\subseteq\R(\I ,\T)\subseteq\R(\X_s ,\T)$. If this holds true, store $\I_i$ in the hyperbox union $\I^+ = \cup_{i} \I_i^+$; otherwise, store it in $\I^- = \cup_{i}\I_i^-$, s.t. $\I^+ \cup \I^- = \I$ }
\State $\tilde{\R} =\tilde{\R}\bigcup\I^+$ 
\lightcomment{update Returnable Subset}
\While{$\I^-\neq \varnothing$}
\lightcomment{terminate iteration if the union $\I^-$ contains no hyperboxes}
\For{$\I^-_i\subseteq\I^-$}
\lightcomment{check each hyperbox in union}
\If{$\max_{x,y\in\I^-_i}{\norm{x-y}_{\infty}} \leq d_{\min}$} 
\lightcomment{check if a hyperbox has been reduced to one of the basis hyperboxes, i.e., $\exists j\in[n_\Delta]$ s.t. $\I^-_i=\X_{\Delta}^{(j)}$, by verifying if its largest side length among all dimensions is smaller than $d_{\min}$}
\State $\I^-=\I^-\backslash\I^-_i$ 
\lightcomment{remove the hyperbox $\I^-_i$ from the union $\I$ if it has been reduced to one of the basis hyperboxes}
\Else
\State $\I^-_{i,1}$, $\I^-_{i,2} = \tt{partition\_hyperbox}(\I^-_i)$ 
\lightcomment{partition hyperbox $\I^-_i$ into two smaller hyperboxes $\I^-_{i,1}, \I^-_{i,2}\in \P_{\cup}(\X_{\Delta})$ such that $\I^-_i = \I^-_{i,1} \cup \I^-_{i,2}$ and $\I^-_{i,1} \cap \I^-_{i,2}=\varnothing$}
\State $\I^-=(\I^-\backslash \I^-_i)\cup\I^-_{i,1}\cup\I^-_{i,2}$
\lightcomment{remove $\I^-_i$ from the union and add $\I^-_{i,1}$ and $\I^-_{i,2}$ to the union}
\EndIf
\EndFor
\State $\I^+$, $\I^- = {\tt returnable\_verification}(\I^-, \T)$
\lightcomment{repeat the verification procedure described in line 2 with the union of hyperboxes $\I^-$. Similarly, update the unions $\I^+$ and $\I^-$ accordingly}
\State $\tilde{\R} =\tilde{\R}\bigcup\I^+$
\lightcomment{update Returnable Subset}
\EndWhile
\\
\Return{$\Q(\I, \T) = \tilde{\R}$}
\lightcomment{$\tilde{\R}\subseteq  \R(\I,\T)\subseteq  \R(\X_s,\T)$}
\end{algorithmic}
\end{algorithm}

Additionally, finding a common admissible control for all states within a hyperbox can be challenging. To address this, each hyperbox $\I_i^- \subseteq \I^-\subseteq \I=\tilde{\A}_i(\X_s)$ that fails the one-step returnability verification in the previous iteration is partitioned, as described by the function ${\tt partition\_hyperbox}$ in line 9 of Algorithm~\ref{al:synthesisUnderQ}. This process reduces the size of the hyperbox (also see Fig.~\ref{fig:algorithm_arch}) and generates a finer quantized $i$-Step Admissible Subset $\tilde{\A}_i(\X_s)$ with smaller hyperboxes $\I^-_{i,1},\I^-_{i,2}$ in $\P_{\cup}(\X_{\Delta})$.  
The user-defined variable $d_{\min}$ controls the resolution of the SSQ process that supports this refinement process. The algorithm iteratively refines the quantization and re-verifies from lines 4 to 12 in Algorithm~\ref{al:synthesisUnderQ}. It terminates when each hyperbox $\I_i^- \subseteq \I^-$ matches one of the basis hyperboxes $\X_{\Delta}^{(j)} \subseteq \X_{\Delta}$ with a SSQ resolution $d_{\min}$ and still fails the verification, namely, $\forall \I_i^- \subseteq \I^-, \exists j\in[n_\Delta] \text{ s.t. } \I^-_i=\X_{\Delta}^{(j)}$.
Then, Algorithm~\ref{al:synthesisUnderQ} produces a One-Step Returnable Subset as the union of successfully verified hyperboxes $\I^+_i \subseteq \I^+$.

\begin{algorithm}[htp]
\caption{${\tt returnable\_verification}$}
\label{al:isReturnable}
\begin{algorithmic}[1]
\Require $\I=\cup_{i}~\I_i$, $\T=\tilde{\A}_i(\X_s)$
\lightcomment{the inputs are a union of hyperboxes $\I$ to be verified and a target set $\T$.}
\State $\I^+=\varnothing$, $\I^-=\varnothing$
\lightcomment{initialize the unions of hyperboxes: the verified One-Step Returnable Subset ($\I^+$) and the non-determined union ($\I^-$)}
\For{$\I_i\subseteq \I$}
\lightcomment{examine each hyperboxes}
\If{\blue{$\bar{\F}(\I_i, \U)\cap\T=\varnothing$}}
\lightcomment{given $\F(\I_i, \U)\subseteq \bar{\F}(\I_i, \U)$, an empty intersection between set $\bar{\F}(\I_i, \U)$ and the target set $\T$ implies that $\F(\I_i, \U)\cap\T=\varnothing$ and there is no $x_k \in \I_i$ and $u_k \in \U$ such that $f(x_k, u_k) \in \T$.}
\State $\I^-=\I^-\cup\I_i$
\lightcomment{$\I_i \cap \R(\I,\T) = \varnothing$}
\Else
\If{\blue{$\exists u_k\in\U$ such that $\F(\I_i, \{u_k\})\subseteq \T$}}
\lightcomment{$\I_i$ is a One-Step Returnable Subset by the Definition~\ref{def:returnable_subset}. To this point, Problem~\ref{problem:1} is reduced to Problem~\ref{problem:1:reduced}.}
\State $\I^+=\I^+\cup\I_i$
\lightcomment{$\I_i \subseteq \R(\I,\T)$}
\Else
\State $\I^-=\I^-\cup\I_i$ 
\lightcomment{undetermined. It is possible that $\I_i \cap \R(\I, \T) \neq \varnothing$, or $\I_i \cap \R(\I, \T) = \varnothing$}
\EndIf
\EndIf
\EndFor
\\\Return{$\I^+$, $\I^-$}
\end{algorithmic}
\end{algorithm}

Afterward, the method used to verify the one-step returnability of the union of hyperboxes, i.e., whether a hyperbox $\I_i \subseteq \R(\I, \T)$ where $\I_i\subseteq\I$, is summarized in Algorithm~\ref{al:isReturnable}. 
Here, we utilize two set-valued mappings, $\F: \X \times \U \rightarrow \X$ and $\bar{\F}: \X \times \U \rightarrow \X$, in lines 3 and 6 of Algorithm~\ref{al:isReturnable}, respectively.
The mapping $\bar{\F}$ performs the Reachability Analysis such that the following condition holds,
\begin{equation}
    \F(\X_k, \U_k)\subseteq \bar{\F}(\X_k, \U_k),
    ~
    \forall \X_k\subseteq \X, 
    ~
    \forall \U_k\subseteq \U,
\end{equation}
where $\F: \X \times \U \rightarrow \X$ derives the exact one-step reachable set of the NNDS, defined as:  
\begin{equation}
\F(\X_k, \U_k):=
\left\{
x_{k+1} \in \X : 
\begin{array}{r}
x_{k+1} = f(x_k, u_k), 
\\
\forall x_k \in \X_k, \forall u_k \in \U_k
\end{array}
\right\}.
\end{equation} 
They are defined for different use cases in Algorithm~\ref{al:isReturnable}:

\begin{itemize}
    \item First of all, to reduce the computational burden, we filter out hyperboxes $\I_i$ in line 3 for which the system, with initial states $x_k \in \I_i$, cannot reach the target set $\T$ in the next step. This is realized using a set-valued mapping $\bar{\F}$.  
    In particular, as detailed in line 3 of Algorithm~\ref{al:isReturnable}, $\bar{\F}$ computes a superset $\bar{\F}(\I_i, \U)$ of the actual reachable set $\F(\I_i, \U)$ such that $\F(\I_i, \U)\subseteq \bar{\F}(\I_i, \U)$. This procedure helps to eliminate hyperboxes that are guaranteed not to reach the target set $\T$ in one step. 
    \item In contrast, in line 6, the algorithm searches for the existence of an admissible control $u_k \in \U$ such that $\F(\I_i, \{u_k\}) \subseteq \T$, where $\{u_k\}$ denotes a singleton set containing only one element $u_k$. Here, the set-valued function $\F$ propagates the sets $\I_i$ and $\{u_k\}$ through the dynamics $f$, where the control variable $u_k$ is not fixed and its existence is unknown. This challenge falls beyond the scope of traditional Reachability Analysis and its off-the-shelf tools.
\end{itemize}

Eventually, by introducing the algorithms in this section, Problem~\ref{problem:1} is reduced to more tractable problems, summarized in Problem~\ref{problem:1:reach} and Problem~\ref{problem:1:reduced}, which correspond to the two bullet points mentioned above. These problems are addressed in the subsequent Sec.~\ref{method:synthesis:propagation}.

\begin{problem}[Reachability Analysis]\label{problem:1:reach}  
Given a hyperbox $\I_i = [\underline{x}_k, \overline{x}_k]$ and $\U = [\underline{u}, \overline{u}]$ assumed in Assumption~\ref{assume:x_u_box}, compute a superset $\bar{\F}(\I_i, \U)$ of the one-step reachable set $\F(\I_i, \U)$ such that $\F(\I_i, \U)\subseteq \bar{\F}(\I_i, \U)$.
\end{problem}  

\begin{problem}[Decision-Variable-Dependent Set Propagation]\label{problem:1:reduced}  
Given a hyperbox $\I_i = [\underline{x}_k, \overline{x}_k]\subseteq \X$ and a target set $\T = \tilde{\A}_i(\X_s)$, which is a finite union of closed hyperboxes derived from Algorithm~\ref{al:synthesisUnderQ}, verify whether the hyperbox $\I_i$ is one-step returnable to the target set $\T$ with a common admissible control $u_k \in \U$, i.e., verify if there exists $u_k \in \U$ such that $\F(\I_i, \{u_k\}) \subseteq \T$.
\end{problem}  

\begin{remark}\label{rmk:livesin_power_union}
    Based on the algorithmic construction, Assumption~\ref{assume:x_u_box}, the SSQ process, Theorem~\ref{thm:set_algebra}, and Corollary~\ref{coro:set_algebra}, the outputs and inputs in Algorithms~\ref{al:synthesisInvSet}, \ref{al:synthesisUnderQ}, and \ref{al:isReturnable} are restricted to being either hyperboxes, unions of hyperboxes, or empty sets.  
    In particular, the algorithmic process is initialized with $\tilde{\A}_0(\X_s) = \X_s^{\Delta} \in \P_{\cup}(\X_\Delta)$ in Algorithm~\ref{al:synthesisInvSet}. The built-in function $\Q$, utilizing Algorithm~\ref{al:synthesisUnderQ}, operates on unions of hyperboxes, specifically $\I^+$, $\I^-$, and $\tilde{\R}$, all of which belong to the set $\P_{\cup}(\X_\Delta)$.  
    Subsequently, the resulting sequence of $i$th Admissible Subsets satisfies $\tilde{\A}_i(\X_s) \in \P_{\cup}(\X_\Delta)$ for all $i \in \mathbb{Z}$.
\end{remark}

\begin{remark}
The mapping $\bar{\F}(\I_i, \U)$ is used because the exact computation of $\F(\I_i, \U)$ can be difficult. Instead, using interval arithmetic, Bernstein polynomials \cite{huang2019reachnn}, or Semi-Definite Programming \cite{hu2020reach}, a superset of $\F(\I_i, \U)$, denoted as $\bar{\F}(\I_i, \U)$, can be derived.
\end{remark}


\subsection{Sub-problems: Reachability Analysis and Decision-Variable-Dependent Set Propagation}\label{method:synthesis:propagation}

In this section, we address aforementioned Problem~\ref{problem:1:reach} and \ref{problem:1:reduced}. Firstly, given hyperboxes $\I_i = [\underline{x}_k, \overline{x}_k]$ and $\U = [\underline{u}, \overline{u}]$ in Problem~\ref{problem:1:reach}, we compute the set $\bar{\F}(\I_i, \U)$ as a hyperbox with mapping $\bar{\F}$ that admits the following form,
\begin{subequations}\label{eq:reachability}
\begin{equation}\label{eq:reachability:Fbar}
\begin{array}{c}
    \bar{\F}(\X_k, \U_k) := [\hat{a}^{(\ell)}, \hat{b}^{(\ell)}],
    \\
    \forall 
    \X_k:=[x_l, x_u], 
    ~
    \forall
    \U_k:=[u_l, u_u],
\end{array}
\end{equation}
\begin{equation}\label{eq:reachability:interval_arithmetic}
    \bcur{[\hat{a}^{(i)}, \hat{b}^{(i)}]}_{i=1}^{\ell}
    = {\tt reachable\_hyperboxes} (\X_k, \U_k)
\end{equation}
\end{subequations}
where function ${\tt reachable\_hyperboxes}$ derives a collection of hyperboxes using the following arithmetic operations,
\begin{subequations}\label{eq:interval_arithmetic}
\begin{equation*}
\begin{array}{c}
    {\tt reachable\_hyperboxes} (\X_k, \U_k) := 
    \\
    \overbrace{\hspace{0.8\linewidth}} \hspace{0.07\linewidth}
\end{array}
\end{equation*}
\vspace{-1em}
\begin{equation}\label{eq:interval_arithmetic:input_x}
    a_{1 : n_x}^{(0)} := x_l,~
    b_{1 : n_x}^{(0)} := x_u,
\end{equation}
\begin{equation}\label{eq:interval_arithmetic:input_u}
a_{(n_x+1) : (n_x+n_u)}^{(0)} := u_l, ~
b_{(n_x+1) : (n_x+n_u)}^{(0)} := u_u,
\end{equation}
\begin{equation}\label{eq:interval_arithmetic:linear}
\begin{array}{c}
    [\hat{a}^{(i)}, \hat{b}^{(i)}] = {\tt lin\_layer}(a^{(i-1)}, b^{(i-1)}; W^{(i)}, B^{(i)}),
    \\
    i=1,\dots,\ell,
\end{array}
\end{equation}
\begin{equation}\label{eq:interval_arithmetic:relu}
\begin{array}{c}
    \bsq{a^{(i)},~  b^{(i)}} = \bsq{\max\{0, \hat{a}^{(i)}\}, ~\max\{0, \hat{b}^{(i)}\}},
    \\
    i=1,\dots,\ell-1,
\end{array}
\end{equation}
\end{subequations}
and function ${\tt lin\_layer}$ produces the hyperbox $[\hat{a}^{(i)}, \hat{b}^{(i)}]$ according to the following linear matrix operations,
\begin{multline}\label{eq:interval_arithmetic:linear:define}
{\tt lin\_layer}(a^{(i-1)}, b^{(i-1)}; W^{(i)}, B^{(i)}) :=
\\
\left\{
\begin{array}{l}
    \hat{a}_{j}^{(i)}=w^{(i)}_j S_i\left((w^{(i)}_j)^T\right)
    \left[\begin{array}{c}
        a^{(i-1)}\\
        b^{(i-1)}
    \end{array}\right]
    + B^{(i)}_j,
    \\
    \hat{b}_{j}^{(i)}=w^{(i)}_j S_i\left((w^{(i)}_j)^T\right)
    \left[\begin{array}{c}
        b^{(i-1)}\\
        a^{(i-1)}
    \end{array}\right]
    + B^{(i)}_j,
    \\
    j=1,\cdots,n_i.
\end{array}
\right.
\end{multline}
Here, the notation $a_{p:q}^{(0)}$ represents the sub-vector of $a^{(0)}$ containing elements between rows $p$ and $q$. The subscript $j$ in variables $\hat{a}_{j}^{(i)}$, $\hat{b}_{j}^{(i)}$ and $B_{j}^{(i)}$ denotes the $j$th element of the respective vectors. The row vector $w^{(i)}_j$ is the $j${th} row of the weight matrix $W^{(i)}$ in NNDS.
The functions $S_i:\mathbb{R}^{n_{i-1}}\rightarrow\mathbb{R}^{n_{i-1}\times2n_{i-1}}$ and $s: \mathbb{R}\rightarrow\mathbb{R}^{2n_{i-1}}$ are defined according to 
\begin{multline}\label{eq:interval_arithmetic:linear:define:switch}
    S_i
    \left(\left[\begin{array}{c}
        \vdots\\
        w_q\\
        \vdots
    \end{array}\right]\right)
    =
    \left[\begin{array}{c}
        \vdots\\
        \left(s(w_q)\right)^T\\
        \vdots
    \end{array}\right],
    \\
    s(w_q) :=
    \left\{\begin{array}{cc}
        \left[\begin{array}{c}
            e_q\\
            \zero_{n_{i-1}\times 1}
        \end{array}\right]
        & \text{if } w_q \geq 0 
        \\
        \left[\begin{array}{c}
            \zero_{n_{i-1}\times 1}\\
            e_q
        \end{array}\right]
        & \text{if } w_q < 0 
    \end{array}\right.,
\end{multline}
where $\zero_{n \times m}$ is a matrix filled entirely with zeros of size $n \times m$, and $e_q\in\mathbb{R}^{n_{i-1}}$ ($q= 1,\dots,n_{i-1}$) has 1 as the $q${th} element and zero anywhere else. 
Intuitively, in \eqref{eq:interval_arithmetic:linear:define}, the $q${th} row of matrix $S_i((w^{(i)}_j)^T)$ switches the lower and upper bounds of $z^{(i-1)}$, i.e., $a^{(i-1)}$ and $b^{(i-1)}$, if the $q${th} element in $w^{(i)}_j$ is negative, i.e., $w_q < 0$.

Eventually, the arithmetic operations in \eqref{eq:reachability:interval_arithmetic} and \eqref{eq:interval_arithmetic} determine the following set propagation through the NNDS:
\begin{equation}\label{eq:ab_chain}
    \begin{array}{c}
        \bsq{a^{(0)}, b^{(0)}}
        \to \bsq{\hat{a}^{(1)}, \hat{b}^{(1)}}
        \to \bsq{a^{(1)}, b^{(1)}}
        \\
        \to\dots\to 
        \\
        \bsq{\hat{a}^{(\ell-1)}, \hat{b}^{(\ell-1)}}
        \to \bsq{a^{(\ell-1)}, b^{(\ell-1)}}
        \to \bsq{\hat{a}^{(\ell)}, \hat{b}^{(\ell)}}.
    \end{array}
\end{equation}
In particular, \eqref{eq:interval_arithmetic:input_x} and \eqref{eq:interval_arithmetic:input_u} establish bounds on the NNDS input $z^{(0)}$ as defined in \eqref{eq:nn_structure:in}. These constraints ensure that $z^{(0)} = [x_k^T~u_k^T]^T$ lies within $\left[a^{(0)}, b^{(0)}\right]$, for all $x_k \in \X_k$ and $u_k \in \U_k$. 
From the $(i-1)$th to $i$th layer of the NNDS, set propagation is achieved using vectors $a^{(i-1)}$, $b^{(i-1)}$ (bounds on $z^{(i-1)}$) and $\hat{a}^{(i)}$, $\hat{b}^{(i)}$ (bounds on $\hat{z}^{(i)}$), ensuring that $z^{(i-1)} \in [a^{(i-1)}, b^{(i-1)}]$ and $\hat{z}^{(i)} \in [\hat{a}^{(i)}, \hat{b}^{(i)}]$.
In particular, the arithmetics \eqref{eq:interval_arithmetic:linear} and \eqref{eq:interval_arithmetic:linear:define} compute the set propagation through the fully connected layers defined by \eqref{eq:nn_structure:linear}, handling linear transformations. 
The propagation through the nonlinear ReLU function defined by \eqref{eq:nn_structure:relu} is derived using \eqref{eq:interval_arithmetic:relu}.  

\begin{remark}
The Reachability Analysis above is realized using fixed input hyperboxes and arithmetic operations.  
We note that the matrix operations defined in \eqref{eq:interval_arithmetic:linear} and \eqref{eq:interval_arithmetic:linear:define} are linear in the vectors $\hat{a}^{(i)}, \hat{b}^{(i)}, a^{(i-1)}, b^{(i-1)}$, whereas this property does not hold for the nonlinear $\max$ operations used in \eqref{eq:interval_arithmetic:relu}.  
In contrast, in Problem~\ref{problem:1:reduced}, a similar set propagation chain to \eqref{eq:ab_chain} is initiated with $[a^{(0)}, b^{(0)}]$, which depends on the unknown decision variable $u_k$, making this process decision-variable-dependent.  
Thus, the main focus in addressing Problem~\ref{problem:1:reduced} is the treatment of the nonlinear $\max$ operations.
\end{remark}


To address Problem~\ref{problem:1:reduced} using constrained optimization, the primary challenge lies in handling the decision-variable-dependent set propagation $\F(\I_i, \{u_k\})$ and the embedded nonlinear ${\tt ReLU}$ activation. To overcome this, we reformulate the set-valued mapping $\F(\I_i, \{u_k\})$, which depends on the decision variable $u_k$, as constraints within an optimization problem.  
Additionally, we enforce the set inclusion constraint $\F(\I_i, \{u_k\}) \subseteq \T$, where $\T$ represents the target set formed by a union of hyperboxes. Then, Problem~\ref{problem:1:reduced} can be reformulated as verifying the existence of a solution to the following Mixed-Integer Linearly Constrained Optimization Problem (MILC-OP):
\begin{subequations}\label{eq:synthesis}
\begin{equation}\label{eq:synthesis:obj}
    \argmin\limits_{\substack{
    u_k,~ a^{(0)},~ b^{(0)},~\hat{a}^{(\ell)},~ \hat{b}^{(\ell)},~ \underline{x}_{k+1},~ \overline{x}_{k+1},
    \\
    \hat{a}^{(i)},~ \hat{b}^{(i)},~a^{(i)},~ b^{(i)},~ \alpha^{(i)},~ \beta^{(i)},~ \gamma^{(i)},~ i=1,\dots,\ell-1,
    \\
    \phi^{(m)},~ \psi^{(m)},~ m=1,\dots,n_o
    }} (*)
\end{equation}
\text{subject to:}
\begin{equation}\label{eq:synthesis:input}
\begin{array}{c}
    a_{1 : n_x}^{(0)} = \underline{x}_k,~
    a_{(n_x+1) : (n_x+n_u)}^{(0)} = u_k \geq \underline{u},
    \\
    b_{1 : n_x}^{(0)} = \overline{x}_k, ~
    b_{(n_x+1) : (n_x+n_u)}^{(0)} = u_k \leq \overline{u},    
\end{array}
\end{equation}
\begin{equation}\label{eq:synthesis:linear}
\begin{array}{c}  
     [\hat{a}^{(i)}, \hat{b}^{(i)}] = {\tt lin\_layer}(a^{(i-1)}, b^{(i-1)}; W^{(i)}, B^{(i)}),
     \\
     \underline{\hat{z}}^{(i)}\leq\hat{a}^{(i)}\leq\hat{b}^{(i)}\leq\overline{\hat{z}}^{(i)},
     ~
    i=1,\dots,\ell,
\end{array}
\end{equation}
\begin{equation}\label{eq:synthesis:relu}
\begin{array}{c}  
    {\tt MILC\_ReLU}
    \brk{
    \begin{array}{c}
         a^{(i)}, b^{(i)}, \hat{a}^{(i)}, \hat{b}^{(i)}, 
         \\
         \alpha^{(i)}, \beta^{(i)}, \gamma^{(i)}
    \end{array}
    ;
    \underline{\hat{z}}^{(i)}, \overline{\hat{z}}^{(i)}
    }
    , 
    \\
    i=1,\dots,\ell-1,
\end{array}
\end{equation}
\begin{equation}\label{eq:synthesis:output}
    \underline{x}_{k+1}=\hat{a}^{(\ell)},
    ~
    \overline{x}_{k+1} = \hat{b}^{(\ell)},
\end{equation}
\begin{equation}\label{eq:synthesis:subset}
    {\tt MILC\_Inc}
    \brk{
    \hspace{-0.3em}
    \begin{array}{c}
         \underline{x}_{k+1}, \overline{x}_{k+1}, \\
         \{{\phi^{(m)}, \psi^{(m)}}\}_{m=1}^{n_o}
    \end{array}
    \hspace{-0.3em}
    ; 
    \hspace{-0.3em}
    \begin{array}{c}
         \underline{x}, \overline{x}, \X \backslash\T = 
         \\
         \cup_{m=1}^{n_o} \obsq{\underline{o}^{(m)}, \overline{o}^{(m)}}
    \end{array}
    }
\end{equation}
\end{subequations}
where operator ${\tt MILC\_ReLU}$ derives a collection of Mixed-Integer Linear Constraints (MILCs) encoding the decision-variable-depend uncertainty set propagation through nonlinear ${\tt ReLU}$ activations according to
\begin{multline}\label{eq:synthesis:relu:define}
{\tt MILC\_ReLU}
\brk{
\begin{array}{c}
     a^{(i)}, b^{(i)}, \hat{a}^{(i)}, \hat{b}^{(i)}, 
     \\
     \alpha^{(i)}, \beta^{(i)}, \gamma^{(i)}
\end{array}    
;
\underline{\hat{z}}^{(i)}, \overline{\hat{z}}^{(i)}
} 
:= 
\\
\left\{
\begin{array}{l}
    \alpha_j^{(i)}, \beta_j^{(i)}, \gamma_j^{(i)}\in\{0,1\},
    ~
    \alpha_j^{(i)}+ \beta_j^{(i)}+\gamma_j^{(i)} = 1,
    \\
    j=1,\cdots,n_i,
    \\
    a^{(i)} \geq \hat{a}^{(i)},
    ~        
    a^{(i)} \leq \diag(\overline{\hat{z}}^{(i)})\gamma^{(i)},
    \\
    a^{(i)} \leq \hat{a}^{(i)} - \diag(\underline{\hat{z}}^{(i)})(\alpha^{(i)}+\beta^{(i)}),
    \\
    b^{(i)} \geq \hat{b}^{(i)}, 
    ~
    b^{(i)} \leq \hat{b}^{(i)} - \diag(\underline{\hat{z}}^{(i)})\alpha^{(i)},
    \\
    b^{(i)} \leq \diag(\overline{\hat{z}}^{(i)})(\beta^{(i)}+\gamma^{(i)}),
    ~
    0 \leq a^{(i)} \leq b^{(i)},
\end{array}
\right.
\end{multline}
and another operator ${\tt MILC\_ReLU}$ generates a collection of MILCs to enforce the set inclusion relationship, i.e., $[\underline{x}_{k+1}, \overline{x}_{k+1}]\subseteq \T$, according to
\begin{multline}\label{eq:synthesis:subset:define}
{\tt MILC\_Inc}
\brk{
\begin{array}{c}
     \underline{x}_{k+1}, \overline{x}_{k+1}, \\
     \{{\phi^{(m)}, \psi^{(m)}}\}_{m=1}^{n_o}
\end{array}
\hspace{-0.3em}
; 
\hspace{-0.2em}
\begin{array}{c}
     \underline{x}, \overline{x}, \X \backslash\T = \X \cap 
     \\
     \brk{\cup_{m=1}^{n_o} ]\underline{o}^{(m)}, \overline{o}^{(m)}[}
\end{array}
}
\\
:=
\left\{
\begin{array}{l}
    \phi^{(m)}_j,~\psi^{(m)}_j\in\{0,1\},
    ~
    \phi^{(m)}_j + \psi^{(m)}_j\leq 1,
    \\
    j=1,\dots,n_x,
    \\
    \sum\limits_{j=1}^{n_x}\left(\phi^{(m)}_j + \psi^{(m)}_j\right)\geq 1,
    \\
    \overline{x}_{k+1} \leq \overline{x} + \diag\left(\underline{o}^{(m)}-\overline{x}\right)\phi^{(m)},
    \\
    \overline{x}_{k+1} \geq \underline{o}^{(m)} - \diag\left(\underline{o}^{(m)}-\underline{x}\right)\phi^{(m)},
    \\
    \underline{x}_{k+1} \geq \underline{x} + \diag\left(\overline{o}^{(m)}-\underline{x}\right)\psi^{(m)},
    \\
    \underline{x}_{k+1} \leq \overline{o}^{(m)} - \diag\left(\overline{o}^{(m)}-\overline{x}\right)\psi^{(m)},
    \\
    m=1,\dots,n_o.
\end{array}
\right.
\end{multline}
Here, the subscript $j$ in variables $\alpha_{j}^{(i)}$, $\beta_{j}^{(i)}$, $\gamma_{j}^{(i)}$, $\phi_{j}^{(m)}$, $\psi_{j}^{(m)}$, which are integer (binary) decision variables, denotes the $j$th element of the respective vectors. $\diag(x)\in\RR^{n\times n}$ yields a square matrix with elements of $x\in\RR^n$ on the diagonal and zero anywhere else.
Similar to the Reachability Analysis defined by \eqref{eq:reachability}, the constraints in \eqref{eq:synthesis} embed a similar set propagation process as in \eqref{eq:ab_chain}. However, unlike the previous case, here the hyperboxes $\{[\hat{a}^{(i)}, \hat{b}^{(i)}]\}_{i=1}^{\ell}$ and $\{[a^{(i)}, b^{(i)}]\}_{i=1}^{\ell-1}$ are decision variables, making the set propagation process in \eqref{eq:ab_chain} decision-variable-dependent.

In MILC-OP \eqref{eq:synthesis}, the constraints \eqref{eq:synthesis:input} establish bounds on the NNDS input $z^{(0)}$ as defined in \eqref{eq:nn_structure:in}. These constraints ensure that $z^{(0)} = [x_k^T~u_k^T]^T$ lies within $\left[a^{(0)}, b^{(0)}\right]$ for all $x_k \in \I_i = [\underline{x}_k, \overline{x}_k]$ (from Problem~\ref{problem:1:reduced}) and decision variable $u_k \in \U$, where $\U$ is defined in \eqref{eq:u_feasible_set}.  
Similar to \eqref{eq:interval_arithmetic:linear}, the linear constraints \eqref{eq:synthesis:linear} encode the set propagation through the fully connected layers defined by \eqref{eq:nn_structure:linear}, ensuring the correct handling of linear transformations.  
Different from \eqref{eq:interval_arithmetic:relu}, which involves nonlinear operations, the propagation through the nonlinear ${\tt ReLU}$ function defined by \eqref{eq:nn_structure:relu} is enforced using MILCs \eqref{eq:synthesis:relu} and \eqref{eq:synthesis:relu:define}, which is first developed in \cite{li2024system} and ensures an accurate modeling of the activation function's behavior.

In particular, since the values $z^{(i-1)}$ in the neurons are confined within a bounded hyperbox $[a^{(i-1)}, b^{(i-1)}]$, where $a^{(i-1)}, b^{(i-1)}$ are decision variables, the activation status of each ${\tt ReLU}$ function is uncertain. To address this uncertainty, we introduce integer variables $\alpha^{(i)}$, $\beta^{(i)}$, and $\gamma^{(i)}$ in constraints \eqref{eq:synthesis:relu} and \eqref{eq:synthesis:relu:define}. These variables encode the following possible activation statuses of the ReLU function, 
\begin{equation}\label{eq:activation_abg}
\left\{
\begin{array}{c}
    \hat{a}_{j}^{(i)}\leq \hat{b}_{j}^{(i)}\leq 0 \text{ if } \alpha_j^{(i)} = 1,
    \\
    \hat{a}_j^{(i)}\leq 0 \leq\hat{b}_j^{(i)} \text{ if } \beta_j^{(i)} = 1,
    \\
    0\leq\hat{a}_j^{(i)}\leq \hat{b}_j^{(i)} \text{ if } \gamma_j^{(i)} = 1.
\end{array}
\right.
\end{equation}
The constraints \eqref{eq:synthesis:linear}, when $i=\ell$, yield hyperbox $[\hat{a}^{(\ell)}, \hat{b}^{(\ell)}]$ and encode the set propagation through the final fully connected layer defined by \eqref{eq:nn_structure:out}. 
Finally, the hyperbox defined by the  decision variables $\underline{x}_{k+1}$ and $\overline{x}_{k+1}$ in \eqref{eq:synthesis:output} represents a superset of the decision-variable-dependent reachable set $\F(\I_i, \{u_k\})$, i.e., $\F(\I_i, \{u_k\}) \subseteq [\underline{x}_{k+1}, \overline{x}_{k+1}]$.

\begin{remark}\label{rmk:nn_bounds}
    Furthermore, using the bounded hyperboxes $\X$ in \eqref{eq:x_feasible_set} and $\U$ in \eqref{eq:u_feasible_set}, we can numerically derive lower and upper bounds $\underline{\hat{z}}^{(i)},\overline{\hat{z}}^{(i)}\in\mathbb{R}^{n_i},i=1,\dots,\ell$ on the neuron values $\hat{z}^{(i)}$ and the output $x_{k+1}$ using interval arithmetic, i.e., 
    \begin{equation}\label{eq:rmk_nn_bounds}
        \bcur{[\underline{\hat{z}}^{(i)}, \overline{\hat{z}}^{(i)}]}_{i=1}^{\ell}= {\tt reachable\_hyperboxes} (\X, \U) 
    \end{equation}
    such that $\hat{z}^{(i)}\in[\underline{\hat{z}}^{(i)},\overline{\hat{z}}^{(i)}]$ and $x_{k+1}\in[\underline{\hat{z}}^{(\ell)},\overline{\hat{z}}^{(\ell)}]$. These derived bounds are used to tighten the constraints \eqref{eq:synthesis:linear}, \eqref{eq:synthesis:relu} and limit the search region for the optimization solver.
\end{remark}

Lastly, the remaining constraints \eqref{eq:synthesis:subset} defined by  \eqref{eq:synthesis:subset:define} enforce the set inclusion condition $\F(\I_i, \{u_k\}) \subseteq \T$ by utilizing the established overbounds $[\underline{x}_{k+1}, \overline{x}_{k+1}]$ derived from the previous constraints \eqref{eq:synthesis:input}-\eqref{eq:synthesis:output}. Specifically, since $\F(\I_i, \{u_k\}) \subseteq [\underline{x}_{k+1}, \overline{x}_{k+1}]$, the inclusion condition is enforced by applying a stronger requirement $[\underline{x}_{k+1}, \overline{x}_{k+1}] \subseteq \T$.
Given the target set $\T = \tilde{\A}_i(\X_s)$, which is a finite union of closed hyperboxes as specified in Problem~\ref{problem:1:reduced}, and $\X$ in \eqref{eq:x_feasible_set}, which is itself a closed and bounded hyperbox, the target set can be expressed as $\T = \X \backslash \O$, where $\O = \bigcup_{m=1}^{n_o} \obsq{\underline{o}^{(m)}, \overline{o}^{(m)}}$ is a finite union of open hyperboxes. 
Eventually, this stronger inclusion $[\underline{x}_{k+1}, \overline{x}_{k+1}] \subseteq \T$ is equivalent to the condition of an empty intersection, specifically $\bsq{\underline{x}_{k+1}, \overline{x}_{k+1}} \cap \obsq{\underline{o}^{(m)}, \overline{o}^{(m)}} = \varnothing$ for all $m = 1, \dots, n_o$, which is encoded in \eqref{eq:synthesis:subset} and  \eqref{eq:synthesis:subset:define}.

Ultimately, we arrive at a CIS synthesis process formulated according to the following definition.

\begin{definition}[\textbf{CIS Synthesis Process}]
    In the sequel, the CIS synthesis process refers to the algorithmic framework comprising Algorithms~\ref{al:synthesisInvSet}, \ref{al:synthesisUnderQ} and Algorithm~\ref{al:isReturnable}, where line 3 and line 6 of Algorithm~\ref{al:isReturnable} are computed using \eqref{eq:reachability} and evaluated by solving the MILC-OP \eqref{eq:synthesis}, respectively.
\end{definition}

\begin{remark}\label{rmk:u_cis}
    As a byproduct of the CIS synthesis process, lines 6 and 7 in Algorithm~\ref{al:isReturnable} associate each verified hyperbox $\I_j^+$ with an admissible control $u_k$. These hyperboxes are subsequently merged into the $(i+1)$-Step Admissible Subset $\tilde{\A}_{i+1}(\X_s)$ in Algorithm~\ref{al:synthesisInvSet} through the set $\tilde{\R} = \bigcup_j \I_j^+$ in Algorithm~\ref{al:synthesisUnderQ}, producing the following control law, 
    \begin{subequations}
        \begin{multline}
            \pi \brk{x_k; \tilde{\A}_{i+1}(\X_s) } := u_k,
            \\
            \text{s.t. } 
            u_k \text{ solved MILC-OP \eqref{eq:synthesis} with } \T=\tilde{\A}_{i}(\X_s)
            \\
            \text{and } 
            \I_j = 
            \I \brk{x_k; \tilde{\A}_{i+1}(\X_s)} =
            [\underline{x}_k, \overline{x}_k],
        \end{multline}
        \begin{multline}
            \I \brk{x_k; \tilde{\A}_{i+1}(\X_s)} := \I_j^+,
            \\
            \text{ s.t. } x_k\in \I_j^+ \subseteq \cup_j \I_j^+ =  \tilde{\A}_{i+1}(\X_s) ).
        \end{multline}
    \end{subequations}
    This control law $\pi(\cdot~; \tilde{\A}_{i+1}(\X_s))$ renders $\tilde{\A}_{i+1}(\X_s)$ an $(i+1)$-Step Admissible Subset. Similarly, upon termination, $\pi(\cdot~; \C)$ renders $\C$ a CIS, which can further serve as a backup control plan in the event of a control system failure, providing an additional layer of operational reliability.
\end{remark}

\begin{remark}
    The total number of continuous decision variables is given by $6n_x + 3n_u + 4\sum_{i=1}^{\ell-1} n_i$, while the total number of integer or binary decision variables is $3\sum_{i=1}^{\ell-1} n_i + 2n_o$. Notably, both quantities scale linearly with the number of neurons in the NNDS, i.e., $\sum_{i=1}^{\ell-1} n_i$.
\end{remark}

\begin{remark}\label{rmk:inv:delete_constraints}
    To further enhance computational efficiency and reduce the number of decision variables, we can eliminate specific open hyperboxes $\obsq{\underline{o}^{(m)}, \overline{o}^{(m)}}$ from the union $\O$ when formulating the constraints \eqref{eq:synthesis:subset}. This deletion can be applied if it is determined that $\bar{\F}(\I_i, \U) \cap \bsq{\underline{o}^{(m)}, \overline{o}^{(m)}} = \varnothing$, and $\bar{\F}(\I_i, \U)$ is an over-approximated reachable set using \eqref{eq:reachability:Fbar}. 
\end{remark}

\begin{remark}
    The constraints in MILC-OP \eqref{eq:synthesis} are linear with integer decision variables.
    We also note that the optimization objective \eqref{eq:synthesis:obj} can be arbitrary, depending on the desired properties of the control solution. For instance, minimizing the $L_1$ norm of the control input, $\argmin \|u_k\|_1$, can be rewritten as a linear objective by introducing slack variables $\lambda = [\lambda_1~\lambda_2~\cdots~\lambda_{n_u}]^T$ and the following linear inequalities: $-\lambda \leq u_{k} \leq \lambda$. 
    This reformulation allows the objective to be expressed as $\argmin \sum_{i=1}^{n_u} \lambda_i$, which is a linear objective. Then, the MILC-OP \eqref{eq:synthesis} is a Mixed-Integer Linear Program (MILP). Alternatively, with Quadratic objective functions, \eqref{eq:synthesis} becomes a Mixed-Integer Quadratic Program (MIQP).
\end{remark}
\section{Theoretical Properties}\label{sec:methodSynThm}
In this section, we establish the theoretical properties of the proposed Algorithms~\ref{al:synthesisInvSet}, \ref{al:synthesisUnderQ}, and \ref{al:isReturnable}. Additionally, we analyze the properties of the Reachability Analysis \eqref{eq:reachability} in Sec.~\ref{method:thmProp:reach} and the MILC-OP \eqref{eq:synthesis} in Sec.~\ref{method:thmProp:milcop}, which are used to address lines 3 and 6 in Algorithm~\ref{al:isReturnable}, respectively. As a result, the CIS synthesis process yields a CIS within a finite number of iterations, as detailed in Sec.~\ref{method:thmProp:cis_finite}. 

\subsection{Reachability Analysis}\label{method:thmProp:reach}
To begin with, the properties of the Reachability Analysis \eqref{eq:reachability} are summarized in the following results.
\begin{lemma}\label{lemma:interval_arithmetic:linear}
    If the post-activation vector in the $(i-1)${th} layer, $z^{(i-1)}$, belongs to the hyperbox $[a^{(i-1)},~b^{(i-1)}]$, then the pre-activation vector in the $i${th} layer is given by $\hat{z}^{(i)} = W^{(i)}z^{(i-1)} + b^{(i)}$ and lies within the hyperbox $[\hat{a}^{(i)},~\hat{b}^{(i)}]$. The bounds $\hat{a}^{(i)}$ and $\hat{b}^{(i)}$ are derived using the function ${\tt lin\_layer}$ according to \eqref{eq:interval_arithmetic:linear:define}.
\end{lemma}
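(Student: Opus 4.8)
The plan is to establish the inclusion $\hat{z}^{(i)} \in [\hat{a}^{(i)}, \hat{b}^{(i)}]$ componentwise, exploiting that the pre-activation map is affine and therefore separable across the coordinates of $z^{(i-1)}$. First I would fix a row index $j \in \{1, \dots, n_i\}$ and write the $j$-th pre-activation as $\hat{z}_j^{(i)} = w_j^{(i)} z^{(i-1)} + B_j^{(i)} = \sum_{q=1}^{n_{i-1}} w_{j,q}^{(i)} z_q^{(i-1)} + B_j^{(i)}$, where $w_{j,q}^{(i)}$ denotes the $q$-th entry of the row vector $w_j^{(i)}$. Since the hypothesis $z^{(i-1)} \in [a^{(i-1)}, b^{(i-1)}]$ means $a_q^{(i-1)} \le z_q^{(i-1)} \le b_q^{(i-1)}$ for every $q$, the one-dimensional monotonicity of $t \mapsto w_{j,q}^{(i)} t$ yields tight bounds on each summand: the term is minimized at $a_q^{(i-1)}$ and maximized at $b_q^{(i-1)}$ when $w_{j,q}^{(i)} \ge 0$, with these roles reversed when $w_{j,q}^{(i)} < 0$.

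Next I would show that summing these componentwise extrema reproduces exactly the expressions for $\hat{a}_j^{(i)}$ and $\hat{b}_j^{(i)}$ generated by ${\tt lin\_layer}$ in \eqref{eq:interval_arithmetic:linear:define}. The key is to unpack the switch matrix: by the definition of $s$ in \eqref{eq:interval_arithmetic:linear:define:switch}, the $q$-th row of $S_i((w_j^{(i)})^T)$ selects the entry $a_q^{(i-1)}$ from the stacked vector $[(a^{(i-1)})^T, (b^{(i-1)})^T]^T$ when $w_{j,q}^{(i)} \ge 0$, and the entry $b_q^{(i-1)}$ when $w_{j,q}^{(i)} < 0$. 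Left-multiplying by $w_j^{(i)}$ and adding $B_j^{(i)}$ therefore yields precisely $\sum_{q : w_{j,q}^{(i)} \ge 0} w_{j,q}^{(i)} a_q^{(i-1)} + \sum_{q : w_{j,q}^{(i)} < 0} w_{j,q}^{(i)} b_q^{(i-1)} + B_j^{(i)}$, which by the previous paragraph is exactly the minimum of $\hat{z}_j^{(i)}$ over the hyperbox; swapping the order of $a^{(i-1)}$ and $b^{(i-1)}$ in the stacked vector produces the corresponding maximum $\hat{b}_j^{(i)}$. Combining the componentwise bounds establishes $\hat{a}_j^{(i)} \le \hat{z}_j^{(i)} \le \hat{b}_j^{(i)}$, and since $j \in \{1, \dots, n_i\}$ was arbitrary, the vector inclusion $\hat{z}^{(i)} \in [\hat{a}^{(i)}, \hat{b}^{(i)}]$ follows.

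I expect the only genuine obstacle to be notational rather than mathematical: carefully verifying that the compact matrix product $w_j^{(i)} S_i((w_j^{(i)})^T) [(a^{(i-1)})^T, (b^{(i-1)})^T]^T$ correctly implements the sign-dependent selection of lower versus upper bounds, so that the definition in \eqref{eq:interval_arithmetic:linear:define} matches the min/max obtained by the monotonicity argument. Once this bookkeeping is confirmed, the statement is simply the standard tightness of interval arithmetic for an affine map. I would close with a brief remark that these bounds are attained at a vertex of $[a^{(i-1)}, b^{(i-1)}]$—choosing each coordinate at its lower or upper endpoint according to the sign of $w_{j,q}^{(i)}$—which confirms that $[\hat{a}^{(i)}, \hat{b}^{(i)}]$ is in fact the exact (not merely a valid) bounding hyperbox for the affine layer.
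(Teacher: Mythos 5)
Your proposal is correct and follows essentially the same route as the paper's proof: both argue componentwise, bound each summand $w^{(i)}_{j,q} z^{(i-1)}_q$ by its sign-dependent endpoint, and verify that the switch matrix $S_i$ in \eqref{eq:interval_arithmetic:linear:define:switch} selects exactly those endpoints (the paper writes this selection with Iverson brackets rather than index sets, which is only a notational difference). Your closing remark that the bounds are attained at a vertex, hence tight, is a small correct addition beyond what the paper states but does not change the argument.
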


\begin{proof}
Assuming $z^{(i-1)}\in \left[a^{(i-1)},\;b^{(i-1)}\right]$, the $j${th} element of $\hat{z}^{(i)}$ can be expressed as $\hat{z}^{(i)}_j=w^{(i)}_jz^{(i-1)}+B^{(i)}_j$ where $w^{(i)}_j=[w_1\dots w_q\dots w_{n_{i-1}}]$. The following inequalities hold, $L\leq \hat{z}^{(i)}_j - B^{(i)}_j \leq U $, where $L = \sum_{q=1}^{n_{i-1}} \left( \IversonBracket{w_q\geq0}\cdot(w_q a^{(i-1)}_q) + \IversonBracket{w_q<0}\cdot(w_q b^{(i-1)}_q)\right)$, $U=\sum_{q=1}^{n_{i-1}} \left( \IversonBracket{w_q\geq0}\cdot(w_q b^{(i-1)}_q) + \IversonBracket{w_q<0}\cdot(w_q a^{(i-1)}_q)\right)$, and the Iverson bracket $\IversonBracket{\cdot}$ equals 1 if the condition inside is true and 0 otherwise. 
Meanwhile, due to the definition of function $S_i$ in \eqref{eq:interval_arithmetic:linear:define:switch}, the equations \eqref{eq:interval_arithmetic:linear} and \eqref{eq:interval_arithmetic:linear:define} simplify to $\hat{a}^{(i)}_j - B^{(i)}_j = L$ and $\hat{b}^{(i)}_j - B^{(i)}_j = U$, which implies $\hat{z}^{(i)}_j\in[\hat{a}^{(i)}_j,\;\hat{b}^{(i)}_j]$ for all $j=1,\dots,n_i$.
\end{proof}

\begin{theorem}\label{thm:reachability}
    Given arbitrary hyperboxes $\X_k=[x_l, x_u]$, $\U_k=[u_l, u_u]$, the Reachability Analysis $\bar{\F}$ defined according to \eqref{eq:reachability} and \eqref{eq:interval_arithmetic} generates a hyperbox $\bar{\F}(\X_k, \U_k):= [\hat{a}^{(\ell)}, \hat{b}^{(\ell)}]$ such that $\F(\X_k, \U_k)\subseteq \bar{\F}(\X_k, \U_k)$. 
\end{theorem}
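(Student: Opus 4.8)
The plan is to prove the inclusion $\F(\X_k, \U_k) \subseteq \bar{\F}(\X_k, \U_k)$ by a layer-by-layer induction along the propagation chain \eqref{eq:ab_chain}, showing that an arbitrary exact trajectory through the network stays inside the interval-arithmetic hyperboxes at every stage. Concretely, I would fix arbitrary $x_k \in \X_k = [x_l, x_u]$ and $u_k \in \U_k = [u_l, u_u]$, set $x_{k+1} = f(x_k, u_k)$, and reduce the claim to verifying $x_{k+1} \in [\hat{a}^{(\ell)}, \hat{b}^{(\ell)}]$, since by definition the points of $\F(\X_k, \U_k)$ range over exactly such outputs.

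For the base case, I would observe that the input bounds \eqref{eq:interval_arithmetic:input_x} and \eqref{eq:interval_arithmetic:input_u} give $z^{(0)} = [x_k^T, u_k^T]^T \in [a^{(0)}, b^{(0)}]$ directly from the membership of $x_k$ and $u_k$ in their respective hyperboxes. The inductive step then alternates between two kinds of layers. The linear transformation carrying $z^{(i-1)} \in [a^{(i-1)}, b^{(i-1)}]$ to $\hat{z}^{(i)} \in [\hat{a}^{(i)}, \hat{b}^{(i)}]$ is precisely the content of Lemma~\ref{lemma:interval_arithmetic:linear}, which I would invoke verbatim. For the ReLU layer, I would use that the element-wise map $\max\{0, \cdot\}$ is monotonically non-decreasing: if $\hat{z}^{(i)} \in [\hat{a}^{(i)}, \hat{b}^{(i)}]$, then applying it coordinate-wise preserves the order of the endpoints, yielding $z^{(i)} = \max\{0, \hat{z}^{(i)}\} \in [\max\{0, \hat{a}^{(i)}\}, \max\{0, \hat{b}^{(i)}\}]$, which equals $[a^{(i)}, b^{(i)}]$ by \eqref{eq:interval_arithmetic:relu}.

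Chaining these two steps for $i = 1, \dots, \ell-1$ and applying the linear bound a final time at the output layer $i = \ell$ (equation \eqref{eq:nn_structure:out}) gives $x_{k+1} \in [\hat{a}^{(\ell)}, \hat{b}^{(\ell)}] = \bar{\F}(\X_k, \U_k)$; since $x_k$ and $u_k$ were arbitrary, the inclusion follows. I do not expect a serious obstacle, since the heavy lifting for the linear layers is already packaged in Lemma~\ref{lemma:interval_arithmetic:linear} and the ReLU step is elementary. The only point deserving care is the ReLU endpoint propagation: one must confirm that monotonicity sends the lower bound to the lower bound and the upper bound to the upper bound with no sign-crossing subtlety, which is exactly why the non-decreasing property of $\max\{0,\cdot\}$ — rather than a general continuity or Lipschitz argument — is the correct tool, and it is faithfully captured by \eqref{eq:interval_arithmetic:relu}.
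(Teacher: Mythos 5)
Your proposal is correct and follows essentially the same route as the paper's proof: establish $z^{(0)}\in[a^{(0)},b^{(0)}]$ from the input bounds, induct through the hidden layers by invoking Lemma~\ref{lemma:interval_arithmetic:linear} for the linear maps and the monotonicity of $\max\{0,\cdot\}$ for the ReLU step, and apply the linear bound once more at the output layer. The paper compresses the ReLU endpoint argument into a citation of \eqref{eq:interval_arithmetic:relu}, whereas you spell out the monotonicity explicitly, but the logic is identical.
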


\begin{proof}
    By construction in \eqref{eq:interval_arithmetic:input_x} and \eqref{eq:interval_arithmetic:input_u}, the input to the NNDS, $z^{(0)}= [x_k^T ~ u_k^T]^T$, lies within the hyperbox $\X_k \times \U_k = [a^{(0)}, b^{(0)}]$, where $\times$ is the Cartesian product. Consequently, the input $z_0$ satisfies the following condition, $z^{(0)}=[x_k^T\; u_k^T]^T\in[a^{(0)},\;b^{(0)}]$.
    For the set propagation through NNDS hidden layers encoded in \eqref{eq:interval_arithmetic:linear} and \eqref{eq:interval_arithmetic:relu}, we first prove the following statement: if $z^{(i-1)}\in [a^{(i-1)},~b^{(i-1)}]$, then $\hat{z}^{(i)} = W^{(i)}z^{(i-1)} +b^{(i)}\in [\hat{a}^{(i)},~\hat{b}^{(i)}]$, and $z^{(i)} = \max\{0,~\hat{z}^{(i)}\}\in [a^{(i)},~b^{(i)}]$. Then, using the initial condition $z^{(0)}\in[a^{(0)},~b^{(0)}]$ above, we can inductively establish the result for $i=1,\dots,\ell-1$. 
    
    Indeed, if $z^{(i-1)} \in [a^{(i-1)},\;b^{(i-1)}]$, it follows from Lemma~\ref{lemma:interval_arithmetic:linear} that $\hat{z}^{(i)} \in [\hat{a}^{(i)},~\hat{b}^{(i)}]$. Subsequently, by definition in \eqref{eq:interval_arithmetic:relu}, it is evident that $z^{(i)} \in [a^{(i)},~b^{(i)}]$.  
    Then, we can show that $\hat{z}^{(i)} \in [\hat{a}^{(i)},\;\hat{b}^{(i)}],\;z^{(i)} \in [a^{(i)},\;b^{(i)}]$, for all $i=1,\dots,\ell-1$, inductively with $z^{(0)}\in[a^{(0)},~b^{(0)}]$. Identical to the proof above, given $z_{\ell-1} \in [a_{\ell-1},\;b_{\ell-1}]$ established above, and constraints \eqref{eq:interval_arithmetic:linear} at $i=\ell$ encoding the final output layer in \eqref{eq:nn_structure:out}, we can demonstrate that $x_{k+1} = f(x_k,u_k)\in [\hat{a}^{(\ell)}, \hat{b}^{(\ell)}]$ for all $x_k\in\X_k$ and $u_k\in \U_k$, which implies $\F(\X_k,\U_k)\subseteq [\hat{a}^{(\ell)}, \hat{b}^{(\ell)}]$    
\end{proof}

Eventually, the result below follows from Theorem~\ref{thm:reachability} and provides a solution to the Reachability Analysis in Problem~\ref{problem:1:reach}.

\begin{corollary}\label{coro:reachability}
    Given hyperboxes $\I_i = [\underline{x}_k, \overline{x}_k]$ and $\U = [\underline{u}, \overline{u}]$ in Problem~\ref{problem:1:reach}, the hyperbox $\bar{\F}(\X_k, \U_k)$ contains $\F(\I_i, \U)$, i.e., $\F(\I_i, \U)\subseteq \bar{\F}(\I_i, \U)$. 
\end{corollary}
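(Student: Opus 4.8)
The plan is to recognize that this corollary is an immediate specialization of Theorem~\ref{thm:reachability}, so no new machinery is required. Theorem~\ref{thm:reachability} already establishes the over-approximation inclusion $\F(\X_k, \U_k) \subseteq \bar{\F}(\X_k, \U_k)$ for \emph{arbitrary} input hyperboxes $\X_k = [x_l, x_u]$ and $\U_k = [u_l, u_u]$. The corollary asks for exactly this inclusion in the particular case where the input sets are the hyperboxes $\I_i = [\underline{x}_k, \overline{x}_k]$ and $\U = [\underline{u}, \overline{u}]$ supplied by Problem~\ref{problem:1:reach}.

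First I would verify that the hypotheses of Theorem~\ref{thm:reachability} are met in this setting: by the statement of Problem~\ref{problem:1:reach} the initial set $\I_i$ is a closed hyperbox, and by Assumption~\ref{assume:x_u_box} the admissible control set $\U$ is likewise a bounded hyperbox. Hence both $\I_i$ and $\U$ are legitimate instances of the ``arbitrary hyperboxes'' $\X_k$ and $\U_k$ appearing in the theorem. Then I would instantiate the theorem with $\X_k := \I_i = [\underline{x}_k, \overline{x}_k]$ (so that $x_l = \underline{x}_k$ and $x_u = \overline{x}_k$) and $\U_k := \U = [\underline{u}, \overline{u}]$ (so that $u_l = \underline{u}$ and $u_u = \overline{u}$), and read off the conclusion $\F(\I_i, \U) \subseteq \bar{\F}(\I_i, \U)$ directly from the theorem's output $\bar{\F}(\I_i, \U) = [\hat{a}^{(\ell)}, \hat{b}^{(\ell)}]$.

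Since every step is a substitution into an already-proven result, there is no genuine obstacle here; the inductive set-propagation argument through the linear and ${\tt ReLU}$ layers has already been carried out in the proofs of Lemma~\ref{lemma:interval_arithmetic:linear} and Theorem~\ref{thm:reachability}. The only point requiring care is notational bookkeeping — confirming that the generic endpoints $x_l, x_u, u_l, u_u$ in the theorem are correctly matched to the problem-specific endpoints $\underline{x}_k, \overline{x}_k, \underline{u}, \overline{u}$ — so that the reader sees the result is an exact specialization rather than a mere analogy.
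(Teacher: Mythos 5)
Your proposal is correct and matches the paper exactly: the paper offers no separate proof for this corollary, stating only that it ``follows from Theorem~\ref{thm:reachability},'' which is precisely the instantiation $\X_k := \I_i$, $\U_k := \U$ that you carry out. The only extra content you add is the (harmless) verification that $\I_i$ and $\U$ are indeed hyperboxes satisfying the theorem's hypotheses.
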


\subsection{MILC-OP and Decision-Variable-Dependent Set Propagation}\label{method:thmProp:milcop}
Subsequently, we present the properties of the MILC-OP \eqref{eq:synthesis}, which addresses Problem~\ref{problem:1:reduced}, and establish its connection to the Reachability Analysis \eqref{eq:reachability}. One of the key components in the MILC-OP is the embedding of the decision-variable-dependent set propagation through the nonlinear ${\tt ReLU}$ function, which is implemented using the ${\tt MILC\_ReLU}$ constraints and the following results.

\begin{lemma}\label{lemma:synthesis:relu:equivalent}    
    Given the condition $\underline{\hat{z}}^{(i)}\leq\hat{a}^{(i)}\leq\hat{b}^{(i)}\leq\overline{\hat{z}}^{(i)}$ holds where $\underline{\hat{z}}^{(i)}, \overline{\hat{z}}^{(i)}$ are computed according to Remark~\ref{rmk:nn_bounds}, the ${\tt MILC\_ReLU}$ constraints \eqref{eq:synthesis:relu:define} is equivalent to \eqref{eq:interval_arithmetic:relu} for individual $i=1,\dots,\ell-1$, i.e., $[a^{(i)}, b^{(i)}]=[\max\{0, \hat{a}^{(i)}\},  \max\{0, \hat{b}^{(i)}\}]$ if and only if $a^{(i)}, b^{(i)}, \hat{a}^{(i)}, \hat{b}^{(i)}$ satisfy the ${\tt MILC\_ReLU}$ constrains in \eqref{eq:synthesis:relu:define}.
\end{lemma}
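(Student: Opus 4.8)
The plan is to prove the equivalence coordinate-by-coordinate, exploiting that every line of \eqref{eq:synthesis:relu:define} and the target identity \eqref{eq:interval_arithmetic:relu} decouple across the neuron index $j=1,\dots,n_i$. Fixing a layer $i$ and a coordinate $j$, I abbreviate the scalars $\hat a := \hat a_j^{(i)}$, $\hat b := \hat b_j^{(i)}$, $a := a_j^{(i)}$, $b := b_j^{(i)}$, $\underline z := \underline{\hat z}_j^{(i)}$, $\overline z := \overline{\hat z}_j^{(i)}$, together with the binaries $\alpha,\beta,\gamma$, and work under the standing hypothesis $\underline z \le \hat a \le \hat b \le \overline z$. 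The scalar form of \eqref{eq:synthesis:relu:define} then reads $\alpha+\beta+\gamma=1$ with $\alpha,\beta,\gamma\in\{0,1\}$, together with $a\ge\hat a$, $a\le\overline z\gamma$, $a\le\hat a-\underline z(\alpha+\beta)$, $b\ge\hat b$, $b\le\hat b-\underline z\alpha$, $b\le\overline z(\beta+\gamma)$, and $0\le a\le b$. Establishing the scalar equivalence for each $(i,j)$ then yields the vector statement.

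For the forward direction (constraints $\Rightarrow$ ReLU formula), I would use that the cardinality constraint $\alpha+\beta+\gamma=1$ forces exactly one indicator to equal one, splitting into three cases. When $\alpha=1$, the bound $b\le\overline z(\beta+\gamma)=0$ together with $b\ge0$ pins $b=0$, and $b\ge\hat b$ then yields $\hat b\le 0$; likewise $a\le\overline z\gamma=0$ with $a\ge0$ gives $a=0$, so $[a,b]=[0,0]=[\max\{0,\hat a\},\max\{0,\hat b\}]$ since $\hat a\le\hat b\le 0$. When $\beta=1$, the pair $a\le\overline z\gamma=0$ and $a\ge0$ forces $a=0$ (whence $\hat a\le a=0$ from $a\ge\hat a$), while $b\ge\hat b$ and $b\le\hat b-\underline z\alpha=\hat b$ force $b=\hat b$ (with $\hat b\ge b\ge a=0$), matching $[0,\hat b]$. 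When $\gamma=1$, the constraint $a\le\hat a-\underline z(\alpha+\beta)=\hat a$ with $a\ge\hat a$ gives $a=\hat a$, and $b\le\hat b-\underline z\alpha=\hat b$ with $b\ge\hat b$ gives $b=\hat b$; the requirement $0\le a=\hat a$ certifies $\hat a\ge 0$, so both equal their ReLU images. In each case the active indicator simultaneously recovers the sign pattern recorded in \eqref{eq:activation_abg} and pins $[a,b]$ to the ReLU output.

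For the converse (ReLU formula $\Rightarrow$ constraints satisfiable), I would construct a witness assignment of $(\alpha,\beta,\gamma)$ from the sign of the endpoints. Using $\hat a\le\hat b$, exactly one of the regimes $\hat b\le 0$, $\hat a\le 0\le\hat b$, or $0\le\hat a$ holds (ties at zero resolved arbitrarily), and I set $(\alpha,\beta,\gamma)$ to $(1,0,0)$, $(0,1,0)$, or $(0,0,1)$ respectively. Substituting $[a,b]=[\max\{0,\hat a\},\max\{0,\hat b\}]$ into the seven scalar inequalities, each reduces either to one of $\underline z\le\hat a$, $\hat b\le\overline z$, $\hat a\le\hat b$, or to a sign relation already guaranteed by the chosen regime; all of these hold by the standing hypothesis $\underline z\le\hat a\le\hat b\le\overline z$, so the witness is feasible.

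I expect the main obstacle to be purely bookkeeping: verifying in the forward direction that the \emph{big-M} multipliers $\underline z,\overline z$ are slack exactly where needed, so that no residual hidden constraint on $a,b$ survives beyond the pinning identities, and confirming that the boundary cases $\hat a=0$ or $\hat b=0$ are covered by at least one admissible indicator assignment. Because $\underline z\le\hat a\le\hat b\le\overline z$ guarantees that the coefficients $\hat a-\underline z$, $\overline z-\hat b$, and their relatives carry the correct signs, these checks remain elementary, and the equivalence follows once all three regimes are dispatched in both directions.
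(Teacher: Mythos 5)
Your proposal is correct and follows essentially the same route as the paper's proof: a coordinate-wise three-way case split on which of $\alpha_j^{(i)},\beta_j^{(i)},\gamma_j^{(i)}$ is active, using the standing bounds $\underline{\hat z}^{(i)}\le\hat a^{(i)}\le\hat b^{(i)}\le\overline{\hat z}^{(i)}$ to show the big-$M$ terms are slack so that $[a,b]$ is pinned to the ReLU image in each regime. The only difference is presentational — you split the argument into a forward implication and an explicit witness construction for the converse, whereas the paper writes bidirectional equivalence chains within each case — and your explicit handling of the boundary cases $\hat a=0$ or $\hat b=0$ is a small but welcome refinement.
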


\begin{proof} 
    We note that the constraints \eqref{eq:synthesis:relu:define} ensure that exactly one of the integer variables $\alpha^{(i)}_j$, $\beta^{(i)}_j$, or $\gamma^{(i)}_j$ is equal to 1. This condition determines the uncertain activation status of the ${\tt ReLU}$ functions as detailed in \eqref{eq:activation_abg}. Then, we can establish the following categorical equivalency for constraint \eqref{eq:synthesis:relu:define}. 

    1.~If $\alpha^{(i)}_j=1$, then $\beta^{(i)}_j=\gamma^{(i)}_j=0$, and the following equivalency holds: \eqref{eq:synthesis:relu:define} hold 
    $\iff$ 
    $0 \leq a^{(i)}_j \leq b^{(i)}_j$, $\hat{a}^{(i)}_j \leq a^{(i)}_j \leq 0$, $\hat{b}^{(i)}_j \leq b^{(i)}_j \leq 0$, $a^{(i)}_j \leq \hat{a}^{(i)}_j -\underline{\hat{z}}^{(i)}_j$, $b^{(i)}_j \leq \hat{b}^{(i)}_j - \underline{\hat{z}}^{(i)}_j$
    $\iff$ 
    $a^{(i)}_j=b^{(i)}_j=0$, $\hat{a}^{(i)}_j\leq 0$, $\hat{b}^{(i)}_j\leq 0$, $a^{(i)}_j \leq \hat{a}^{(i)}_j -\underline{\hat{z}}^{(i)}_j$, $b^{(i)}_j \leq \hat{b}^{(i)}_j - \underline{\hat{z}}^{(i)}_j$
    $\iff$
    $[a^{(i)}_j, b^{(i)}_j]=[\max\{0, \hat{a}^{(i)}_j\}$, $\max\{0, \hat{b}^{(i)}_j\}] = [0,0]$, $a^{(i)}_j \leq \hat{a}^{(i)}_j -\underline{\hat{z}}^{(i)}_j$, $b^{(i)}_j \leq \hat{b}^{(i)}_j - \underline{\hat{z}}^{(i)}_j$ 
    where the last two inequalities, i.e, $a^{(i)}_j \leq \hat{a}^{(i)}_j -\underline{\hat{z}}^{(i)}_j$ and $b^{(i)}_j \leq \hat{b}^{(i)}_j - \underline{\hat{z}}^{(i)}_j$, hold trivially since $\underline{\hat{z}}^{(i)}_j\leq\hat{a}^{(i)}_j\leq \hat{b}^{(i)}_j\leq \overline{\hat{z}}^{(i)}_j$.

    2.~If $\beta^{(i)}_j=1$, then $\alpha^{(i)}_j=\gamma^{(i)}_j=0$, and the following equivalency holds: \eqref{eq:synthesis:relu:define} hold 
    $\iff$
    $0 \leq a^{(i)}_j \leq b^{(i)}_j$, $\hat{a}^{(i)}_j \leq a^{(i)}_j \leq 0$, $\hat{b}^{(i)}_j \leq b^{(i)}_j \leq \hat{b}^{(i)}_j$, $a^{(i)}_j \leq \hat{a}^{(i)}_j -\underline{\hat{z}}^{(i)}_j$, $b^{(i)}_j \leq \overline{\hat{z}}^{(i)}_j$
    $\iff$
    $a^{(i)}_j=0$, $\hat{a}^{(i)}_j\leq 0$, $b^{(i)}_j=\hat{b}^{(i)}_j$, $\hat{b}^{(i)}_j\geq 0$, $a^{(i)}_j \leq \hat{a}^{(i)}_j -\underline{\hat{z}}^{(i)}_j$, $b^{(i)}_j \leq \overline{\hat{z}}^{(i)}_j$
    $\iff$
    $[a^{(i)}_j, b^{(i)}_j]=[\max\{0, \hat{a}^{(i)}_j\}$, $\max\{0, \hat{b}^{(i)}_j\}] = [0,\hat{b}^{(i)}_j]$, $a^{(i)}_j \leq \hat{a}^{(i)}_j -\underline{\hat{z}}^{(i)}_j$, $b^{(i)}_j \leq \overline{\hat{z}}^{(i)}_j$
    where the last two inequalities, i.e, $a^{(i)}_j \leq \hat{a}^{(i)}_j -\underline{\hat{z}}^{(i)}_j$ and $b^{(i)}_j \leq \overline{\hat{z}}^{(i)}_j$, hold trivially since $\underline{\hat{z}}^{(i)}_j\leq\hat{a}^{(i)}_j\leq \hat{b}^{(i)}_j\leq \overline{\hat{z}}^{(i)}_j$.

    3.~If $\gamma^{(i)}_j=1$, then $\alpha^{(i)}_j=\beta^{(i)}_j=0$, and the following equivalency holds: \eqref{eq:synthesis:relu:define} hold 
    $\iff$
    $0 \leq a^{(i)}_j \leq b^{(i)}_j$, $\hat{a}^{(i)}_j \leq a^{(i)}_j \leq \hat{a}^{(i)}_j$, $\hat{b}^{(i)}_j \leq b^{(i)}_j \leq \hat{b}^{(i)}_j$, $a^{(i)}_j \leq \overline{\hat{z}}^{(i)}_j$, $b^{(i)}_j \leq \overline{\hat{z}}^{(i)}_j$
    $\iff$
    $a^{(i)}_j=\hat{a}^{(i)}_j$, $\hat{a}^{(i)}_j\geq 0$, $b^{(i)}_j=\hat{b}^{(i)}_j$, $\hat{b}^{(i)}_j\geq 0$, $a^{(i)}_j \leq \overline{\hat{z}}^{(i)}_j$, $b^{(i)}_j \leq \overline{\hat{z}}^{(i)}_j$
    $\iff$
    $[a^{(i)}_j, b^{(i)}_j]=[\max\{0, \hat{a}^{(i)}_j\},  \max\{0, \hat{b}^{(i)}_j\}] = [\hat{a}^{(i)}_j,\hat{b}^{(i)}_j]$, $a^{(i)}_j \leq \overline{\hat{z}}^{(i)}_j$, $b^{(i)}_j \leq \overline{\hat{z}}^{(i)}_j$,
    where the last two inequalities, i.e, $a^{(i)}_j \leq \overline{\hat{z}}^{(i)}_j$ and $b^{(i)}_j \leq \overline{\hat{z}}^{(i)}_j$, hold trivially since $\underline{\hat{z}}^{(i)}_j\leq\hat{a}^{(i)}_j\leq \hat{b}^{(i)}_j\leq \overline{\hat{z}}^{(i)}_j$.
    
    With the categorical equivalencies and for arbitrary $j=1,\dots,n_i$, we have proven that $a^{(i)}_j, b^{(i)}_j, \hat{a}^{(i)}_j, \hat{b}^{(i)}_j$ satisfy \eqref{eq:synthesis:relu:define} if and only if $[a^{(i)}_j, b^{(i)}_j]=[\max\{0, \hat{a}^{(i)}_j\},  \max\{0, \hat{b}^{(i)}_j\}]$.
\end{proof}

Then, the result below follows directly from Lemma~\ref{lemma:synthesis:relu:equivalent}.

\begin{corollary}\label{coro:synthesis:relu:contain}
    In the $i${th} layer, if the pre-activation vector in the $i${th} layer, $\hat{z}^{(i)}$, belongs to the hyperbox $[\hat{a}^{(i)},~\hat{b}^{(i)}]$,   then the post-activation vector $z^{(i)}=\max\{0, \hat{z}^{(i)}\}$ lies within the hyperbox $[a^{(i)},~b^{(i)}]$ if and only if $a^{(i)}, b^{(i)}, \hat{a}^{(i)}, \hat{b}^{(i)}$ satisfy the ${\tt MILC\_ReLU}$ constrains in \eqref{eq:synthesis:relu:define}.
\end{corollary}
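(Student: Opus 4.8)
The plan is to obtain the Corollary as a direct reinterpretation of Lemma~\ref{lemma:synthesis:relu:equivalent} in terms of set propagation, with the only new ingredient being an elementary fact about the exact image of a hyperbox under the ${\tt ReLU}$ map. The algebraic equivalence between the ${\tt MILC\_ReLU}$ constraints and the coordinate-wise identity $[a^{(i)}, b^{(i)}] = [\max\{0, \hat{a}^{(i)}\}, \max\{0, \hat{b}^{(i)}\}]$ has already been established, so my task is merely to translate that identity into a statement about where the post-activation vector $z^{(i)}$ lands.

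First I would record the key observation that ${\tt activate}^{(i)}(x) = \max\{0, x\}$ acts element-wise and is monotonically non-decreasing in each coordinate. Consequently, as $\hat{z}^{(i)}$ ranges over the hyperbox $[\hat{a}^{(i)}, \hat{b}^{(i)}]$, each coordinate $z^{(i)}_j = \max\{0, \hat{z}^{(i)}_j\}$ ranges over exactly $[\max\{0, \hat{a}^{(i)}_j\}, \max\{0, \hat{b}^{(i)}_j\}]$. Taking the Cartesian product over $j = 1, \dots, n_i$, the exact image of $[\hat{a}^{(i)}, \hat{b}^{(i)}]$ under ${\tt ReLU}$ is precisely the hyperbox $[\max\{0, \hat{a}^{(i)}\}, \max\{0, \hat{b}^{(i)}\}]$, so this is the tightest hyperbox containing $z^{(i)} = \max\{0, \hat{z}^{(i)}\}$ for every $\hat{z}^{(i)} \in [\hat{a}^{(i)}, \hat{b}^{(i)}]$.

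Next I would invoke Lemma~\ref{lemma:synthesis:relu:equivalent}. Its precondition $\underline{\hat{z}}^{(i)} \leq \hat{a}^{(i)} \leq \hat{b}^{(i)} \leq \overline{\hat{z}}^{(i)}$ is supplied by constraint~\eqref{eq:synthesis:linear} together with the bounds from Remark~\ref{rmk:nn_bounds}, so the Lemma applies and yields that $a^{(i)}, b^{(i)}, \hat{a}^{(i)}, \hat{b}^{(i)}$ satisfy the ${\tt MILC\_ReLU}$ constraints~\eqref{eq:synthesis:relu:define} if and only if $[a^{(i)}, b^{(i)}] = [\max\{0, \hat{a}^{(i)}\}, \max\{0, \hat{b}^{(i)}\}]$. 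Combining this with the image computation above, the bounds $[a^{(i)}, b^{(i)}]$ coincide with the exact ${\tt ReLU}$ image of $[\hat{a}^{(i)}, \hat{b}^{(i)}]$, i.e. $z^{(i)}$ fills out exactly the hyperbox $[a^{(i)}, b^{(i)}]$, if and only if the ${\tt MILC\_ReLU}$ constraints hold.

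I expect the only delicate point to be reconciling the phrasing ``$z^{(i)}$ lies within $[a^{(i)}, b^{(i)}]$'' with the \emph{equality} delivered by Lemma~\ref{lemma:synthesis:relu:equivalent}: mere containment of the image in $[a^{(i)}, b^{(i)}]$ would be implied by, but would not imply, the ${\tt MILC\_ReLU}$ constraints, since any strictly larger box also contains the image. The resolution I would state explicitly is that the ${\tt MILC\_ReLU}$ constraints force $[a^{(i)}, b^{(i)}]$ to be the \emph{tight} hyperbox, namely exactly $[\max\{0, \hat{a}^{(i)}\}, \max\{0, \hat{b}^{(i)}\}]$, so the equivalence is between the constraints and the statement that $[a^{(i)}, b^{(i)}]$ is this exact propagated box, rather than a loose containment. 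Everything else is the routine element-wise bookkeeping already carried out in the proof of Lemma~\ref{lemma:synthesis:relu:equivalent}.
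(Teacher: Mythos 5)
Your proposal is correct and follows the same route as the paper, whose entire argument is the single sentence that the corollary ``follows directly from Lemma~\ref{lemma:synthesis:relu:equivalent}'': you supply the two implicit ingredients, namely the element-wise monotonicity of $\max\{0,\cdot\}$ (so that the exact image of $[\hat{a}^{(i)},\hat{b}^{(i)}]$ is $[\max\{0,\hat{a}^{(i)}\},\max\{0,\hat{b}^{(i)}\}]$, exactly as in \eqref{eq:interval_arithmetic:relu}) and the equivalence from the lemma. The ``delicate point'' you flag is a genuine one and you resolve it correctly: as literally worded, the ``only if'' direction of the corollary fails, because containment $z^{(i)}\in[a^{(i)},b^{(i)}]$ holds for any box that over-approximates the tight one, whereas the ${\tt MILC\_ReLU}$ constraints \eqref{eq:synthesis:relu:define} force $[a^{(i)},b^{(i)}]$ to \emph{equal} $[\max\{0,\hat{a}^{(i)}\},\max\{0,\hat{b}^{(i)}\}]$. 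The biconditional is therefore true only under your reading, i.e., as an equivalence between the constraints and $[a^{(i)},b^{(i)}]$ being the exact propagated hyperbox; the paper leaves this unstated, and only the ``if'' direction (constraints imply containment) is actually used downstream, e.g., in Theorem~\ref{thm:over_approx}. Making that reading explicit, as you do, strengthens rather than departs from the paper's argument.
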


Meanwhile, the following results establish a connection between the MILC-OP~\eqref{eq:synthesis} and the Reachability Analysis \eqref{eq:reachability}, demonstrating the constraints in \eqref{eq:synthesis} encode a set propagation process that depends on the control decision variable $u_k$.

\begin{theorem}\label{thm:over_approx}
Given a hyperbox $\I_i$ and a union of hyperboxes $\T$ as described in Problem~\ref{problem:1:reduced}, if a solution to the MILC-OP \eqref{eq:synthesis} exists, the values of the decision variables $u_k\in\U$, $\underline{x}_{k+1}$, and $\overline{x}_{k+1}$ produce the hyperbox $\bar{\F}(\I_i, \{u_k\})$, such that $\F(\I_i, \{u_k\}) \subseteq \bar{\F}(\I_i, \{u_k\}) =  [\underline{x}_{k+1}, \overline{x}_{k+1}] $.
\end{theorem}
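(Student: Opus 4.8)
The plan is to fix an arbitrary feasible point of the MILC-OP \eqref{eq:synthesis} and, treating its control value $u_k$ as held fixed, to show that the remaining decision variables coincide \emph{exactly} with the bounds that the interval-arithmetic Reachability Analysis \eqref{eq:reachability} assigns to the input hyperboxes $\I_i$ and $\{u_k\}$. Once this identification is in hand, the inclusion $\F(\I_i, \{u_k\}) \subseteq [\underline{x}_{k+1}, \overline{x}_{k+1}]$ is immediate from Theorem~\ref{thm:reachability} applied with $\X_k = \I_i$ and $\U_k = \{u_k\}$, and the equality $\bar{\F}(\I_i, \{u_k\}) = [\underline{x}_{k+1}, \overline{x}_{k+1}]$ follows by the definition of $\bar{\F}$.

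To establish the identification, I would proceed inductively along the chain \eqref{eq:ab_chain}. The input constraints \eqref{eq:synthesis:input} force $[a^{(0)}, b^{(0)}] = \I_i \times \{u_k\}$, matching the initialization \eqref{eq:interval_arithmetic:input_x} and \eqref{eq:interval_arithmetic:input_u} of ${\tt reachable\_hyperboxes}$, so $z^{(0)} = [x_k^T, u_k^T]^T \in [a^{(0)}, b^{(0)}]$ for every $x_k \in \I_i$. The equality constraints \eqref{eq:synthesis:linear} are literally the ${\tt lin\_layer}$ map, so given equal inputs $a^{(i-1)}, b^{(i-1)}$ they produce the same $\hat{a}^{(i)}, \hat{b}^{(i)}$ as \eqref{eq:interval_arithmetic:linear}, and by Lemma~\ref{lemma:interval_arithmetic:linear} these bracket $\hat{z}^{(i)}$. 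For the ReLU step, Lemma~\ref{lemma:synthesis:relu:equivalent} shows that, under the bound condition $\underline{\hat{z}}^{(i)} \leq \hat{a}^{(i)} \leq \hat{b}^{(i)} \leq \overline{\hat{z}}^{(i)}$ enforced by \eqref{eq:synthesis:linear}, the ${\tt MILC\_ReLU}$ constraints are \emph{equivalent} to the exact interval operation $[a^{(i)}, b^{(i)}] = [\max\{0, \hat{a}^{(i)}\}, \max\{0, \hat{b}^{(i)}\}]$ of \eqref{eq:interval_arithmetic:relu}. Hence each layer of the MILC-OP reproduces the corresponding interval-arithmetic update; by induction for $i = 1, \dots, \ell-1$, followed by a final application of \eqref{eq:synthesis:linear} and \eqref{eq:synthesis:output} at the output layer \eqref{eq:nn_structure:out}, I obtain $[\underline{x}_{k+1}, \overline{x}_{k+1}] = [\hat{a}^{(\ell)}, \hat{b}^{(\ell)}] = \bar{\F}(\I_i, \{u_k\})$.

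Combining the two observations yields $\F(\I_i, \{u_k\}) \subseteq \bar{\F}(\I_i, \{u_k\}) = [\underline{x}_{k+1}, \overline{x}_{k+1}]$, as claimed, where the produced control satisfies $u_k \in \U$ by \eqref{eq:synthesis:input} and \eqref{eq:u_feasible_set}.

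I expect the crux to be the equality half of the statement rather than the inclusion. The inclusion requires only the soundness (containment) direction of the ReLU encoding, captured by Corollary~\ref{coro:synthesis:relu:contain}; but asserting $\bar{\F}(\I_i, \{u_k\}) = [\underline{x}_{k+1}, \overline{x}_{k+1}]$ requires that a feasible solution leave the auxiliary variables no freedom once $u_k$ is fixed. This is precisely where the two-sided \emph{equivalence} of Lemma~\ref{lemma:synthesis:relu:equivalent}, as opposed to a one-sided relaxation, is essential: it rules out any slack in the ${\tt MILC\_ReLU}$ constraints and forces the feasible point onto the deterministic interval chain \eqref{eq:ab_chain}. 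A minor technical point to verify is that the tightening bounds $\underline{\hat{z}}^{(i)}, \overline{\hat{z}}^{(i)}$ from Remark~\ref{rmk:nn_bounds} indeed bracket $\hat{a}^{(i)}, \hat{b}^{(i)}$ so that the hypothesis of Lemma~\ref{lemma:synthesis:relu:equivalent} holds at every layer of the induction.
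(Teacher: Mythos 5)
Your proposal is correct and follows essentially the same route as the paper's own proof: fix a feasible point, observe that \eqref{eq:synthesis:input} pins $[a^{(0)}, b^{(0)}] = \I_i \times \{u_k\}$, that \eqref{eq:synthesis:linear} is literally ${\tt lin\_layer}$, and that Lemma~\ref{lemma:synthesis:relu:equivalent} makes the ${\tt MILC\_ReLU}$ constraints equivalent to \eqref{eq:interval_arithmetic:relu}, so the feasible point reproduces the chain \eqref{eq:ab_chain} exactly and Theorem~\ref{thm:reachability} gives the inclusion. Your explicit induction and your remark that the hypothesis $\underline{\hat{z}}^{(i)} \leq \hat{a}^{(i)} \leq \hat{b}^{(i)} \leq \overline{\hat{z}}^{(i)}$ is discharged by the ordering constraints already present in \eqref{eq:synthesis:linear} only make the paper's argument more precise, not different.
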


\begin{proof}
By construction in constraints \eqref{eq:synthesis:input}, it follows that $u_k\in\U$ is admissible. The input to the NNDS, $z^{(0)}= [x_k^T ~ u_k^T]^T$, lies within the hyperbox $\I_i \times \{u_k\} = [a^{(0)}, b^{(0)}]$, where $\I_i = [\underline{x}_k, \overline{x}_k]$ as specified in Problem~\ref{problem:1:reduced}. Consequently, the input $z_0$ satisfies the following condition, $z^{(0)}=[x_k^T\; u_k^T]^T\in[a^{(0)},\;b^{(0)}]$. 
Afterward, for set propagation in the NNDS hidden layer, the linear layers in \eqref{eq:synthesis:linear} are identical to \eqref{eq:interval_arithmetic:linear} while the ${\tt MILC\_ReLU}$ in \eqref{eq:synthesis:relu} is equivalent to \eqref{eq:interval_arithmetic:relu} as established by Lemma~\ref{lemma:synthesis:relu:equivalent}. Therefore, we know that $\bar{\F}(\I_i, \{u_k\}) =  [\underline{x}_{k+1}, \overline{x}_{k+1}]$ given $z^{(0)}\in[a^{(0)},\;b^{(0)}]$. Subsequently, we have $\F(\I_i, \{u_k\}) \subseteq \bar{\F}(\I_i, \{u_k\})$ due to Theorem~\ref{thm:reachability}.
\end{proof}

\begin{remark}
    The constraints \eqref{eq:synthesis:input}-\eqref{eq:synthesis:output} encode a superset $\bar{\F}(\I_i, \{u_k\})$ of the actual decision-variable-dependent set $\F(\I_i, \{u_k\})$. They estimate the reachable sets containing neuron values in each layer, $\hat{z}^{(i)}$, $i = 1, \dots, \ell-1$, using hyperboxes. However, the linear layers described in \eqref{eq:nn_structure:linear}, which consist of linear transformations, do not necessarily preserve the hyperbox shape. 
    This over-approximation disregards the dependencies between variables in the hidden layers, as also noted in \cite{dutta2018output}. While hyperboxes offer convenient set operations and enable decision-variable-dependent set propagation within \eqref{eq:synthesis} using MILCs, ensuring computational efficiency with well-established integer solvers \cite{gurobi}, future investigations will explore replacing hyperboxes with polytopes to better capture dependencies between variables in the hidden layers.
\end{remark}

Lastly, the MILC-OP \eqref{eq:synthesis} also employs constraints to enforce the set inclusion condition, i.e., $[\underline{x}_{k+1}, \overline{x}_{k+1}] \subseteq \T$. The results related to this inclusion are discussed below and, together with the theoretical results established above, contribute to solving Problem~\ref{problem:1:reduced}.

\begin{lemma}\label{lemma:synthesis:subset}
    Given a union of hyperboxes $\T$ as described in Problem~\ref{problem:1:reduced}, the hyperbox $[\underline{x}_{k+1}, \overline{x}_{k+1}]\subseteq \X$ is a subset of $\T$, i.e., $[\underline{x}_{k+1}, \overline{x}_{k+1}]\subseteq \T$, if and only if $\underline{x}_{k+1}$,  $ \overline{x}_{k+1}$ and $\T$ satisfy the ${\tt MILC\_Inc}$ constraints \eqref{eq:synthesis:subset:define}.
\end{lemma}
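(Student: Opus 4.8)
The plan is to prove the two directions of the equivalence by first reducing the set inclusion to a coordinate-wise, per-obstacle separation condition, and then verifying that the big-M inequalities in \eqref{eq:synthesis:subset:define} encode exactly this separation through the binary variables $\phi^{(m)}, \psi^{(m)}$. I read ``$\underline{x}_{k+1}, \overline{x}_{k+1}$ and $\T$ satisfy the ${\tt MILC\_Inc}$ constraints'' as the existence of a feasible binary assignment $\{\phi^{(m)}, \psi^{(m)}\}_{m=1}^{n_o}$.

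First I would establish the geometric reduction announced just before the lemma: since $[\underline{x}_{k+1}, \overline{x}_{k+1}] \subseteq \X$ is assumed and $\T = \X \setminus \O$ with $\O = \bigcup_{m=1}^{n_o} \obsq{\underline{o}^{(m)}, \overline{o}^{(m)}}$, the inclusion $[\underline{x}_{k+1}, \overline{x}_{k+1}] \subseteq \T$ holds if and only if $[\underline{x}_{k+1}, \overline{x}_{k+1}] \cap \obsq{\underline{o}^{(m)}, \overline{o}^{(m)}} = \varnothing$ for every $m$. Because an axis-aligned box intersection is the Cartesian product of the coordinate-wise interval intersections, emptiness for a fixed $m$ is equivalent to the existence of a coordinate $j$ in which the closed interval $[\underline{x}_{k+1,j}, \overline{x}_{k+1,j}]$ and the open interval $\obsq{\underline{o}^{(m)}_j, \overline{o}^{(m)}_j}$ are disjoint. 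I would record the elementary fact that $[a,b] \cap \obsq{c,d} = \varnothing$ iff $b \leq c$ (separation from the left) or $a \geq d$ (separation from the right), and note that since $\underline{o}^{(m)}_j < \overline{o}^{(m)}_j$ the two separations cannot hold simultaneously in the same coordinate.

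Next I would interpret the binaries: $\phi^{(m)}_j = 1$ is meant to certify left-separation in coordinate $j$ and $\psi^{(m)}_j = 1$ right-separation, with $\phi^{(m)}_j + \psi^{(m)}_j \leq 1$ reflecting their incompatibility and $\sum_j (\phi^{(m)}_j + \psi^{(m)}_j) \geq 1$ demanding at least one separating coordinate. Reading \eqref{eq:synthesis:subset:define} element-wise: when $\phi^{(m)}_j = 1$ the first inequality collapses to $\overline{x}_{k+1,j} \leq \underline{o}^{(m)}_j$ and the second to the trivial $\overline{x}_{k+1,j} \geq \underline{x}_j$; symmetrically $\psi^{(m)}_j = 1$ forces $\underline{x}_{k+1,j} \geq \overline{o}^{(m)}_j$ with a trivial complement. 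When the binaries are $0$, the two active inequalities reduce to the ambient bounds $\overline{x}_{k+1,j}\le\overline{x}_j$, $\underline{x}_{k+1,j}\ge\underline{x}_j$ (automatic from the inclusion in $\X$), while the other two become the no-separation conditions $\overline{x}_{k+1,j}\ge\underline{o}^{(m)}_j$, $\underline{x}_{k+1,j}\le\overline{o}^{(m)}_j$, which hold precisely at the non-separated coordinates.

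For the $(\Leftarrow)$ direction I would assume a feasible assignment for each $m$; the cardinality constraint yields a coordinate $j^*$ with $\phi^{(m)}_{j^*}=1$ or $\psi^{(m)}_{j^*}=1$, and the collapsed inequality delivers the corresponding coordinate separation, hence $[\underline{x}_{k+1},\overline{x}_{k+1}]\cap\obsq{\underline{o}^{(m)},\overline{o}^{(m)}}=\varnothing$; intersecting over all $m$ and applying the reduction gives $[\underline{x}_{k+1},\overline{x}_{k+1}]\subseteq\T$. For $(\Rightarrow)$ I would, for each $m$, set $\phi^{(m)}_j = 1$ exactly when $\overline{x}_{k+1,j}\le\underline{o}^{(m)}_j$ and $\psi^{(m)}_j=1$ exactly when $\underline{x}_{k+1,j}\ge\overline{o}^{(m)}_j$ (mutually exclusive by the remark above), then verify case-by-case over left-separated, right-separated, and non-separated coordinates that all four inequalities hold, using only $\underline{x}\le\underline{x}_{k+1}\le\overline{x}_{k+1}\le\overline{x}$ and $\underline{o}^{(m)}_j<\overline{o}^{(m)}_j$; emptiness of the intersection guarantees at least one separated coordinate, so $\sum_j(\phi^{(m)}_j+\psi^{(m)}_j)\ge1$ is met. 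The main obstacle is bookkeeping rather than conceptual: I must confirm that the big-M coefficients $\underline{o}^{(m)}-\overline{x}$, $\underline{o}^{(m)}-\underline{x}$, $\overline{o}^{(m)}-\underline{x}$, $\overline{o}^{(m)}-\overline{x}$ make the ``inactive'' inequality vacuous exactly under the ambient inclusion in $\X$, so that choosing a binary as $0$ never forces a spurious separation and choosing it as $1$ never over-constrains the unrelated bound; checking this sign-by-sign across the three coordinate cases is where care is needed.
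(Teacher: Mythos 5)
Your proposal is correct and follows essentially the same route as the paper's proof: both reduce the inclusion $[\underline{x}_{k+1},\overline{x}_{k+1}]\subseteq\T$ to per-obstacle empty intersection, then to the existence of a separating coordinate $j$ with $\overline{x}_{k+1,j}\leq\underline{o}^{(m)}_j$ or $\underline{x}_{k+1,j}\geq\overline{o}^{(m)}_j$, and finally show via the case analysis of $\phi^{(m)}_j,\psi^{(m)}_j\in\{0,1\}$ (the paper's implication~\eqref{eq:phi_psi_implication}) that the big-M inequalities encode exactly this, using $[\underline{x}_{k+1},\overline{x}_{k+1}]\subseteq\X$ to make the inactive inequalities vacuous. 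The only cosmetic difference is that the paper writes this as a single chain of equivalences while you split it into two directions with an explicit binary assignment in the forward direction; the content is identical.
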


\begin{proof}
    Again, given the target set $\T = \tilde{\A}_i(\X_s)$, which is a finite union of closed hyperboxes as specified in Problem~\ref{problem:1:reduced}, and $\X$ in \eqref{eq:x_feasible_set}, which is itself a closed and bounded hyperbox, the target set can be expressed as $\T = \X \backslash \brk{\cup_{m=1}^{n_o} ]\underline{o}^{(m)}, \overline{o}^{(m)}[}$.
    Meanwhile, the integer variables $\phi^{(m)}$ and $\psi^{(m)}$ reduce the constraints \eqref{eq:synthesis:subset:define} to cases and have the following implications
    \begin{equation}\label{eq:phi_psi_implication}
        \hspace{-1em}
        \left\{
        \begin{array}{c}
            \underline{x}_{k+1,j}\leq \overline{x}_{k+1,j} \leq \underline{o}^{(m)}_j \quad \text{if } \phi^{(m)}_j = 1,~ \psi^{(m)}_j = 0,
            \\
            \overline{x}_{k+1,j}\geq \underline{x}_{k+1,j} \geq \overline{o}^{(m)}_j \quad \text{if } \phi^{(m)}_j = 0,~ \psi^{(m)}_j = 1,
            \\
            \begin{array}{l}
                \underline{o}^{(m)}_j\leq \overline{x}_{k+1,j}\leq\overline{x},
                \\
                \quad \underline{x}\leq \underline{x}_{k+1,j}\leq\overline{o}^{(m)}_j
            \end{array} 
            \quad \text{if } \phi^{(m)}_j =\psi^{(m)}_j = 0,
        \end{array}
        \right.
    \end{equation} 
    where the subscripts $j$ denote the $j${th} elements in the respective vectors and $\phi^{(m)}_j =\psi^{(m)}_j = 1$ is not possible due to the constraint $\phi^{(m)}_j + \psi^{(m)}_j\leq 1$.
    The following equivalency can be established to prove the results:
    \begin{multline*}
        [\underline{x}_{k+1}, \overline{x}_{k+1}]\subseteq \T, \text{ where } \T = \X \backslash \brk{\cup_{m=1}^{n_o} ]\underline{o}^{(m)}, \overline{o}^{(m)}[}
        \\
        \iff 
        ~
        \forall m = 1, \dots, n_o,~ [\underline{x}_{k+1},\overline{x}_{k+1}] ~\cap~ ]\underline{o}^{(m)}, \overline{o}^{(m)}[ = \varnothing
        \\
        \iff 
        ~
        \forall m = 1, \dots, n_o,~\exists j\in \{1,\dots, n_x\}, 
        \text{ s.t. }
        \\
        \underline{x}_{k+1,j}\leq \overline{x}_{k+1,j} \leq \underline{o}^{(m)}_j
        \text{ or }
        \overline{x}_{k+1,j}\geq \underline{x}_{k+1,j} \geq \overline{o}^{(m)}_j,  
        \\
        \iff
        ~
        \forall m = 1, \dots, n_o,~\exists j\in \{1,\dots, n_x\},
        \text{ s.t. }
        \\
        \phi^{(m)}_j=1 \text{ or } \psi^{(m)}_j=1
        \text{ with implication in \eqref{eq:phi_psi_implication} }
        \\
        \iff
        ~
        \forall m = 1, \dots, n_o,
        ~
        \Sigma_{j=1}^{n_x}\left(\phi^{(m)}_j + \psi^{(m)}_j\right)\geq 1,   
        \\
        \text{ and }
        \forall j=1,\dots,n_x,
        ~
        \phi^{(m)}_j,~\psi^{(m)}_j\in\{0,1\},
        ~
        \phi^{(m)}_j + \psi^{(m)}_j\leq 1,
        \\
        \text{ with implication in \eqref{eq:phi_psi_implication} }
        ~
        \iff
        \\
        \text{$\underline{x}_{k+1}$,  $ \overline{x}_{k+1}$ and $\T$ satisfy the ${\tt MILC\_Inc}$ constraints \eqref{eq:synthesis:subset:define}}
    \end{multline*}
    Here, the first equivalency is due to $\T \cap \brk{\cup_{m=1}^{n_o} ]\underline{o}^{(m)}, \overline{o}^{(m)}[} = \varnothing$ and $[\underline{x}_{k+1}, \overline{x}_{k+1}] \subseteq\X \subseteq \brk{\T \cup \brk{\cup_{m=1}^{n_o} ]\underline{o}^{(m)}, \overline{o}^{(m)}[}}$.
\end{proof}

\begin{theorem}\label{thm:one_step_return}
Given a hyperbox $\I_i$ and a union of hyperboxes $\T$ as described in Problem~\ref{problem:1:reduced}, if a solution to the MILC-OP~\eqref{eq:synthesis} exists, the value of control $u_k$ is admissible and regulates the state $x_k$ to the target set $\T$ in one step for all $x_k \in \I_i$, i.e., $u_k \in \U$ and $\F(\I_i, \{u_k\})\subseteq \bar{\F}(\I_i, \{u_k\})\subseteq \T$.
\end{theorem}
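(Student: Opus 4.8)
The plan is to assemble the statement from the two results already established for the MILC-OP, namely Theorem~\ref{thm:over_approx} and Lemma~\ref{lemma:synthesis:subset}, exploiting the fact that the constraints \eqref{eq:synthesis} split cleanly into a set-propagation block \eqref{eq:synthesis:input}--\eqref{eq:synthesis:output} and an inclusion block \eqref{eq:synthesis:subset}. First I would invoke Theorem~\ref{thm:over_approx}: the existence of a feasible point of \eqref{eq:synthesis} forces the control component to satisfy \eqref{eq:synthesis:input}, so that $u_k \in \U$ is admissible, and it yields $\F(\I_i, \{u_k\}) \subseteq \bar{\F}(\I_i, \{u_k\}) = [\underline{x}_{k+1}, \overline{x}_{k+1}]$, where the over-approximating hyperbox is precisely the object assembled in \eqref{eq:synthesis:output}.

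Next I would observe that any feasible point also satisfies the block \eqref{eq:synthesis:subset}, which is exactly the ${\tt MILC\_Inc}$ constraint family of \eqref{eq:synthesis:subset:define} instantiated with $\underline{x}_{k+1}$, $\overline{x}_{k+1}$ and the complement decomposition $\X \backslash \T = \cup_{m=1}^{n_o} \obsq{\underline{o}^{(m)}, \overline{o}^{(m)}}$. Applying Lemma~\ref{lemma:synthesis:subset} then gives the inclusion $[\underline{x}_{k+1}, \overline{x}_{k+1}] \subseteq \T$ directly. Chaining the two inclusions produces $\F(\I_i, \{u_k\}) \subseteq \bar{\F}(\I_i, \{u_k\}) = [\underline{x}_{k+1}, \overline{x}_{k+1}] \subseteq \T$, which, since $\F(\I_i, \{u_k\})$ is by definition the set of all one-step successors $f(x_k, u_k)$ over $x_k \in \I_i$, is precisely the claim that the single admissible control $u_k$ drives every state in $\I_i$ into $\T$ in one step.

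Because both ingredients are fully proved, there is no genuine analytical obstacle here; the theorem is essentially an assembly step, closer in spirit to a corollary than to a new argument. The one point that warrants care is bookkeeping: confirming that the hyperbox $[\underline{x}_{k+1}, \overline{x}_{k+1}]$ produced by the propagation block via \eqref{eq:synthesis:output} is literally the same object that enters the inclusion constraint, so that Theorem~\ref{thm:over_approx} and Lemma~\ref{lemma:synthesis:subset} compose without a gap. I would also verify that the target-set decomposition $\T = \X \backslash \O$ used in the inclusion block is consistent with $\T = \tilde{\A}_i(\X_s)$ being a finite union of closed hyperboxes, as guaranteed by Remark~\ref{rmk:livesin_power_union}, so that the open-hyperbox complement $\O = \cup_{m=1}^{n_o} \obsq{\underline{o}^{(m)}, \overline{o}^{(m)}}$ is well defined and the hypotheses of Lemma~\ref{lemma:synthesis:subset} are genuinely met.
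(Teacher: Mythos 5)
Your proposal is correct and follows exactly the paper's own argument: the paper's proof likewise combines Theorem~\ref{thm:over_approx} (which gives $u_k\in\U$ and $\F(\I_i,\{u_k\})\subseteq\bar{\F}(\I_i,\{u_k\})=[\underline{x}_{k+1},\overline{x}_{k+1}]$) with Lemma~\ref{lemma:synthesis:subset} (which gives $[\underline{x}_{k+1},\overline{x}_{k+1}]\subseteq\T$) and chains the inclusions. Your additional bookkeeping checks are sound but not needed beyond what those two results already supply.
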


\begin{proof}
    Theorem~\ref{thm:over_approx} demonstrates that $\F(\I_i, \{u_k\}) \subseteq \bar{\F}(\I_i, \{u_k\}) =  [\underline{x}_{k+1}, \overline{x}_{k+1}] $ while the Lemma~\ref{lemma:synthesis:subset} establishes $[\underline{x}_{k+1}, \overline{x}_{k+1}]\subseteq \T$.
\end{proof}

As a result, in Algorithm~\ref{al:isReturnable}, we can verify the \textbf{if} statement in line 6 by examining the existence of a solution to MILC-OP \eqref{eq:synthesis} using Theorem~\ref{thm:one_step_return}, which addresses Problem~\ref{problem:1:reduced}. While MILC-OP \eqref{eq:synthesis} consists of linear constraints with integer variables, the over-approximation, i.e. $\F(\I_i, \{u_k\})\subseteq \bar{\F}(\I_i, \{u_k\})$, may lead to cases where MILC-OP \eqref{eq:synthesis} lacks a solution, meaning that $\nexists u_k \in \U$ such that $\bar{\F}(\I_i, \{u_k\}) \subseteq \T$, even though there exists $u_k \in \U$ such that $\F(\I_i, \{u_k\}) \subseteq \T$. This limitation motivates the partitioning of hyperboxes described in line 9 of Algorithm~\ref{al:synthesisUnderQ}.

\begin{remark}
    Based on Theory \ref{thm:over_approx} and \ref{thm:one_step_return}, if $\bar{\F}(\I_i, \U)\cap\T=\varnothing$ in line 3 of Algorithm~\ref{al:isReturnable}, this implies $\nexists u_k\in\U$ such that $\bar{\F}(\I_i, \{u_k\}) \subseteq \T$, subsequently, indicates the MILC-OP \eqref{eq:synthesis} has no solution. 
    It preemptively verifies the \textbf{if} statements in line 6 of Algorithm~\ref{al:isReturnable} false, i.e., $\nexists u_k\in\U$ such that $\F(\I_i, \{u_k\})\subseteq \T$ since $\F(\I_i, \{u_k\})\subseteq \bar{\F}(\I_i, \{u_k\})$.
\end{remark}

\subsection{CIS Synthesis Process and Finite Determination}\label{method:thmProp:cis_finite}

Ultimately, with Problem~\ref{problem:1:reach} and Problem~\ref{problem:1:reduced} addressed above, we provide the following theoretical results proving that the CIS synthesis process generates a CIS $\C$ while ensuring termination in a finite number of iterations.

\begin{lemma}\label{lemma:admissible}
    Given $\X$, $\U$, $\X_s$ obeying Assumption~\ref{assume:x_u_box}, and the SSQ process, consider the CIS synthesis process. If $\tilde{\A}_i(\X_s) \subseteq \A_i(\X_s)$ is an $i$-Step Admissible Subset and $\tilde{\A}_i(\X_s)\in \P_{\cup}(\X_\Delta)$, then $\tilde{\A}_{i+1}(\X_s) \subseteq \A_{i+1}(\X_s)$ is an $(i+1)$-Step Admissible Subset and $\tilde{\A}_{i+1}(\X_s)\in \P_{\cup}(\X_\Delta)$.
\end{lemma}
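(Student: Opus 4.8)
The plan is to establish the two claims separately, both hinging on the defining relation $\tilde{\A}_{i+1}(\X_s) = \Q(\tilde{\A}_i(\X_s), \tilde{\A}_i(\X_s))$ produced in line 4 of Algorithm~\ref{al:synthesisInvSet}. First I would dispatch the set-membership claim $\tilde{\A}_{i+1}(\X_s)\in\P_\cup(\X_\Delta)$ by tracing how $\Q$ assembles its output from basis hyperboxes, and then I would establish the $(i+1)$-step admissibility by combining the One-Step Returnable Subset property of $\Q$ with the inductive hypothesis via a control-concatenation argument.

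For the membership claim, I would appeal to the closure properties of the algebra $(\X_\Delta,\P_\cup(\X_\Delta))$ established in Theorem~\ref{thm:set_algebra} and Corollary~\ref{coro:set_algebra}. By hypothesis $\tilde{\A}_i(\X_s)\in\P_\cup(\X_\Delta)$, so every hyperbox examined by Algorithm~\ref{al:isReturnable} is a union of basis hyperboxes; the partition step in line 9 of Algorithm~\ref{al:synthesisUnderQ} splits a hyperbox into two members of $\P_\cup(\X_\Delta)$, and each verified hyperbox $\I_j^+$ therefore also lies in $\P_\cup(\X_\Delta)$. Since $\tilde{\R}=\cup_j\I_j^+=\tilde{\A}_{i+1}(\X_s)$ is a finite union of such sets, closure under finite union yields $\tilde{\A}_{i+1}(\X_s)\in\P_\cup(\X_\Delta)$, consistent with Remark~\ref{rmk:livesin_power_union}.

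For the admissibility claim, the key step is to show that $\tilde{\A}_{i+1}(\X_s)$ is a genuine One-Step Returnable Subset of $\tilde{\A}_i(\X_s)$ to itself. Each hyperbox $\I_j^+$ contributed to $\tilde{\R}$ passes the verification of line 6 of Algorithm~\ref{al:isReturnable}, so by Theorem~\ref{thm:one_step_return} there is a common admissible control $u_k\in\U$ with $\F(\I_j^+,\{u_k\})\subseteq\T=\tilde{\A}_i(\X_s)$. Hence for any $x_0\in\tilde{\A}_{i+1}(\X_s)$ there exist an index $j$ and a control $u_0\in\U$ with $x_1=f(x_0,u_0)\in\tilde{\A}_i(\X_s)$, and moreover $\tilde{\A}_{i+1}(\X_s)\subseteq\tilde{\A}_i(\X_s)\subseteq\X_s$ since $\Q(\I,\T)\subseteq\I$ and $\tilde{\A}_i(\X_s)\subseteq\X_s$ by Definition~\ref{def:i_th_adm_subset}. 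Invoking the inductive hypothesis that $\tilde{\A}_i(\X_s)$ is an $i$-Step Admissible Subset, from $x_1\in\tilde{\A}_i(\X_s)$ there exists a length-$i$ admissible control sequence keeping $x_2,\dots,x_{i+1}\in\X_s$. Concatenating $u_0$ with this sequence produces a length-$(i+1)$ admissible control sequence under which $x_1,\dots,x_{i+1}\in\X_s$, which is exactly the requirement of Definition~\ref{def:i_th_adm_subset} and gives $\tilde{\A}_{i+1}(\X_s)\subseteq\A_{i+1}(\X_s)$.

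The main obstacle is the careful bookkeeping in the concatenation step: one must re-index the time axis so that the single returnable step and the subsequent $i$ admissible steps join into one valid trajectory, and verify that every intermediate state lands in $\X_s$ (crucially using $\tilde{\A}_i(\X_s)\subseteq\X_s$ to cover the first step $x_1$). A secondary subtlety is confirming that $\tilde{\A}_{i+1}(\X_s)$ is even well-defined, i.e., that the inner while-loop of Algorithm~\ref{al:synthesisUnderQ} terminates so that $\Q$ returns; this rests on the fact that repeated partitioning reduces any hyperbox to a basis hyperbox after finitely many steps, which I would note and defer to the finite-determination analysis.
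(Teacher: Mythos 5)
Your proposal is correct and follows essentially the same route as the paper's proof: membership in $\P_{\cup}(\X_\Delta)$ from the algorithmic construction and closure of the algebra, then a control-concatenation argument combining Theorem~\ref{thm:one_step_return} (for the first step into $\tilde{\A}_i(\X_s)$) with the $i$-step admissibility hypothesis. Your explicit check that $x_1\in\tilde{\A}_i(\X_s)\subseteq\X_s$ and your remark on the termination of Algorithm~\ref{al:synthesisUnderQ} are details the paper leaves implicit, but the argument is the same.
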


\begin{proof}
    Firstly, given that $\tilde{\A}_i(\X_s) \in \P_{\cup}(\X_\Delta)$, it follows that $\tilde{\A}_{i+1}(\X_s)=\Q(\tilde{\A}_i(\X_s), \tilde{\A}_{i}(\X_s)) = \tilde{\R} \in \P_{\cup}(\X_\Delta)$. The set $\tilde{\R}$ is computed using Algorithm~\ref{al:synthesisUnderQ} where the ${\tt partition\_hyperbox}$ procedure operates within $\P_{\cup}(\X_\Delta)$ to ensure that $\tilde{\R} = \cup_{j} \I_j^+$ and $\I_j^+ \in \P_{\cup}(\X_\Delta)$.    
    Secondly, given $\tilde{\A}_i(\X_s) \subseteq \A_i(\X_s)$ is an $i$-Step Admissible Subset, then for all initial states $x_0 \in \tilde{\A}_{i+1}(\X_s) =\Q(\tilde{\A}_i(\X_s), \tilde{\A}_{i}(\X_s))$, there must exist a hyperbox $\I_j^+ = \I (x_0; \tilde{\A}_{i+1}(\X_s))$ and a control $u_0 = u (x_0; \tilde{\A}_{i+1}(\X_s) ) \in \mathcal{U}$ defined according to Remark~\ref{rmk:u_cis} satisfying $x_1 = f(x_0, u_0) \in \F(\I_j^+, \{u_0\}) \subseteq \T$, where $\T = \tilde{\A}_i(\X_s)$, according to Theorem~\ref{thm:one_step_return}.
    Subsequently, since $\tilde{\A}_i(\X_s) \subseteq \A_i(\X_s)$, Definition~\ref{def:i_th_adm_subset} ensures the existence of a control sequence $\{u_k\}_{k=1}^{i} \subseteq \mathcal{U}$ such that $x_{k+1} = f(x_k, u_k) \in \mathcal{X}_s$ for all $k = 1, \dots, i$. As a result, the control sequence $\{u_k\}_{k=0}^{i}$ keeps the system state trajectory $\{x_k\}_{k=1}^{i+1}$ within $\X_s$, which implies that $\tilde{\A}_{i+1}(\X_s) \subseteq \A_{i+1}(\X_s)$.
\end{proof}

\begin{lemma}\label{lemma:admissible_nested}
    Given $\X$, $\U$, the safe set $\X_s$ obeying Assumption~\ref{assume:x_u_box}, and the SSQ process, the CIS synthesis process generates a nested sequence of $i$-Step Admissible Sets of $\X_s$, i.e., $\tilde{\A}_i(\X_s)\in \P_{\cup}(\X_{\Delta})$, $\tilde{\A}_i(\X_s) \subseteq \A_{i}(\X_s)$ for all $i \in \ZZ$ and $\tilde{\A}_j(\X_s) \subseteq \tilde{\A}_i(\X_s)\subseteq\X_s$ for all $i < j$.
\end{lemma}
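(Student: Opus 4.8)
The plan is to prove all three assertions by induction on $i \ge 0$ (the recursion only generates iterates for nonnegative indices, so ``$i \in \ZZ$'' is understood as $i \in \ZZ^{\ge 0}$), using Lemma~\ref{lemma:admissible} as the inductive step and establishing the base case directly from the initialization in Algorithm~\ref{al:synthesisInvSet}. The containment-in-$\X_s$ and the nestedness I would extract separately from the defining property of the map $\Q$.

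For the base case $i = 0$, I would verify that $\tilde{\A}_0(\X_s) = \X_s^{\Delta}$ is simultaneously an element of $\P_{\cup}(\X_\Delta)$ and a $0$-Step Admissible Subset. The first is immediate from the SSQ process, since \eqref{eq:x_safe:quantize} guarantees $\X_s^{\Delta} \in \P_{\cup}(\X_\Delta)$. For the second, I would note that in Definition~\ref{def:i_th_adm_set} the membership constraint $x_k \in \T$ for $k = 1,\dots,0$ is vacuous, so $\A_0(\X_s) = \X$; hence $\tilde{\A}_0(\X_s) = \X_s^{\Delta} \subseteq \X_s \subseteq \X = \A_0(\X_s)$, as required.

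For the inductive step, assume $\tilde{\A}_i(\X_s) \in \P_{\cup}(\X_\Delta)$ and $\tilde{\A}_i(\X_s) \subseteq \A_i(\X_s)$. Lemma~\ref{lemma:admissible} then directly yields $\tilde{\A}_{i+1}(\X_s) \in \P_{\cup}(\X_\Delta)$ and $\tilde{\A}_{i+1}(\X_s) \subseteq \A_{i+1}(\X_s)$, closing the induction and establishing the first two claims for all $i \ge 0$. For the nestedness I would invoke the defining property of the set-valued map $\Q$, namely $\Q(\I,\T) \subseteq \R(\I,\T) \subseteq \I$ (stated in the discussion preceding Algorithm~\ref{al:synthesisUnderQ}). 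Applying it with $\I = \T = \tilde{\A}_i(\X_s)$ and the recursion $\tilde{\A}_{i+1}(\X_s) = \Q(\tilde{\A}_i(\X_s), \tilde{\A}_i(\X_s))$ in line~4 gives $\tilde{\A}_{i+1}(\X_s) \subseteq \tilde{\A}_i(\X_s)$ for every $i$, so by transitivity $\tilde{\A}_j(\X_s) \subseteq \tilde{\A}_i(\X_s)$ whenever $i < j$. Chaining down to the base case $\tilde{\A}_0(\X_s) = \X_s^{\Delta} \subseteq \X_s$ then yields $\tilde{\A}_i(\X_s) \subseteq \X_s$ for all $i$, completing the bound $\tilde{\A}_j(\X_s) \subseteq \tilde{\A}_i(\X_s) \subseteq \X_s$.

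Since the whole argument is a bookkeeping induction resting on Lemma~\ref{lemma:admissible} and the elementary property $\Q(\I,\T)\subseteq\I$, I do not anticipate a genuine obstacle. The one point deserving care is the final containment in $\X_s$: because Definition~\ref{def:i_th_adm_set} constrains only $x_1,\dots,x_i$ (not the initial state $x_0$) to lie in $\X_s$, one has in general $\A_i(\X_s) \not\subseteq \X_s$, so the admissibility relation $\tilde{\A}_i(\X_s) \subseteq \A_i(\X_s)$ by itself does not deliver $\tilde{\A}_i(\X_s) \subseteq \X_s$. This containment must instead be obtained from the nestedness together with the base case $\tilde{\A}_0(\X_s) \subseteq \X_s$, as outlined above.
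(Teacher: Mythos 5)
Your proposal is correct and follows essentially the same route as the paper's proof: induction with Lemma~\ref{lemma:admissible} as the inductive step, the base case from the initialization $\tilde{\A}_0(\X_s)=\X_s^{\Delta}$, and nestedness from $\Q(\I,\T)\subseteq\I$. Your handling of the base case is in fact slightly more careful than the paper's (which asserts $\A_0(\X_s)=\X_s$, whereas Definition~\ref{def:i_th_adm_set} literally gives $\A_0(\X_s)=\X$), and your observation that the containment $\tilde{\A}_i(\X_s)\subseteq\X_s$ must come from nestedness rather than from $\tilde{\A}_i(\X_s)\subseteq\A_i(\X_s)$ is a valid refinement that does not change the argument.
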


\begin{proof}
    As established in Lemma~\ref{lemma:admissible}, if $\tilde{\A}_i(\X_s) \subseteq \A_i(\X_s)$ and $\tilde{\A}_i(\X_s)\in \P_{\cup}(\X_\Delta)$, then $\tilde{\A}_{i+1}(\X_s)\subseteq \A_{i+1}(\X_s)$ and $\tilde{\A}_{i+1}(\X_s)\in \P_{\cup}(\X_\Delta)$. Meanwhile, by construction, we have $\tilde{\A}_{i+1}(\X_s) = \Q(\tilde{\A}_{i}(\X_s),\tilde{\A}_{i}(\X_s))\subseteq \tilde{\A}_{i}(\X_s)$.
    Then, with $\tilde{\A}_0(\X_s) = \X_s^{\Delta} \subseteq \A_0(\X_s) = \X_s$ that holds trivially and $\X_s^{\Delta}\in \P_{\cup}(\X_\Delta)$, we can use induction to prove that $\tilde{\A}_i(\X_s)\in \P_{\cup}(\X_{\Delta})$, $\tilde{\A}_i(\X_s) \subseteq \A_{i}(\X_s)$ for all $i \in \ZZ$ and $\X_s\supseteq \X_s^\Delta = \tilde{\A}_0(\X_s) \supseteq\tilde{\A}_1(\X_s)\supseteq\tilde{\A}_2(\X_s) \supseteq \cdots$ is a nested sequence of sets in the safe set $\X_s$.
\end{proof}

\begin{theorem}\label{thm:synthesisInvSet}
    Consider the CIS synthesis process. Upon the termination of Algorithm~\ref{al:synthesisInvSet}, indicated by the condition $\tilde{A}_{i+1}(\X_s) = \tilde{A}_{i}(\X_s)$, the output is either the empty set, $\varnothing$, or a CIS, $\C\subseteq\X_s$, expressed as a union of hyperboxes $\C\in\P_{\cup}(\X_\Delta)$.
\end{theorem}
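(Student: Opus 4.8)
The plan is to prove Theorem~\ref{thm:synthesisInvSet} by combining the structural results already established (Lemmas~\ref{lemma:admissible} and~\ref{lemma:admissible_nested}) with a finite-determination argument driven by the discreteness of $\P_{\cup}(\X_\Delta)$. First I would address what the output is, assuming termination; second I would argue that termination actually occurs in finitely many iterations.

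\textit{Characterizing the output.} Suppose Algorithm~\ref{al:synthesisInvSet} terminates at iteration $i$, so that $\tilde{\A}_{i+1}(\X_s) = \tilde{\A}_{i}(\X_s) =: \C$. By Lemma~\ref{lemma:admissible_nested}, every iterate satisfies $\tilde{\A}_i(\X_s)\in\P_{\cup}(\X_\Delta)$, so $\C\in\P_{\cup}(\X_\Delta)$ is a union of hyperboxes and $\C\subseteq\X_s^\Delta\subseteq\X_s$. If $\C=\varnothing$, we are in the first case. Otherwise $\C$ is non-empty, and I would invoke the recursion $\C=\tilde{\A}_{i+1}(\X_s)=\Q(\tilde{\A}_i(\X_s),\tilde{\A}_i(\X_s))=\Q(\C,\C)$ together with $\Q(\C,\C)\subseteq\R(\C,\C)$. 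By Definition~\ref{def:returnable_subset}, for every $x_k\in\C$ there exists an admissible $u_k\in\U$ with $x_{k+1}=f(x_k,u_k)\in\C$; the explicit control law is the $\pi(\cdot\,;\C)$ constructed in Remark~\ref{rmk:u_cis}. This feedback renders $\C$ forward invariant for the closed loop, so by Definition~\ref{def:ctrlInv} $\C$ is a CIS contained in $\X_s$.

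\textit{Finite determination.} The key is that the state-space quantization makes the collection of reachable iterates finite. Since $\X_\Delta$ consists of exactly $n_\Delta$ disjoint basis hyperboxes and, by Proposition~\ref{prop:ssq:unique}, each $\S\in\P_{\cup}(\X_\Delta)$ corresponds bijectively to an index set $\Delta(\S)\subseteq[n_\Delta]$, the algebra $\P_{\cup}(\X_\Delta)$ is a finite set of cardinality $2^{n_\Delta}$. By Lemma~\ref{lemma:admissible_nested} the iterates form a nested decreasing chain $\X_s^\Delta=\tilde{\A}_0(\X_s)\supseteq\tilde{\A}_1(\X_s)\supseteq\cdots$ all lying in $\P_{\cup}(\X_\Delta)$. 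Passing to index sets, $\Delta(\tilde{\A}_0)\supseteq\Delta(\tilde{\A}_1)\supseteq\cdots$ is a non-increasing chain of subsets of the finite set $[n_\Delta]$, hence their cardinalities $|\Delta(\tilde{\A}_i)|$ form a non-increasing sequence of non-negative integers bounded below by $0$. Such a sequence must stabilize after at most $n_\Delta$ strict decreases, so there exists a finite $i^\ast$ with $\Delta(\tilde{\A}_{i^\ast+1})=\Delta(\tilde{\A}_{i^\ast})$, i.e. $\tilde{\A}_{i^\ast+1}(\X_s)=\tilde{\A}_{i^\ast}(\X_s)$, which is exactly the termination condition of the \textbf{while} loop.

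The main obstacle I anticipate is not the finiteness count itself but ensuring that each call to $\Q$ \emph{also} terminates, so that a single outer iteration completes in finite time; otherwise the outer loop could stall before ever reaching the fixed point. This requires verifying that the inner \textbf{while} loop of Algorithm~\ref{al:synthesisUnderQ} halts: the function ${\tt partition\_hyperbox}$ strictly shrinks any undetermined hyperbox $\I_i^-$, and the stopping test $\max_{x,y\in\I_i^-}\norm{x-y}_\infty\le d_{\min}$ guarantees that a hyperbox cannot be refined below the basis resolution $d_{\min}$. Since every undetermined hyperbox lies in $\P_{\cup}(\X_\Delta)$ and can be subdivided only finitely many times before matching a single basis hyperbox $\X_\Delta^{(j)}$, the inner loop terminates. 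I would therefore state the finite-determination bound $i^\ast\le n_\Delta$ as a consequence of the chain argument and note that each iteration is itself finite, making the overall CIS synthesis process computationally well-defined and terminating.
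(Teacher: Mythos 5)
Your proof of the stated theorem is correct and follows essentially the same route as the paper: use Lemma~\ref{lemma:admissible_nested} to get $\C\in\P_{\cup}(\X_\Delta)$ and $\C\subseteq\X_s$, then use the fixed-point identity $\C=\Q(\C,\C)$ together with the one-step returnability guaranteed by Theorem~\ref{thm:one_step_return} and the control law of Remark~\ref{rmk:u_cis} to conclude $\C$ is a CIS per Definition~\ref{def:ctrlInv}. Your finite-determination argument is also sound, but note it is not needed for this particular statement (which is conditional on termination); the paper isolates it as a separate result, Theorem~\ref{thm:finite_t_termination}, where the sharper bound $N_T=\abs{\Delta(\X_s^{\Delta})}+1$ is obtained by starting the cardinality count at $\abs{\Delta(\X_s^{\Delta})}$ rather than $n_\Delta$.
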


\begin{proof}
    If $\C = \tilde{A}_{i+1}(\X_s) = \tilde{A}_{i}(\X_s) \neq \varnothing$, according to Lemma~\ref{lemma:admissible_nested}, this implies that $\C\subseteq\X_s$, $\C\in\P_{\cup}(\X_\Delta)$ and $\C = \Q(\C, \C)$. Subsequently, following a similar argument as presented in the proof of Theorem~\ref{lemma:admissible}, it can be established that for all $x_k \in \C = \Q(\C, \C)$, there must exist a hyperbox $\I_j^+ = \I (x_k; \tilde{\A}_{i+1}(\X_s))$ and a control $u_k = u (x_k; \tilde{\A}_{i+1}(\X_s) ) \in \mathcal{U}$ defined according to Remark~\ref{rmk:u_cis} ensuring that $x_{k+1} = f(x_k, u_k) \in \F(\I_j^+, \{u_k\}) \subseteq \T = \C$, as established in Theorem~\ref{thm:one_step_return}. Then, $\C$ is a CIS according to Definition~\ref{def:ctrlInv}.
\end{proof}

\begin{theorem}\label{thm:finite_t_termination}
    The CIS synthesis process is finitely-determined. In particular, the set recursion in main Algorithms~\ref{al:synthesisInvSet} will terminate within $N_T = \abs{\Delta(\X_s^{\Delta})}+1$ iterations where $\abs{\Delta(\X_s^{\Delta})}$ is the cardinality of the index set $\Delta(\X_s^{\Delta})\subseteq[n_\Delta]$ defined in Proposition~\ref{prop:ssq:unique}.
\end{theorem}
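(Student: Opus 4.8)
The plan is to reduce the whole statement to the strict monotonicity of a single nonnegative integer attached to each iterate, namely the number of basis hyperboxes composing it. By Lemma~\ref{lemma:admissible_nested}, the recursion in Algorithm~\ref{al:synthesisInvSet} produces a nested sequence $\X_s^{\Delta} = \tilde{\A}_0(\X_s) \supseteq \tilde{\A}_1(\X_s) \supseteq \cdots$ with every iterate living in $\P_{\cup}(\X_\Delta)$. Proposition~\ref{prop:ssq:unique} then assigns to each iterate a \emph{unique} index set $\Delta(\tilde{\A}_i(\X_s)) \subseteq [n_\Delta]$, so I would set $c_i := \abs{\Delta(\tilde{\A}_i(\X_s))} \in \ZZ^{\geq 0}$, with $c_0 = \abs{\Delta(\X_s^{\Delta})}$.

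First I would transfer the set nesting down to the index sets. Because the basis hyperboxes are pairwise disjoint and nondegenerate, $\tilde{\A}_{i+1}(\X_s) \subseteq \tilde{\A}_i(\X_s)$ forces $\Delta(\tilde{\A}_{i+1}(\X_s)) \subseteq \Delta(\tilde{\A}_i(\X_s))$: any basis hyperbox $\X_\Delta^{(j)}$ sitting inside $\tilde{\A}_i(\X_s)$ cannot be covered by the other disjoint pieces, so $j$ must already index one of them, i.e. $j \in \Delta(\tilde{\A}_i(\X_s))$. Hence $c_{i+1}\le c_i$, and combined with the nesting, $\tilde{\A}_{i+1}(\X_s) = \tilde{\A}_i(\X_s)$ holds \emph{iff} $c_{i+1} = c_i$. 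Conversely, if the loop has not yet terminated, i.e. $\tilde{\A}_{i+1}(\X_s) \subsetneq \tilde{\A}_i(\X_s)$, then some index lies in $\Delta(\tilde{\A}_i(\X_s)) \setminus \Delta(\tilde{\A}_{i+1}(\X_s))$, giving the strict drop $c_{i+1} \leq c_i - 1$.

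With these two facts the count is immediate. Let $i^\ast$ be the first index at which the termination test $\tilde{\A}_{i^\ast+1}(\X_s) = \tilde{\A}_{i^\ast}(\X_s)$ succeeds. For every $i < i^\ast$ the strict-drop bound applies, so $0 \leq c_{i^\ast} \leq c_0 - i^\ast = \abs{\Delta(\X_s^{\Delta})} - i^\ast$, whence $i^\ast \leq \abs{\Delta(\X_s^{\Delta})}$. The loop body (the evaluation of $\Q$ in line~4) runs once for each computed iterate $\tilde{\A}_1(\X_s),\dots,\tilde{\A}_{i^\ast+1}(\X_s)$, i.e. $i^\ast + 1 \leq \abs{\Delta(\X_s^{\Delta})} + 1 = N_T$ times, which is the claimed bound. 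To confirm the algorithm behaves on the terminal iterate and that the worst case is attainable, I would observe $\Q(\varnothing,\varnothing) = \varnothing$ (Algorithm~\ref{al:synthesisUnderQ} returns $\tilde{\R}=\varnothing$ when handed no hyperboxes), so the empty set is a fixed point; a run shedding exactly one hyperbox per step reaches it after $\abs{\Delta(\X_s^{\Delta})}$ strict decreases, needing one further iteration to detect equality.

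The routine bookkeeping aside, the step I expect to require the most care is the monotonicity $\tilde{\A}_{i+1}(\X_s)\subseteq\tilde{\A}_i(\X_s) \Rightarrow \Delta(\tilde{\A}_{i+1}(\X_s)) \subseteq \Delta(\tilde{\A}_i(\X_s))$ together with its strict counterpart, since these rest essentially on the disjointness and nondegeneracy of the basis hyperboxes furnished by the SSQ process rather than on the dynamics; they are precisely what lets a purely set-theoretic nesting be read off as a strictly decreasing integer, which is the engine of finite determination. The accompanying subtlety is the ``$+1$'' in $N_T$, which must be charged to the single extra iteration that merely certifies the fixed point once the cardinality can no longer drop.
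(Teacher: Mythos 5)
Your counting argument for the outer recursion is essentially the paper's own proof: both transfer the nested sequence $\tilde{\A}_0(\X_s)\supseteq\tilde{\A}_1(\X_s)\supseteq\cdots$ from Lemma~\ref{lemma:admissible_nested} to the index sets of Proposition~\ref{prop:ssq:unique} and drive the bound from the strict decrease of $\abs{\Delta(\tilde{\A}_i(\X_s))}$ until the fixed point is certified one iteration later. You are in fact more careful than the paper on the one step that deserves care, namely why $\tilde{\A}_{i+1}(\X_s)\subseteq\tilde{\A}_i(\X_s)$ forces $\Delta(\tilde{\A}_{i+1}(\X_s))\subseteq\Delta(\tilde{\A}_i(\X_s))$ via disjointness of the basis hyperboxes; the paper simply asserts this equivalence.

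The one genuine omission is that the theorem's first sentence claims the entire CIS synthesis process is finitely determined, and each pass of the outer loop invokes Algorithm~\ref{al:synthesisUnderQ}, whose internal \textbf{while}-loop repeatedly partitions failed hyperboxes and could in principle fail to halt. The paper's proof spends its first two steps on exactly this: Algorithm~\ref{al:isReturnable} terminates because its input is a finite union of hyperboxes, and the \textbf{while}-loop of Algorithm~\ref{al:synthesisUnderQ} terminates because every partition stays in $\P_{\cup}(\X_\Delta)$ and each failed hyperbox is eventually reduced to a basis hyperbox of side length $d_{\min}$, at which point it is removed from $\I^-$, so $\I^-$ is emptied in finitely many rounds. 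Your bound $i^\ast+1\le N_T$ counts outer iterations but silently assumes each one completes; you should add the (short) argument that the refinement in line~9 of Algorithm~\ref{al:synthesisUnderQ} bottoms out at the SSQ resolution, which is what makes each evaluation of $\Q$ a finite computation.
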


\begin{proof}
    Firstly, Algorithm~\ref{al:isReturnable} is guaranteed to terminate since every input $\I \in \P_{\cup}(\X_\Delta)$ is a finite collection of hyperboxes $\I_i$.
    Secondly, the \textbf{while}-loop in Algorithm~\ref{al:synthesisUnderQ} will also terminate within a finite number of iterations because the hyperboxes $\I^-_i \subseteq \I^-$ will eventually be reduced to basis hyperboxes. Any hyperboxes that still fail the ${\tt returnable\_verification}$ procedure will remain in $\I^-$ and be removed in the next iteration. This process ensures that $\I^- = \varnothing$, leading to the termination of the algorithm.
    
    Lastly, in Algorithm~\ref{al:synthesisInvSet}, the sequence $\tilde{\A}_0(\X_s) \supseteq \tilde{\A}_1(\X_s) \supseteq \tilde{\A}_2(\X_s) \supseteq \cdots$ forms a nested sequence of sets in $\P_{\cup}(\X_\Delta)$, as established by Lemma~\ref{lemma:admissible_nested}. Equivalently, the index sets defined in Proposition~\ref{prop:ssq:unique}, $\Delta(\tilde{\A}_i(\X_s)) \subseteq [n_\Delta]$ for $i = 0,1,2,\dots$, also form a nested sequence, i.e., $\Delta(\tilde{\A}_0(\X_s)) \supseteq \Delta(\tilde{\A}_1(\X_s)) \supseteq \Delta(\tilde{\A}_2(\X_s)) \supseteq \cdots$.  
    Since Algorithm~\ref{al:synthesisInvSet} terminates when $\tilde{\A}_i(\X_s) = \tilde{\A}_{i+1}(\X_s)$, which is equivalent to $\Delta(\tilde{\A}_i(\X_s)) = \Delta(\tilde{\A}_{i+1}(\X_s))$, we consider the worst-case scenario where the sequence is strictly nested, namely, $\Delta(\tilde{\A}_0(\X_s)) \supsetneq \Delta(\tilde{\A}_1(\X_s)) \supsetneq \Delta(\tilde{\A}_2(\X_s)) \supsetneq \cdots$.
    This implies the following inequality, $\abs{\Delta(\X_s^{\Delta})} = \abs{\Delta(\tilde{\A}_0(\X_s))} > \abs{\Delta(\tilde{\A}_1(\X_s))} > \abs{\Delta(\tilde{\A}_2(\X_s))} > \cdots \geq 0$, where $\tilde{\A}_0(\X_s) := \X_s^{\Delta}$ in Algorithm~\ref{al:synthesisInvSet}.
    This sequence will ultimately reach $|\Delta(\tilde{\A}_i(\X_s))| = 0$ for some $i \in \{1,2,\dots, |\Delta(\X_s^{\Delta})|\}$, implying that $\tilde{\A}_i(\X_s) = \varnothing$. The maximum number of iterations before reaching this condition is equal to $|\Delta(\X_s^{\Delta})|$.  
    Then, at the next iteration $N_T = |\Delta(\X_s^{\Delta})| + 1$, we have $\tilde{\A}_i(\X_s) = \tilde{\A}_{i+1}(\X_s) = \varnothing$, at which point the algorithm terminates.
\end{proof}

\begin{remark}\label{rmk:finer_resol_larger_iter}
    The resolution of the SSQ process can be adjusted through $d_{\min}$; smaller values of $d_{\min}$ result in an SSQ with higher resolution, potentially yielding larger CISs but at the cost of producing a lager $N_T$ and requiring more iterations. 
    We also note that, for a fixed $d_{\min}$, $N_T$ grows exponentially with the state dimensionality. Although the optimization problem \eqref{eq:synthesis} scales linearly with the state space dimension, the overall CIS synthesis process exhibits worst-case exponential time complexity, even though it is conducted offline.
    Nonetheless, we note that the \textbf{for}-loops in the CIS synthesis process can be accelerated through parallel computing, allowing multiple hyperboxes to be processed simultaneously. This approach can significantly reduce computation time, when handling high-dimensional state spaces, by leveraging multi-core processors or distributed computing frameworks.
\end{remark}
\section{Model Predictive Control}\label{sec:methodCtrl}
In this section, we present the results that address Problem~\ref{problem:2}. Specifically, we reformulate Problem~\ref{problem:2} as a mathematical program in Sec.~\ref{method:mpc:raw}, leveraging the results outlined in Sec.~\ref{sec:methodSyn}. We analyse the reformulation's theoretical properties, focusing on aspects such as safety guarantees and recursive feasibility. Additionally, to reduce the computational burden, we develop an alternative MPC formulation with fewer decision variables and constraints that preserves the theoretical properties in Sec.~\ref{method:mpc:misc}. Techniques for warm-starting these mathematical programs with feasible initial guesses are also provided in Sec.~\ref{method:mpc:misc}.

\subsection{Model Predictive Control and Recursive Feasibility}\label{method:mpc:raw}

Given $\X$, $\U$, and $\X_s$ satisfying Assumption~\ref{assume:x_u_box}, the SSQ process, and the CIS synthesis process, we obtain a control invariant set $\C \subseteq \X_s$. Here, $\C \in \P_{\cup}(\X_\Delta)$ is a union of hyperboxes, as established in Theorem~\ref{thm:synthesisInvSet}.  
Hence, with a slight abuse of notation, the CIS $\C$ within the state space $\X$ can be expressed as $\C = \X \backslash \O$, where $\O$ is a finite union of open hyperboxes, $\O = \bigcup_{m=1}^{n_o} ]\underline{o}^{(m)}, \overline{o}^{(m)}[$. Subsequently, Problem~\ref{problem:2} can be formulated as the following Mixed-Integer Linearly Constrained MPC (MILC-MPC) problem.

\begin{subequations}\label{eq:cocp:rec}
\begin{equation}\label{eq:cocp:rec:objective}
    \min\limits_{\substack{
    z^{(i)}_{k+n},~ \hat{z}^{(i)}_{k+n},~\sigma_{k+n}^{(i)},~i=1,\dots,\ell-1,
    \\
    \phi^{(m)}_{k+n},~\psi^{(m)}_{k+n},~m=1,\dots,n_o,
    \\
    x_{k|n+1},~u_{k+n},~n=0,\dots,N-1, 
    }}
    K(x_{k|N}) + \sum_{i=0}^{N-1} l_{k+n}(x_{k|n}, u_{k+n}), 
\end{equation}
\text{subject to: \vspace{-2\baselineskip}}
\begin{equation}\label{eq:cocp:rec:x0}
     x_{k|0} = x_0,
\end{equation}
\begin{equation}\label{eq:cocp:rec:state}
    \underline{x} \leq x_{k|n+1} \leq \overline{x},
\end{equation}
\begin{equation}\label{eq:cocp:rec:ctrl}
    \underline{u} \leq u_{k+n} \leq \overline{u},
\end{equation}
\begin{equation}\label{eq:cocp:rec:nndm:linear}
\left\{
\begin{array}{c}
    x_{k|n+1} = W^{(\ell)}z^{(\ell-1)}_{k+n} +B^{(\ell)}, 
    \\
    \hat{z}^{(i)}_{k+n} = W^{(i)}z^{(i-1)}_{k+n} +B^{(i)},
    ~
    i=1,\dots,\ell-1,   
    \\
    z^{(0)}_{k+n} = [x_{k|n}^T, u_{k+n}^T]^T,
\end{array}
\right.
\end{equation}
\begin{equation}\label{eq:cocp:rec:nndm:relu}
\left\{
\begin{array}{c}
    z_{k+n}^{(i)} \geq 0,
    ~
    z_{k+n}^{(i)} \geq \hat{z}_{k+n}^{(i)},
    \\
    z_{k+n}^{(i)} \leq \hat{z}_{k+n}^{(i)} - \diag(\underline{\hat{z}}^{(i)}_{k+n}) (\one_{n_i\times 1}-\sigma_{k+n}^{(i)}),
    \\
    z_{k+n}^{(i)} \leq \diag(\overline{\hat{z}}^{(i)}_{k+n})\sigma_{k+n}^{(i)},
    \\
    \sigma_{k+n}^{(i)}\in \{0,1\}^{n_i},
    ~
    i=1,\dots,\ell-1,
\end{array}
\right.
\end{equation}
\begin{equation}\label{eq:cocp:rec:safe}
    {\tt MILC\_Inc}
    \brk{
    \hspace{-0.5em}
    \begin{array}{c}
         x_{k|n+1}, x_{k|n+1}, \\
         \{{\phi^{(m)}_{k+n}, \psi^{(m)}_{k+n}}\}_{m=1}^{n_o}
    \end{array}
    \hspace{-0.5em}
    ; 
    \hspace{-0.5em}
    \begin{array}{c}
         \underline{x}, \overline{x}, \X \backslash\C = \X\cap
         \\
         \brk{\cup_{m=1}^{n_o} ]\underline{o}^{(m)}, \overline{o}^{(m)}[}
    \end{array}
    \hspace{-0.5em}
    }
\end{equation}
\begin{equation}\label{eq:cocp:rec:N}
    n=0,\dots,N-1.
\end{equation}
\end{subequations}
In the constraints above, the subscript $*_{k+n}$ refers to decision variables, while $x_{k|n+1}$ denotes the state prediction, corresponding to the $(n+1)${th} prediction horizon, where $n = 0, \dots, N-1$. The vector $\one_{n_i \times 1} \in \RR^{n_i}$ has a dimension of $n_i$ and consists entirely of ones. The decision variable $\sigma_{k+n}^{(i)} \in \{0,1\}^{n_i}$ is a binary vector of dimension $n_i$, where each entry is a binary integer taking values from $\{0,1\}$. Here, $\phi^{(m)}_{k+n}, \psi^{(m)}_{k+n} \in \{0,1\}^{n_x}$ are also binary vectors of dimension $n_x$. 
Similar to \eqref{eq:synthesis:relu:define} and \eqref{eq:rmk_nn_bounds}, vectors $\underline{\hat{z}}^{(i)}_{k+n}$ and $\overline{\hat{z}}^{(i)}_{k+n}$ in \eqref{eq:cocp:rec:nndm:relu} for $n=0,\dots, N-1$ serve as bounds on neuron values $\hat{z}^{(i)}_{k+n} \in [\underline{\hat{z}}^{(i)}_{k+n}, \overline{\hat{z}}^{(i)}_{k+n}]$. These bounds are used to limit the search region and can be recursively computed according to 
\begin{equation}\label{eq:mpc_nn_bounds}
\begin{array}{c}
    \bcur{[\underline{\hat{z}}^{(i)}_{k+n}, \overline{\hat{z}}^{(i)}_{k+n}]}_{i=1}^{\ell}
    \\
    = {\tt reachable\_hyperboxes} ([\underline{\hat{z}}^{(\ell)}_{k+n-1}, \overline{\hat{z}}^{(\ell)}_{k+n-1}], \U),
    \\
    n=0,\dots,N-1,
    ~
    \underline{\hat{z}}^{(\ell)}_{k-1} = \overline{\hat{z}}^{(\ell)}_{k-1}= x_{k|0}.
\end{array}
\end{equation}
The constraints \eqref{eq:cocp:rec:ctrl}, \eqref{eq:cocp:rec:nndm:linear}-\eqref{eq:cocp:rec:nndm:relu}, and \eqref{eq:cocp:rec:safe} enforce the constraints \eqref{eq:cocp:ctrl}, \eqref{eq:cocp:dyna}, and \eqref{eq:cocp:state}, respectively, in the Problem~\ref{problem:2}.

In particular, the constraints \eqref{eq:cocp:rec:state} and \eqref{eq:cocp:rec:ctrl} ensure the feasibility of the control input $u_{k+n}$ and predicted state $x_{k|n+1}$. 
The constraints \eqref{eq:cocp:rec:nndm:linear} represent the linear structures in the NNDS as described in \eqref{eq:nn_structure:out}, \eqref{eq:nn_structure:linear} and \eqref{eq:nn_structure:in}. Meanwhile, the constraints \eqref{eq:cocp:rec:nndm:relu} utilize integer variables to encode the ${\tt ReLU}$ activation function, as originally introduced in \cite{tjeng2017evaluating}. 
With constraints \eqref{eq:cocp:rec:state} implying $x_{k|n+1} \in \X$ and $\X\cap \O=\X\backslash\C$, the remaining safety constraints \eqref{eq:cocp:rec:safe} enforce $x_{k|n+1} \in \C\subseteq \X_s$ by ensuring $x_{k|n+1} \notin \O$, thereby guaranteeing that it remains within the safe set $\X_s$. 
The detailed properties and interpretations of these constraints are provided in the following propositions and their corresponding proofs.

\begin{lemma}\label{lemma:pointwise_relu}
     If a solution to the MILC-MPC \eqref{eq:cocp:rec} exists, then the values of the predicted state trajectory $\{x_{k|n}\}_{n=0}^{N}$ will match the actual state trajectory $\{x_{k+n}\}_{n=0}^{N}$, which evolves according to the control inputs $\{u_{k+n}\}_{n=0}^{N-1}$ and NNDS $f$ in \eqref{eq:nn_structure}, i.e., $x_{k|n+1} = x_{k+n+1} = f(x_{k+n}, u_{k+n})$ for $n = 0, \dots, N-1$, with initial condition $x_{k|0} =x_k = x_0$.
\end{lemma}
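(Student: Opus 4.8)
The plan is to prove the statement by induction on the prediction index $n$, showing $x_{k|n} = x_{k+n}$ for all $n = 0,\dots,N$, where $\{x_{k+n}\}$ is the true NNDS trajectory generated by the optimizer's control sequence $\{u_{k+n}\}$. The base case $x_{k|0}=x_0=x_k$ is immediate from the equality constraint \eqref{eq:cocp:rec:x0}. For the inductive step I would assume $x_{k|n}=x_{k+n}$ and prove that the constraints \eqref{eq:cocp:rec:nndm:linear}--\eqref{eq:cocp:rec:nndm:relu} force the MPC's internal variables to reproduce exactly the forward pass of $f$ at $(x_{k+n},u_{k+n})$, so that $x_{k|n+1}=f(x_{k+n},u_{k+n})=x_{k+n+1}$.

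The crux is to verify that the big-M encoding \eqref{eq:cocp:rec:nndm:relu}, introduced in \cite{tjeng2017evaluating} and the pointwise counterpart of the interval encoding of Lemma~\ref{lemma:synthesis:relu:equivalent}, forces the exact componentwise relation $z^{(i)}_{k+n}=\max\{0,\hat{z}^{(i)}_{k+n}\}$ in every feasible solution. I would establish this by a case split on each binary $\sigma^{(i)}_{k+n,j}\in\{0,1\}$. If $\sigma^{(i)}_{k+n,j}=0$, the $j$-th component of $z^{(i)}_{k+n}\le\diag(\overline{\hat{z}}^{(i)}_{k+n})\sigma^{(i)}_{k+n}$ becomes $z^{(i)}_{k+n,j}\le 0$, which with $z^{(i)}_{k+n,j}\ge 0$ yields $z^{(i)}_{k+n,j}=0$, while $z^{(i)}_{k+n,j}\ge\hat{z}^{(i)}_{k+n,j}$ forces $\hat{z}^{(i)}_{k+n,j}\le 0$, hence $z^{(i)}_{k+n,j}=\max\{0,\hat{z}^{(i)}_{k+n,j}\}$. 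If $\sigma^{(i)}_{k+n,j}=1$, the $j$-th component of $z^{(i)}_{k+n}\le\hat{z}^{(i)}_{k+n}-\diag(\underline{\hat{z}}^{(i)}_{k+n})(\one_{n_i\times 1}-\sigma^{(i)}_{k+n})$ becomes $z^{(i)}_{k+n,j}\le\hat{z}^{(i)}_{k+n,j}$, which with $z^{(i)}_{k+n,j}\ge\hat{z}^{(i)}_{k+n,j}$ gives $z^{(i)}_{k+n,j}=\hat{z}^{(i)}_{k+n,j}$, while $z^{(i)}_{k+n,j}\ge 0$ forces $\hat{z}^{(i)}_{k+n,j}\ge 0$, again giving $z^{(i)}_{k+n,j}=\max\{0,\hat{z}^{(i)}_{k+n,j}\}$. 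The leftover inequalities carrying $\underline{\hat{z}}^{(i)}_{k+n},\overline{\hat{z}}^{(i)}_{k+n}$ are merely consistency conditions that a feasible point already satisfies (the recursive bounds \eqref{eq:mpc_nn_bounds} being valid over-approximations by Theorem~\ref{thm:reachability}), and they never relax the two equalities just derived.

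Given the exact ReLU relation, I would finish the inductive step by chaining the equality constraints of \eqref{eq:cocp:rec:nndm:linear}: from $z^{(0)}_{k+n}=[x_{k|n}^T,u_{k+n}^T]^T=[x_{k+n}^T,u_{k+n}^T]^T$ (inductive hypothesis), each $\hat{z}^{(i)}_{k+n}=W^{(i)}z^{(i-1)}_{k+n}+B^{(i)}$ together with $z^{(i)}_{k+n}=\max\{0,\hat{z}^{(i)}_{k+n}\}$ reproduces, layer by layer, the recursion \eqref{eq:nn_structure:relu}--\eqref{eq:nn_structure:linear}, and the output equality $x_{k|n+1}=W^{(\ell)}z^{(\ell-1)}_{k+n}+B^{(\ell)}$ matches \eqref{eq:nn_structure:out}; thus $x_{k|n+1}=f(x_{k+n},u_{k+n})=x_{k+n+1}$. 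The main thing to keep straight is the contrast with the set-valued propagation of \eqref{eq:synthesis}: because $x_0$ is a single point, every neuron variable here is a scalar rather than a hyperbox and the dynamics are encoded without over-approximation, so the predicted trajectory coincides with---rather than merely contains---the true NNDS trajectory.
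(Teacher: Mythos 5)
Your proposal is correct and follows essentially the same route as the paper: the paper's proof likewise fixes the initial condition, notes that \eqref{eq:cocp:rec:nndm:linear} reproduces the linear layers exactly, and invokes the big-M ReLU encoding of \cite{tjeng2017evaluating} with the same $\sigma^{(i)}_{k+n,j}\in\{0,1\}$ case semantics you work out. The only difference is that you open the black box and carry out the two-case argument explicitly (which is a valid and complete verification), whereas the paper defers that equivalence to the citation.
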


\begin{proof}
    With the initial condition $x_{k|0} = x_0$ fixed in \eqref{eq:cocp:rec:x0}, the constraints \eqref{eq:cocp:rec:nndm:linear} precisely represent the linear transformations in the NNDS as defined by \eqref{eq:nn_structure:out}, \eqref{eq:nn_structure:linear} and \eqref{eq:nn_structure:in}. As well-established in \cite{tjeng2017evaluating}, the constraints \eqref{eq:cocp:rec:nndm:relu} is equivalent to ${\tt ReLU}$ activation $z_{k+n}^{(i)} = \max \{0, \hat{z}_{k+n}^{(i)}\}$ defined in \eqref{eq:nn_structure:relu}. Specifically, the $j${th} element of $\sigma^{(i)}_{k+n}$ being zero implies that the $j${th} element of $z_{k+n}^{(i)}$ is also zero, i.e., $\sigma^{(i)}_{k+n,j} = 0\to z_{k+n,j}^{(i)}=0$, meaning that the ${\tt ReLU}$ activation on the $j${th} neuron in the $i${th} layer is inactive. Conversely, if $\sigma^{(i)}_{k+n,j} = 1$, the ${\tt ReLU}$ activation on that neuron is active. 
\end{proof}

\begin{lemma}\label{lemma:cocp_invariance}
    If a solution to the MILC-MPC \eqref{eq:cocp:rec} exists, then the control sequence $\{u_{k+n}\}_{n=0}^{N-1}$, when applied to the dynamics $f$ in \eqref{eq:nn_structure}, ensure that the actual state trajectory satisfies $x_{k+n+1} \in \C\subseteq \X_s$ for all $n = 0, \dots, N-1$.
\end{lemma}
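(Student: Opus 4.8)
The plan is to combine the trajectory-matching result of Lemma~\ref{lemma:pointwise_relu} with the set-inclusion characterization of Lemma~\ref{lemma:synthesis:subset}, the latter applied to the degenerate single-point hyperbox $[x_{k|n+1}, x_{k|n+1}]$ that appears in the safety constraint \eqref{eq:cocp:rec:safe}. Assuming a solution to the MILC-MPC \eqref{eq:cocp:rec} exists, I would first invoke Lemma~\ref{lemma:pointwise_relu} to establish that the predicted trajectory coincides with the actual closed-loop trajectory generated by $f$, i.e., $x_{k+n+1} = x_{k|n+1}$ for all $n = 0, \dots, N-1$. This reduces the claim to showing that $x_{k|n+1} \in \C$ for each such $n$.

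Next, I would observe that the safety constraint \eqref{eq:cocp:rec:safe} invokes ${\tt MILC\_Inc}$ with both hyperbox-bound arguments set to the same vector $x_{k|n+1}$, so that the quantities playing the roles of $\underline{x}_{k+1}$ and $\overline{x}_{k+1}$ in the notation of \eqref{eq:synthesis:subset:define} coincide and the associated hyperbox degenerates to the single point $\{x_{k|n+1}\}$. The state constraint \eqref{eq:cocp:rec:state} simultaneously guarantees $x_{k|n+1} \in [\underline{x}, \overline{x}] = \X$, so the hypothesis $[\underline{x}_{k+1}, \overline{x}_{k+1}] \subseteq \X$ required by Lemma~\ref{lemma:synthesis:subset} is met. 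Applying that lemma with target set $\T = \C$, using the complement representation $\X \backslash \C = \X \cap (\cup_{m=1}^{n_o} ]\underline{o}^{(m)}, \overline{o}^{(m)}[)$ supplied in \eqref{eq:cocp:rec:safe}, then yields $\{x_{k|n+1}\} = [x_{k|n+1}, x_{k|n+1}] \subseteq \C$, that is, $x_{k|n+1} \in \C$.

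Finally, I would chain these facts with the containment $\C \subseteq \X_s$ from Theorem~\ref{thm:synthesisInvSet} to conclude $x_{k+n+1} = x_{k|n+1} \in \C \subseteq \X_s$ for all $n = 0, \dots, N-1$, which is exactly the assertion of the lemma. The main point requiring care—rather than a deep obstacle—is the specialization of Lemma~\ref{lemma:synthesis:subset}, whose statement is phrased for a general hyperbox, to the degenerate single-point case, together with checking that the complement decomposition of $\C$ used in \eqref{eq:cocp:rec:safe} matches the representation $\T = \X \backslash (\cup_{m} ]\underline{o}^{(m)}, \overline{o}^{(m)}[)$ assumed in that lemma; both follow directly once $\C$ is written, via Theorem~\ref{thm:synthesisInvSet}, as a finite union of hyperboxes in $\P_{\cup}(\X_\Delta)$.
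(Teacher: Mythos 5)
Your proposal is correct and follows essentially the same route as the paper's proof: invoke Lemma~\ref{lemma:pointwise_relu} to identify the predicted and actual trajectories, then apply Lemma~\ref{lemma:synthesis:subset} to the degenerate hyperbox $[x_{k|n+1},x_{k|n+1}]$ appearing in constraint \eqref{eq:cocp:rec:safe} with $\T=\C$, and conclude via $\C\subseteq\X_s$. Your additional care in verifying the hypotheses of Lemma~\ref{lemma:synthesis:subset} (the containment in $\X$ via \eqref{eq:cocp:rec:state} and the complement representation of $\C$) only makes the argument more explicit than the paper's version.
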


\begin{proof}
Since $x_{k|n} = x_{k+n}$ for all $n = 0, \dots, N$ established in Lemma~\ref{lemma:pointwise_relu}, $[x_{k|n+1},x_{k|n+1}]\subseteq \T$ and $\T=\C$ can be established from Lemma~\ref{lemma:synthesis:subset} with the satisfaction of constraints \eqref{eq:cocp:rec:safe} which implies $x_{k+n+1}=x_{k|n+1}\in\C$. Consequently, the state trajectory remains within the CIS $\C$ and adheres to the safety requirements defined by $\X_s$.
\end{proof}

\begin{theorem}\label{thm:cocp_recursive_feasible}
If the initial state $ x_{k|0} = x_0 \in \C $, where $ \C \subseteq \X_s $, derived from the CIS synthesis process, is a non-empty set, then the optimization problem~\eqref{eq:cocp:rec} is recursively feasible. Namely, there exists an admissible control sequence $ \{u_{k+n}\}_{n=0}^{N-1} \in\U$ that serves as a solution to \eqref{eq:cocp:rec} when $ x_{k|0} = x_0 \in \C $. Furthermore, this control sequence ensures state trajectory $ \{x_{k+n}\}_{n=1}^{N}\in\C $, and \eqref{eq:cocp:rec} remains feasible for all initial states along the trajectory $ \{x_{k+n}\}_{n=1}^{N} $.
\end{theorem}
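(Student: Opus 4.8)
The plan is to reduce the entire statement to one self-contained claim and then let the invariance lemma do the recursive bookkeeping. The claim is: for every $x \in \C$, the MILC-MPC~\eqref{eq:cocp:rec} with $x_{k|0} = x$ admits a feasible point. Granting this, recursive feasibility is immediate. If a solution exists at step $k$ with $x_{k|0} = x_k \in \C$, then Lemma~\ref{lemma:cocp_invariance} guarantees that the actual successor states satisfy $x_{k+n} \in \C$ for all $n = 1,\dots,N$; in particular $x_{k+1} \in \C$, so the claim applies again at step $k+1$, and likewise at every state along $\{x_{k+n}\}_{n=1}^{N}$. This also delivers the remaining assertions of the theorem: existence of the admissible sequence is the claim itself, and $\{x_{k+n}\}_{n=1}^{N} \in \C$ is Lemma~\ref{lemma:cocp_invariance}. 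Thus the only real work is establishing the claim.

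To prove the claim I would build an explicit feasible point from the invariant structure of $\C$. By Theorem~\ref{thm:synthesisInvSet} the set $\C \in \P_{\cup}(\X_\Delta)$ satisfies the fixed-point identity $\C = \Q(\C,\C)$, and Remark~\ref{rmk:u_cis} supplies a control law $\pi(\cdot\,;\C)$ with $\pi(x;\C) \in \U$ and $f(x,\pi(x;\C)) \in \C$ for every $x \in \C$. Starting from $x_{k|0} = x \in \C$ and setting $u_{k+n} = \pi(x_{k|n};\C)$, $x_{k|n+1} = f(x_{k|n}, u_{k+n})$ for $n = 0,\dots,N-1$, produces a trajectory with $x_{k|n} \in \C \subseteq \X_s$ and $u_{k+n} \in \U$ throughout. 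It then remains to complete this primal trajectory to a full assignment of the auxiliary variables and to check each constraint family: \eqref{eq:cocp:rec:x0} holds by initialization; \eqref{eq:cocp:rec:state} and~\eqref{eq:cocp:rec:ctrl} hold since $\C \subseteq \X_s \subseteq \X$ and $u_{k+n}\in\U$; and the neuron values $z^{(i)}_{k+n}, \hat{z}^{(i)}_{k+n}$ are fixed by forward-propagating each pair $(x_{k|n},u_{k+n})$ through~\eqref{eq:nn_structure}, so the linear constraints~\eqref{eq:cocp:rec:nndm:linear} hold by definition.

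The binary variables require more care, and this is where I expect the main obstacle. For the {\tt ReLU} constraints~\eqref{eq:cocp:rec:nndm:relu}, I would set each $\sigma^{(i)}_{k+n,j}$ according to the activation pattern of the propagated neuron $\hat{z}^{(i)}_{k+n,j}$ and invoke the equivalence of Lemma~\ref{lemma:pointwise_relu} (following \cite{tjeng2017evaluating}) to conclude that~\eqref{eq:cocp:rec:nndm:relu} is consistently satisfiable; here one must also verify that the recursively computed bounds $\underline{\hat z}^{(i)}_{k+n}, \overline{\hat z}^{(i)}_{k+n}$ from~\eqref{eq:mpc_nn_bounds} are valid over-approximations containing the realized neuron values, so they do not exclude the constructed trajectory. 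For the safety constraints~\eqref{eq:cocp:rec:safe}, since each successor satisfies $x_{k|n+1} \in \C = \X \backslash \O$, Lemma~\ref{lemma:synthesis:subset} guarantees binary values $\phi^{(m)}_{k+n}, \psi^{(m)}_{k+n}$ fulfilling the {\tt MILC\_Inc} constraints. Assembling these assignments yields a feasible point of~\eqref{eq:cocp:rec}, proving the claim and hence the theorem. The delicate point is not the existence of a safe trajectory, which the control invariance of $\C$ gives for free, but the verification that the exact, decision-variable-coupled binary encodings of the {\tt ReLU} activations and of the set inclusion can be simultaneously realized along that trajectory; both reductions are exactly what Lemmas~\ref{lemma:pointwise_relu} and~\ref{lemma:synthesis:subset} provide.
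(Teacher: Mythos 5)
Your proposal is correct and follows essentially the same route as the paper's proof: use the control invariance of $\C$ (via Theorem~\ref{thm:synthesisInvSet} and the control law of Remark~\ref{rmk:u_cis}) to construct a safe trajectory over the horizon, then invoke Lemma~\ref{lemma:pointwise_relu} for the exactness of the {\tt ReLU} binary encoding and Lemma~\ref{lemma:synthesis:subset} for the {\tt MILC\_Inc} constraints, and propagate feasibility to subsequent time steps via Lemma~\ref{lemma:cocp_invariance}. Your explicit check that the recursively computed bounds $\underline{\hat z}^{(i)}_{k+n}, \overline{\hat z}^{(i)}_{k+n}$ in \eqref{eq:mpc_nn_bounds} must contain the realized neuron values is a detail the paper's proof leaves implicit, but it does not change the argument.
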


\begin{proof}
According to Theorem~\ref{thm:synthesisInvSet}, $\C$ is a CIS, which implies that there exists a control input $u_k \in \U$ such that $x_{k|1} = f(x_{k|0}, u_k) \subseteq \C$ when $x_{k|0} \in \C$. This immediately validates the feasibility of constraints \eqref{eq:cocp:rec:state}-\eqref{eq:cocp:rec:nndm:relu}, based on the results of Lemma~\ref{lemma:pointwise_relu}, for the first prediction horizon at $n = 0$. 
Furthermore, $x_{k|1} \in \C$ implies $[x_{k|1},x_{k|1}]\subseteq\T$ with $\T=\C$ leading to the feasibility of constraints \eqref{eq:cocp:rec:safe} for $n = 0$ according to the equivalency established by Lemma~\ref{lemma:synthesis:subset}.
By a similar argument, the feasibility of the constraints at the prediction horizon $n = 1$ can be established, as $x_{k|1} \subseteq \C$ guarantees $x_{k|2} \subseteq \C$. This recursive reasoning not only proves the feasibility of MILC-MPC~\eqref{eq:cocp:rec} for $n=0,\dots,N-1$, but also ensures its feasibility for all $n \in \mathbb{Z}$.
\end{proof}

\subsection{Computational Aspects: Receding-Horizon Scheme and Warm Start}\label{method:mpc:misc}

Finally, we also note that the MILC-MPC~\eqref{eq:cocp:rec} can be further simplified to the following MILC-MPC problem:
\begin{subequations}\label{eq:cocp:simple}
\begin{equation}\label{eq:cocp:simple:objective}
    \min\limits_{\substack{
    z^{(i)}_{k+n},~ \hat{z}^{(i)}_{k+n},~\sigma_{k+n}^{(i)},~i=1,\dots,\ell-1,
    \\
    x_{k|n+1},~u_{k+n},~n=0,\dots,N-1, 
    \\
    \phi^{(m)},~\psi^{(m)},~m=1,\dots,n_o,
    }}
    K(x_{k|N}) + \sum_{i=0}^{N-1} l_{k+n}(x_{k|n}, u_{k+n}), 
\end{equation}
\text{subject to: 
\eqref{eq:cocp:rec:x0},
\eqref{eq:cocp:rec:state},
\eqref{eq:cocp:rec:ctrl},
\eqref{eq:cocp:rec:nndm:linear},
\eqref{eq:cocp:rec:nndm:relu},
\eqref{eq:cocp:rec:N},
}
\begin{equation}\label{eq:cocp:simple:safe}
    {\tt MILC\_Inc}
    \brk{
    \hspace{-0.5em}
    \begin{array}{c}
         x_{k|1}, x_{k|1}, \\
         \{{\phi^{(m)}, \psi^{(m)}}\}_{m=1}^{n_o}
    \end{array}
    \hspace{-0.5em}
    ; 
    \hspace{-0.45em}
    \begin{array}{c}
         \underline{x}, \overline{x}, \X \backslash\C = \X\cap
         \\
         \brk{\cup_{m=1}^{n_o} ]\underline{o}^{(m)}, \overline{o}^{(m)}[}
    \end{array}
    \hspace{-0.5em}
    }.
\end{equation}
\end{subequations}
This formulation inherits all constraints from \eqref{eq:cocp:rec} except for \eqref{eq:cocp:rec:safe}. In particular, instead of enforcing \eqref{eq:cocp:rec:safe} for all $n = 0, \dots, N-1$, the MILC-MPC~\eqref{eq:cocp:simple} enforces it only for the first prediction step $n = 0$ using \eqref{eq:cocp:simple:safe}. This approach reduces the number of decision variables while preserving recursive feasibility and safety guarantee as summarized in Corollary~\ref{coro:mpc}, of which the proof follows from Theorem~\ref{thm:cocp_recursive_feasible}.

\begin{corollary}\label{coro:mpc}
In a receding-horizon optimization scheme, a control sequence $\{u_{k+n}\}_{n=0}^{N-1}$ is optimized at the current time step $k$, the first control action $u_k$ is applied, and a new control sequence is computed at the next step. 
If the initial state $ x_{k|0} = x_0 \in \C $, where $ \C \subseteq \X_s $, derived from the CIS synthesis process, is a non-empty set, then the optimization problem~\eqref{eq:cocp:simple} is recursively feasible. Namely, there exists an admissible control sequence $ \{u_{k+n}\}_{n=0}^{N-1} \in\U$ that serves as a solution to \eqref{eq:cocp:simple} when $ x_{k|0} = x_0 \in \C $. Furthermore, this control sequence ensures the subsequent state $ x_{k+1}\in\C $, and \eqref{eq:cocp:simple} remains feasible for initial state $x_{k+1}$.
\end{corollary}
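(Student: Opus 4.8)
The plan is to mirror the recursive-feasibility argument of Theorem~\ref{thm:cocp_recursive_feasible}, while exploiting two structural facts specific to the simplified program \eqref{eq:cocp:simple}. First, \eqref{eq:cocp:simple} is a \emph{relaxation} of \eqref{eq:cocp:rec}: it retains every constraint of \eqref{eq:cocp:rec} except that the safety constraint \eqref{eq:cocp:rec:safe}, imposed for all $n = 0, \dots, N-1$, is replaced by its first-step instance \eqref{eq:cocp:simple:safe} at $n = 0$. Second, in the receding-horizon scheme only the first control action $u_k$ is actually applied, so the only predicted state that governs the next initialization is $x_{k|1}$.

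First I would establish initial feasibility at $x_{k|0} = x_0 \in \C$. By Theorem~\ref{thm:synthesisInvSet}, $\C$ is a CIS, so Definition~\ref{def:ctrlInv} guarantees an admissible control keeping the closed-loop NNDS inside $\C$ at every step. Propagating this control for $N$ steps yields a trajectory with $x_{k|n+1} \in \C \subseteq \X_s \subseteq \X$ for all $n$, which satisfies the dynamics constraints \eqref{eq:cocp:rec:nndm:linear}--\eqref{eq:cocp:rec:nndm:relu} (by Lemma~\ref{lemma:pointwise_relu}), the state bound \eqref{eq:cocp:rec:state}, the control bound \eqref{eq:cocp:rec:ctrl}, and in particular $x_{k|1} \in \C$, hence \eqref{eq:cocp:simple:safe} via the equivalence in Lemma~\ref{lemma:synthesis:subset}. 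Equivalently, one may simply observe that this trajectory is already feasible for the tighter program \eqref{eq:cocp:rec}, which is feasible at $x_0 \in \C$ by Theorem~\ref{thm:cocp_recursive_feasible}, and is therefore a fortiori feasible for the relaxation \eqref{eq:cocp:simple}.

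Next I would close the receding-horizon recursion. Given any solution of \eqref{eq:cocp:simple} at step $k$, constraint \eqref{eq:cocp:simple:safe} together with Lemma~\ref{lemma:synthesis:subset} yields $[x_{k|1}, x_{k|1}] \subseteq \C$, i.e. $x_{k|1} \in \C$; and by Lemma~\ref{lemma:pointwise_relu} the predicted first step matches the actual one, so applying $u_k$ drives the real system to $x_{k+1} = x_{k|1} \in \C$. Since the new initial state again lies in $\C$, the initial-feasibility argument applies verbatim to conclude that \eqref{eq:cocp:simple} is feasible at $x_{k+1}$, closing the induction and proving recursive feasibility.

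The conceptual crux—and the only place the argument departs from Theorem~\ref{thm:cocp_recursive_feasible}—is recognizing that dropping the safety constraints for $n \geq 1$ is harmless in a receding-horizon setting. Although the remaining predicted states $x_{k|n+1}$, $n \geq 1$, are only required to lie in $\X$ rather than in $\C$, they are never executed, and the single enforced inclusion $x_{k|1} \in \C$ is exactly what is needed to guarantee that the next re-solve begins from a state inside the CIS. I expect this observation, rather than any computation, to be the main point to articulate carefully.
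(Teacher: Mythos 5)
Your proposal is correct and follows essentially the same route as the paper, which dispenses with the corollary by simply stating that the proof follows from Theorem~\ref{thm:cocp_recursive_feasible}. You have merely made explicit the two facts the paper leaves implicit — that \eqref{eq:cocp:simple} is a relaxation of \eqref{eq:cocp:rec} sharing its feasible trajectory from Theorem~\ref{thm:cocp_recursive_feasible}, and that the single retained constraint \eqref{eq:cocp:simple:safe} together with Lemmas~\ref{lemma:pointwise_relu} and~\ref{lemma:synthesis:subset} forces the executed successor state $x_{k+1}=x_{k|1}\in\C$, which closes the receding-horizon induction.
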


In addition, given an initial state $x_0$ and leveraging the byproduct control law $u_k = \pi(x_k ; \C)$ according to Remark~\ref{rmk:u_cis}, we can warm-start the MILC-MPC \eqref{eq:cocp:simple} with the following feasible initial guess for the decision variables.
\begin{multline}\label{eq:warmStart}
\left\{
\begin{array}{c}
    z^{(i)}_{k+n},~ \hat{z}^{(i)}_{k+n},~\sigma_{k+n}^{(i)},~i=1,\dots,\ell-1,
    \\
    x_{k|n+1},~u_{k+n},~n=0,\dots,N-1, 
    \\
    \phi^{(m)},~\psi^{(m)},~m=1,\dots,n_o
\end{array}
\right\}
=
\\
{\tt warmStart}(x_0)
:=
\\
\left\{
\begin{array}{c}
    x_{k|n+1} = W^{(\ell)}z^{(\ell-1)}_{k+n} +B^{(\ell)}, 
    \\
    z^{(i)}_{k+n} = \max\{0, \hat{z}^{(i)}_{k+n}\},
    ~
    \sigma_{k+n}^{(i)} = \IversonBracket{\hat{z}^{(i)}_{k+n} \geq 0 }
    \\
    \hat{z}^{(i)}_{k+n} = W^{(i)}z^{(i-1)}_{k+n} +B^{(i)},
    ~
    i=1,\dots,\ell-1,   
    \\
    z^{(0)}_{k+n} = [x_{k|n}^T, u_{k+n}^T]^T,
    ~
    u_{k+n} = \pi(x_{k|n};\C), 
    \\
    n=0,\dots,N-1, 
    ~
    x_{k|0} = x_0,
    \\
    \phi^{(m)} = \IversonBracket{x_{k|1} \leq \underline{o}^{(m)}},
    ~
    \psi^{(m)} = \IversonBracket{x_{k|1} \geq \overline{o}^{(m)}},
    \\
    m=1,\dots,n_o,
\end{array}
\right.
\end{multline}
where the Iverson bracket $\IversonBracket{a \leq b} \in \mathbb{R}^p$ returns a binary vector for $a, b \in \mathbb{R}^p$, and the $j$th dimension is 1 if the condition $a_j \leq b_j$ is true and 0 otherwise. 
This initial guess is feasible because $u_k = \pi(x_k ; \C)$ renders the set $\C$ a CIS, as discussed in Remark~\ref{rmk:u_cis}, and the integer variables $\sigma_{k+n}^{(i)}$ encode the neuron activation status discussed in the proof of Lemma~\ref{lemma:pointwise_relu}, while $\phi^{(m)}$ and $\psi^{(m)}$ have implications established in \eqref{eq:phi_psi_implication}.  
Using branch-and-bound optimization with this initial guess, we can establish an upper bound on the optimal cost \eqref{eq:cocp:simple:objective} and potentially limit the number of branch-and-bound nodes, thereby reducing the computational burden. A similar procedure can be adopted for warm starting \eqref{eq:cocp:rec}.

\begin{remark}    
The MILC-MPCs \eqref{eq:cocp:rec} and \eqref{eq:cocp:simple} involves linear constraints with integer variables. Depending on the choice of cost functions $K$ and $l_{k+n}$, the MILC-MPCs \eqref{eq:cocp:rec} and \eqref{eq:cocp:simple} can be formulated as either a MILP if the cost functions are defined using $L_{1}$ or $L_{\infty}$ norms, or as a MLQP if the cost functions are defined using the $L_{2}$ norm.
\end{remark}

\begin{remark}\label{rmk:cocprec:num_decision_var}
In MILC-MPCs~\eqref{eq:cocp:rec} and \eqref{eq:cocp:simple}, the total number of continuous decision variables is $N(n_x + n_u + 2\sum_{i=1}^{\ell-1} n_i)$, while the total number of integer or binary decision variables is $N(2n_o + \sum_{i=1}^{\ell-1} n_i)$ and $2n_o + N\sum_{i=1}^{\ell-1} n_i$, respectively. Importantly, both quantities scale linearly with the number of neurons in NNDS and the dimensionality of state and control variables.
\end{remark}

\begin{remark}\label{rmk:cocprec:delete_constraints}
Similar to Remark~\ref{rmk:inv:delete_constraints}, to further reduce the number of decision variables, majorities of the open hyperboxes $\obsq{\underline{o}^{(m)}, \overline{o}^{(m)}}$ can be eliminated from the union $\O$ when formulating constraints \eqref{eq:cocp:simple:safe} and  \eqref{eq:cocp:rec:safe} for a given $n${th} prediction horizon. 
This simplification is valid if it is determined that $\bar{\F}^{(n)}(\{x_{k|0}\}, \U) \cap \bsq{\underline{o}^{(m)}, \overline{o}^{(m)}} = \varnothing$, with
\begin{equation*}
    \bar{\F}^{(n)}(\X_k, \U_k) := 
    \underbrace{
    \bar{\F}\Big( \bar{\F}\big(\cdots\bar{\F}
    }_\text{apply $\bar{\F}$ defined in \eqref{eq:reachability} $n$ times}
    (\X_k,\U_k),\U_k\big),\U_k\Big).
\end{equation*}
\end{remark}

\section{Numerical Results}\label{sec:results}
We demonstrate the proposed methods for a lane-keeping case study. The configuration of the lane-keeping problem is presented in Sec.~\ref{result:problem}. We synthesize CISs offline to ensure the safety of vehicle motion subject to lane-keeping constraints in Sec.~\ref{result:synthesis}, while an MPC is formulated to stabilize the vehicle online to user-defined set points in Sec.~\ref{result:control}.

\subsection{Lane Keeping Scenario}\label{result:problem}
\begin{figure}[ht!]
\begin{center}
\includegraphics[width=0.99\linewidth]{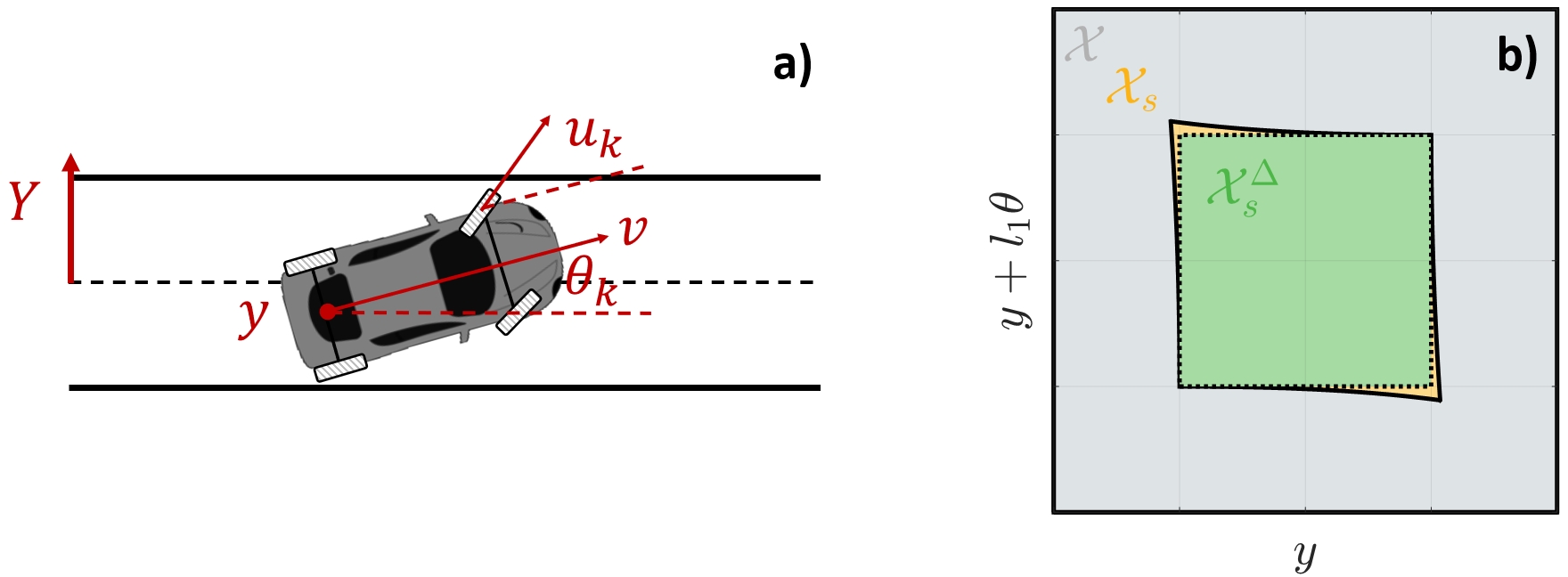}
\end{center}\vspace{-1.8em}
\caption{A lane-keeping problem: a) A vehicle traveling at a constant speed is controlled through steering to ensure it remains within the lane boundaries.  
b) The safe set $\X_s$ is approximated from within using a hyperbox, denoted as $\X_s^\Delta$, in the $y-(y + l_1\theta)$ plane.}
\label{fig:laneKeepingProblem}
\end{figure}

We consider an NNDS with two hidden layers of 8 and 4 neurons, i.e., $\ell = 3$, $n_1 = 8$ and $n_2 = 4$. For a vehicle with length $l_1$ and width $l_2$ (see Fig.~\ref{fig:laneKeepingProblem}a), the NNDS learns the following vehicle kinematics model,
\begin{equation}
    x_{k+1} 
    = f(x_k, u_k),
    ~
    x_k = 
    \begin{bmatrix}
        y_{k}\\
        y_{k} + l_1\theta_{k}
    \end{bmatrix},
\end{equation}
which imitates the following discrete-time bicycle model,
\begin{multline}\label{eq:bicycle}
    x_{k+1} 
    = f_\text{bicycle}(x_k, u_k),
    \\
    y_{k+1} = y_{k} + vdt\sin\theta_k ,
    ~
    \theta_{k+1} = \theta_{k} + \frac{vdt}{l_1} \tan u_k,
\end{multline}
where $y_k$ is the lateral position of the center of the vehicle rear wheel axle; $\theta_k$ is the vehicle yaw angle; $u_k$ is the vehicle steering angle; $v>0$ is the vehicle speed and is assumed to be a constant; $dt$ is the time elapsed between two subsequent discrete time steps $k$ and $k+1$. In the lane-keeping problem, we focus on the vehicles' lateral behavior, excluding the longitudinal position from the states. Meanwhile, the state space and control admissible set are defined according to
\begin{equation}
\begin{array}{c}
\X = [-(w-l_2), (w-l_2)]\times[-\frac{\pi}{2},\frac{\pi}{2}],
~
\U = [-u_{\max}, u_{\max} ],
\end{array}
\end{equation}
where $u_{\max}$ is the maximum allowed steering angle in radians and $w$ is the lane width in meters.

We adopt a lane-keeping configuration similar to that in \cite{laneKeeping_LX, laneKeeping_GB}, where the vehicle is modeled as a rectangular shape that must remain within the lower and upper lane boundaries. These boundaries are visualized as two parallel black lines in Fig.~\ref{fig:laneKeepingProblem}a. As shown in Fig.~\ref{fig:laneKeepingProblem}b, this geometric consideration defines the following safe set.
\begin{equation}
\X_s =
\left\{
\begin{bmatrix}
    y
    \\
    y+l_1\theta
\end{bmatrix}
    :
    \hspace{-0.5em}
\begin{array}{c}
    y+\frac{1}{2}l_2\cos\theta\leq \frac{1}{2}w,
    \\ 
    y+l_1\sin\theta+\frac{1}{2}l_2\cos\theta \leq \frac{1}{2}w,
    \\
    y-\frac{1}{2}l_2\cos\theta\geq -\frac{1}{2}w,
    \\ 
    y+l_1\sin\theta-\frac{1}{2}l_2\cos\theta \geq -\frac{1}{2}w
\end{array}
    \hspace{-0.5em}
\right\},
\end{equation}
which can be approximated from within using the following safe hyperbox,
\begin{equation}
    \X_s^\Delta = [-\frac{1}{2}(w-l_2), \frac{1}{2}(w-l_2)] \times [-\frac{1}{2}(w-l_2), \frac{1}{2}(w-l_2)].
\end{equation}
In the sequel, we use the following parameters for experiments: $l_1=5~\rm{m}$, $l_2=2~\rm{m}$, $w=3.5~\rm{m}$, $v=6~\rm{m/s}$, and $dt=0.1~\rm{sec}$.

\begin{remark}
We densely sample a training dataset $\{(x_k, u_k, x_{k+1})_i\}_{i=1}^{1\times 10^6}$ from $\X \times \U$ using the vehicle bicycle model \eqref{eq:bicycle}. Using the PyTorch library \cite{pytorch}, we train the NNDS with the Stochastic Gradient Descent algorithm on this dataset to minimize the squared error, defined as $\norm{f_\text{bicycle}(x_k, u_k) - f(x_k, u_k)}^2_2$. The errors are empirically evaluated across $\X \times \U$, yielding a value smaller than $ 1.2\times 10^{-4}$.  
Our focus is on the empirical demonstration of the algorithmic developments on the NNDS $f$, rather than on $f_\text{bicycle}$. The introduction of the bicycle model serves only to provide the NNDS with physical interpretability. Such model mismatches will be considered in future investigations.
\end{remark}

\subsection{CIS Synthesis}\label{result:synthesis}

\begin{figure*}[ht!]
    \centering
    \includegraphics[width=0.92\textwidth]{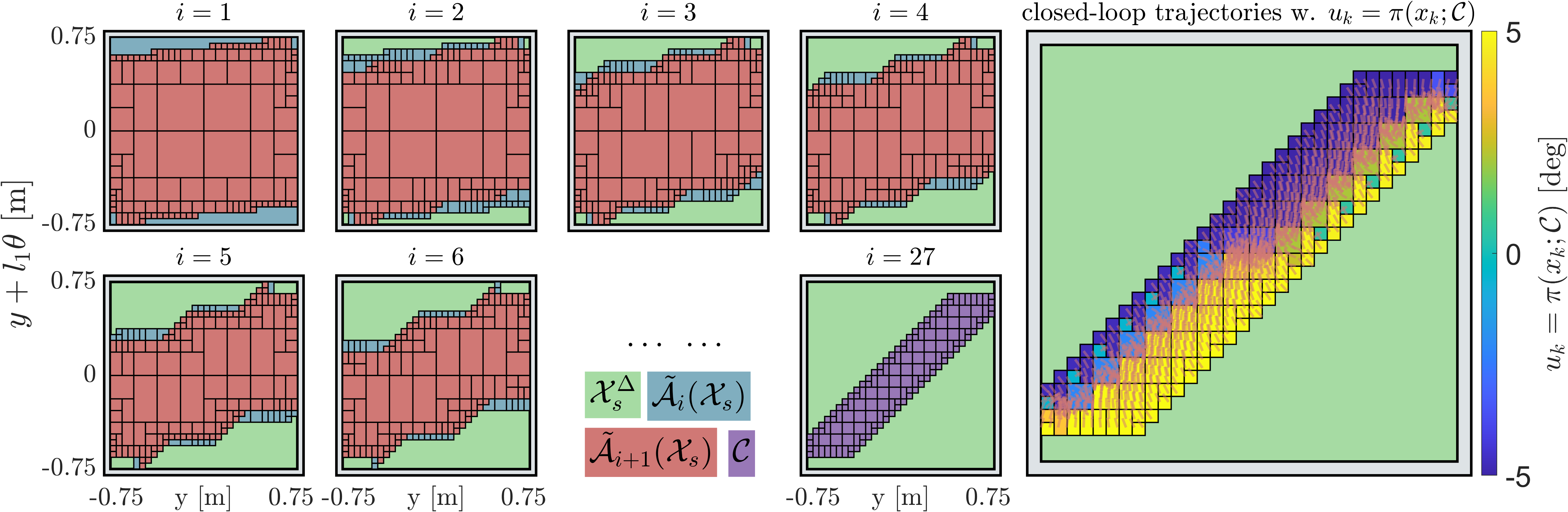}\vspace{-0.5em}
    \caption{
    The CIS synthesis process is conducted with $ d_{\min} = (w - l_2)/32$ m and $ u_{\max} = 5^\circ $. 
    On the left, the proposed method computes $ (i+1) $-Step Admissible sets $ \tilde{\A}_{i+1}(\X_s) $ (represented by unions of red cubes) using $ i $-Step Admissible sets $ \tilde{\A}_{i}(\X_s) $ (represented by unions of blue cubes), ultimately producing a nested sequence of $ i $-Step Admissible sets. The iterative process from iteration 1 to 6 is visualized, and the process terminates at iteration 27, generating a CIS represented by unions of purple cubes.  
    On the right, as a byproduct of this process, we obtain a control law $ u_k = \pi(x_k ; \C) $ according to Remark~\ref{rmk:u_cis}. This control law is visualized in a heatmap, empirically demonstrating how it renders the trajectories (red dashed lines) of the closed-loop NNDS $ f $ forward invariant with respect to $ \C $.
    }
    \label{fig:synthesisCIS}
\end{figure*}

\begin{figure}[ht!]
\begin{center}
\includegraphics[width=0.99\linewidth]{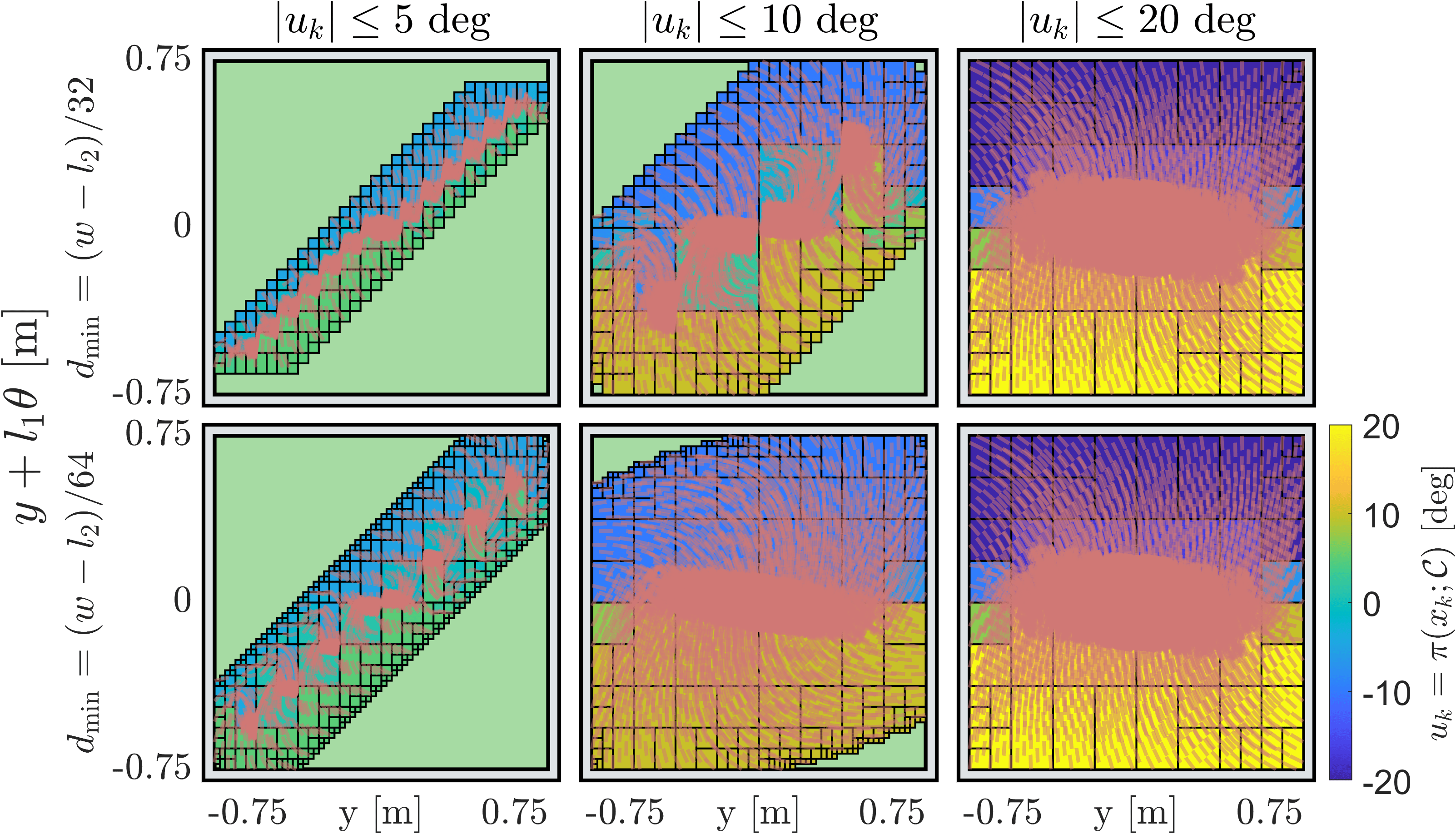}
\end{center}\vspace{-1.5em}
\caption{Synthesizing CISs with different configurations of the SSQ resolution $ d_{\min} $ in the SSQ process and steering limits $ u_{\max} $. Similarly, control laws $ u_k = \pi(x_k ; \C) $ are visualized in heatmaps, with trajectories of the closed-loop NNDS $f$ represented by red dashed lines.}
\label{fig:controlMaps}
\end{figure}

Fig.~\ref{fig:synthesisCIS} illustrates the CIS synthesis process.  
We first apply the SSQ process with $ d_{\min} = (w - l_2)/32 $ m, yielding $ \abs{\Delta(\X_s^{\Delta})} = \ceil{(w - l_2) / d_{\min}}^2 = 1024 $, and the upper bound on the number of iterations is $ N_T = \abs{\Delta(\X_s^{\Delta})} + 1 = 1025 $ according to Theorem~\ref{thm:finite_t_termination}. The control constraint is $ u_{\max} = 5^\circ $.  
The CIS synthesis process generates a nested sequence of $ i $-Step Admissible Sets $ \tilde{\A}_i(\X_s) $, empirically validating Lemma~\ref{lemma:admissible_nested}. Notably, the process terminates at iteration number $ 27 \ll N_T $, which is significantly smaller than the established theoretical upper bound $ N_T = 1025 $.  
Additionally, the CIS synthesis process derives a control law $ u_k = \pi(x_k; \C) $ as a byproduct, according to Remark~\ref{rmk:u_cis}. As shown in the heatmap in Fig.~\ref{fig:synthesisCIS}, this control law keeps the closed-loop NNDS trajectories within the CIS $ \C $, empirically validating Theorem~\ref{thm:synthesisInvSet}.

\begin{figure}[ht!]
    \centering
    \includegraphics[width=0.99\linewidth]{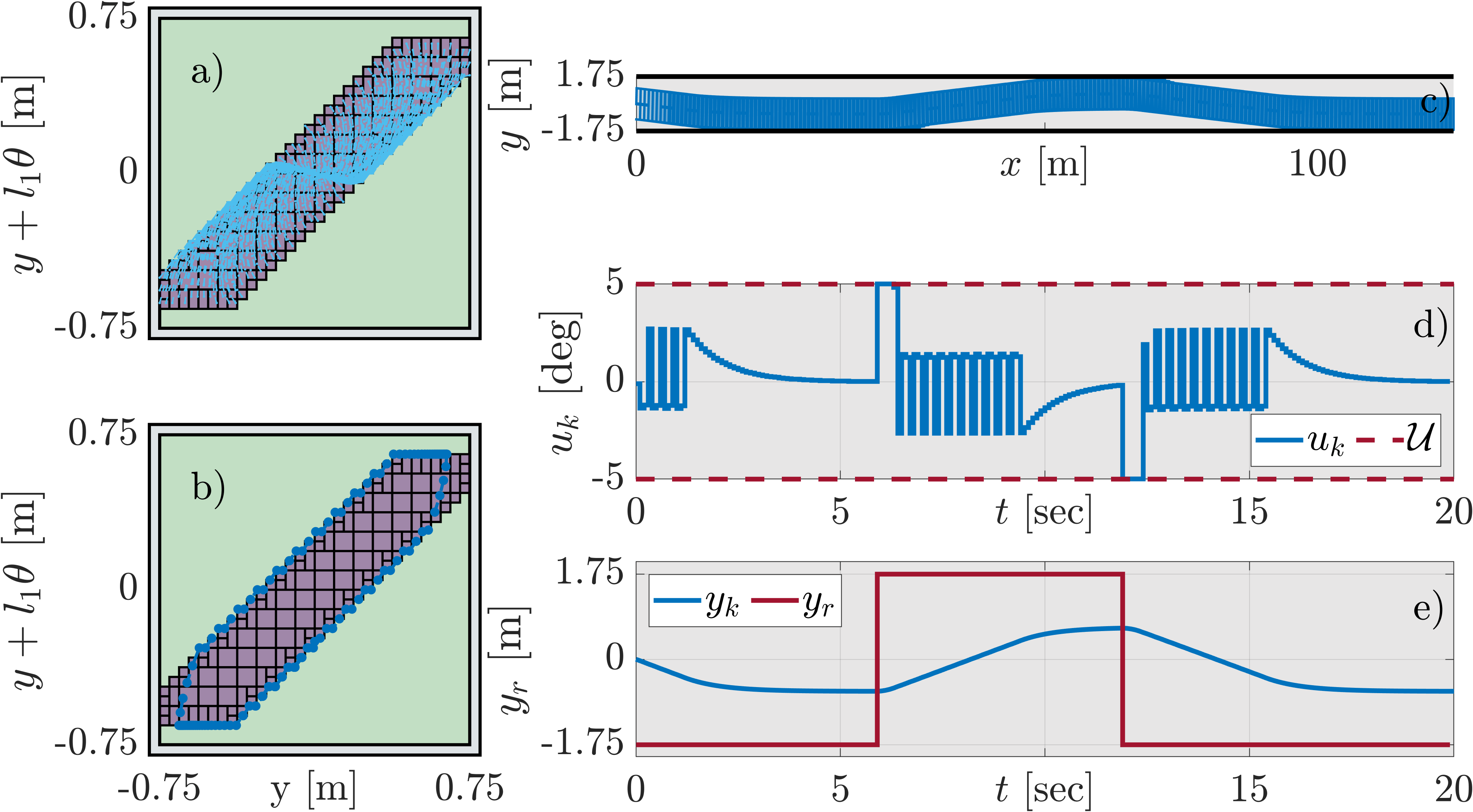}\vspace{-1.5em}
    \caption{Closed-loop simulations with the MPC controller: a) Closed-loop trajectories of the MPC regulating the system to the origin from different initial states inside $\C$. b)-e) The MPC tracks an extreme reference command, alternating between the right lane boundary and the left lane boundary while keeping the system inside $\C$. b) State-space plot showing system evolution. c) Vehicle trajectories in the straight lane. d) Visualization of the control signal and constraints. e) Comparison of the commanded lateral reference with the actual vehicle lateral coordinates.}
    \label{fig:mpcSim}
\end{figure}

In addition, the objective function \eqref{eq:synthesis:obj} of MILP-OP~\eqref{eq:synthesis} is defined as $ \norm{\underline{x}_{k+1} + \overline{x}_{k+1}}_2^2 $, which regulates the next state toward the origin, as also illustrated by the trajectories in Fig.~\ref{fig:synthesisCIS}. The MILP-OP~\eqref{eq:synthesis} is implemented in MATLAB using the YALMIP interface \cite{yalmip} with the Gurobi MIQP solver \cite{gurobi}.  
The computations are performed on a 13th Gen Intel i9-13900F CPU with 32 GB RAM. During the entire CIS synthesis process shown in Fig.~\ref{fig:synthesisCIS}, the optimization solver is called 5478 times to solve MILC-OP~\eqref{eq:synthesis}, with each solution taking an average of $ 0.0765 \pm 0.0011 $ seconds.

Meanwhile, we also repeat the above procedure with different values of $ d_{\min} $ and $ u_{\max} $. The results are summarized in Fig.~\ref{fig:controlMaps}. With a smaller $ d_{\min} = (w - l_2)/64 $, the SSQ process generates smaller basis hyperboxes, and results in a larger CIS $ \C $, as previously noted in Remark~\ref{rmk:finer_resol_larger_iter}.  
Similar observations hold when relaxing the control constraints by increasing $ u_{\max} $. Intuitively, increasing $ u_{\max} $ provides the closed-loop system with greater control authority, allowing for a larger CIS $ \C $. Particularly, with $ u_{\max} = 20^\circ $, the process yields a CIS $ \C $ that is equal to the safe hyperbox $ \X_s^{\Delta} $.  
Additionally, for the results in the first row of Fig.~\ref{fig:controlMaps}, the CIS synthesis processes using $ d_{\min} = (w - l_2)/32 $ m terminate at iteration numbers 27, 16, and 1, respectively, all of which are smaller than the upper bound $ N_T = 1025 $. A similar trend is observed for the results in the second row of Fig.~\ref{fig:controlMaps}, where the CIS synthesis processes terminate at iteration numbers 48, 7, and 1, respectively, with a smaller $ d_{\min} = (w - l_2)/64 $ m. The number of iterations is also significantly smaller than the upper bound $ N_T = \ceil{(w - l_2) / d_{\min}}^2  +1 = 4097 $.  
However, with a smaller $ d_{\min} $, the CIS synthesis process yields larger CISs, which may require more iterations. For example, when $ u_{\max} = 5^\circ $, the iteration count increases from 27 to 48 as $ d_{\min} $ decreases. Demonstration videos are available at \url{https://bit.ly/43huSTr}.


\subsection{Model Predictive Control}\label{result:control}

In this section, we use the CIS synthesized with $ d_{\min} = (w - l_2)/32 $ and $ u_{\max} = 5^\circ $, while keeping other experimental settings unchanged. Additionally, we solve the MILC-MPC \eqref{eq:cocp:simple} with the objective function  
$\frac{1}{2} (x_{k|N} - x_r)^TQ(x_{k|N} - x_r) + \frac{1}{2}\sum_{n=0}^{N-1} \brk{ x_{k|n}^TQx_{k|n} + u_{k+n}^TRu_{k+n}}$,
where the prediction horizon is set to $ N=5 $, the weighting matrices are defined as $ Q=\text{diag}([2~2]^T) $ and $ R = 1 $, and $ x_r = [y_r~ y_r + l_1\theta_r]^T \in \mathbb{R}^2 $ represents a user-defined tracking reference. Under this formulation, the MILC-MPC \eqref{eq:cocp:simple} is a MIQP.  

As shown in Fig.~\ref{fig:mpcSim}a, we first command the vehicle to track the origin $ y_r=\theta_r=0 $ from different initial states inside the CIS $ \C $. The closed-loop trajectories, each with a simulation time of 20 seconds, converge to the origin while staying in the CIS $ \C \subseteq \X_s $, as shown by cyan trajectories. The solution is recursively feasible and provides improved convergence performance to the origin compared to the control laws $u_k = \pi(x_k ; \C)$ in Fig.~\ref{fig:synthesisCIS} and Fig.~\ref{fig:controlMaps}.  
Additionally, we note that the converted MIQP can be solved in an average computation time of $0.0074 \pm 0.0010$ seconds without using ${\tt warmStart}$ in \eqref{eq:warmStart}, which is attributed to the reduction of decision variables described in Remarks~\ref{rmk:cocprec:num_decision_var} and~\ref{rmk:cocprec:delete_constraints}. The ${\tt warmStart}$ feature may become useful for larger-scale problems, potentially further reducing computation time.


Meanwhile, in Fig.~\ref{fig:mpcSim}b-e, we command the vehicle to track a step lateral reference $ y_r = y_r(t) $, where the vehicle alternates between tracking the right and left lane boundaries. The proposed method successfully tracks the lane boundaries with high accuracy, as shown in Fig.~\ref{fig:mpcSim}c, while ensuring that control constraints are satisfied, as illustrated in Fig.~\ref{fig:mpcSim}d. Furthermore, the vehicle remains within the CIS $ \C $, as depicted in Fig.~\ref{fig:mpcSim}b, thereby ensuring that it stays within the lane boundaries throughout the simulation.

\section{Conclusion and Future Work}\label{sec:conclusion}
In this paper, we have proposed an algorithmic framework that addressed critical challenges in ensuring safety and feasibility in control dynamical systems modeled by neural networks. By developing algorithms for synthesizing a CIS via finite-step set recursion, we established theoretical guarantees for closed-loop forward invariance and safety. 
In addition, we proposed an MPC framework that integrates these invariant sets using MILCs, enabling robust and recursively feasible online control. 
Simulation results in an autonomous driving application demonstrate that the proposed methods effectively ensure closed-loop system safety, maintain recursive feasibility of the MPC, and provide robust performance suitable for safety-critical control applications. 
Future research will investigate improvements in the time complexity of the offline CIS synthesis process, particularly for high-dimensional systems, and extend these methods to handle uncertainties.
\section*{References}\vspace{-1.5em}
\bibliographystyle{IEEEtran}
\bibliography{ref}
\vspace{-4em}
\begin{IEEEbiography}[{\includegraphics[width=1in,height=1.25in,clip,keepaspectratio]{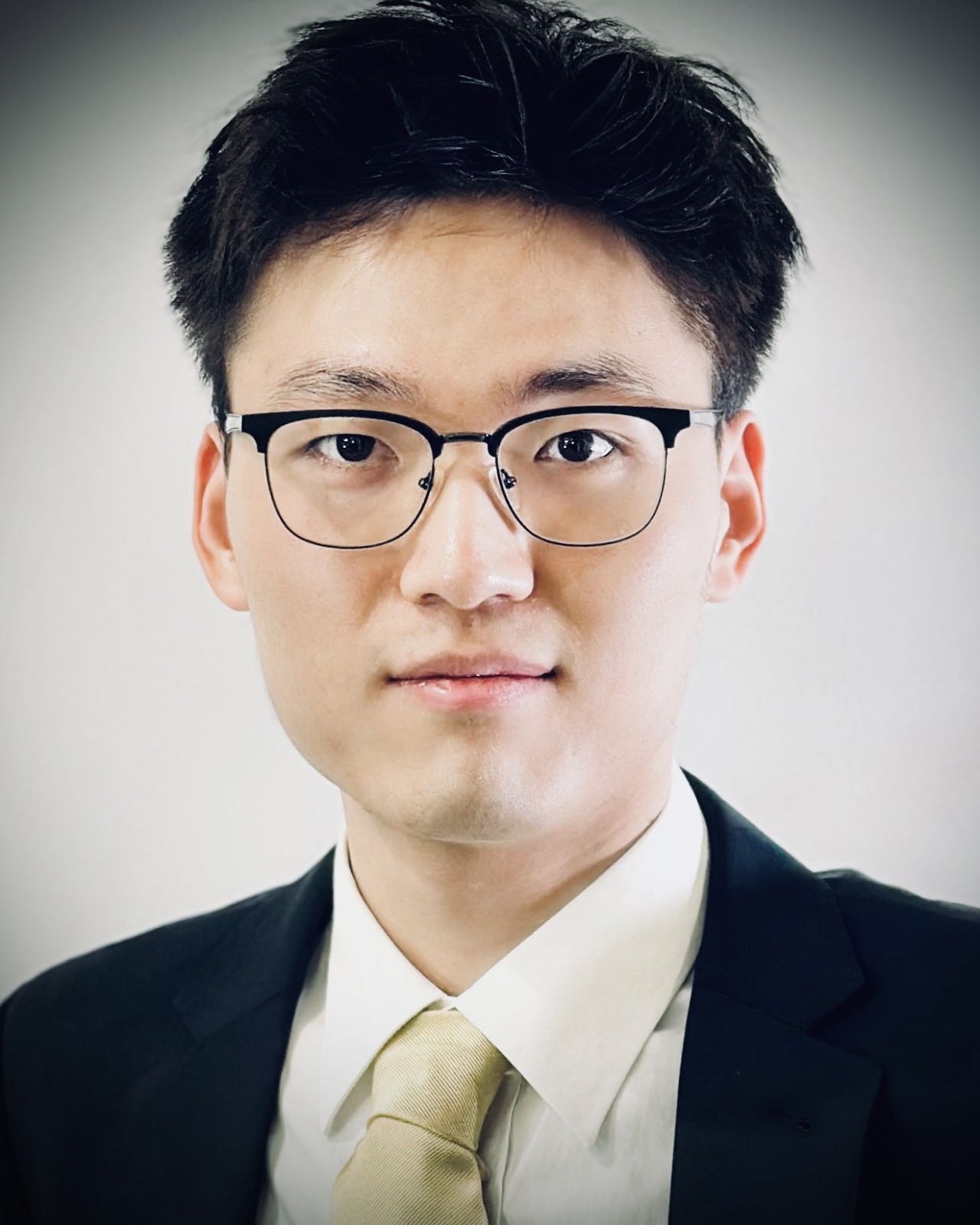}}]{Xiao Li}
received the B.S. degree in mechanical engineering from Shanghai Jiao Tong University, Shanghai, China, in 2019, and the M.S. degree in mechanical engineering from the University of Michigan, Ann Arbor, MI, USA, in 2021, where he is currently pursuing Ph.D. degree in aerospace engineering. His research interests include learning-based methods in constrained optimization and in human-in-the-loop decision-making for autonomous agents.
\end{IEEEbiography}

\vspace{-4em}
\begin{IEEEbiography}[{\includegraphics[width=1in,height=1.25in,clip,keepaspectratio]{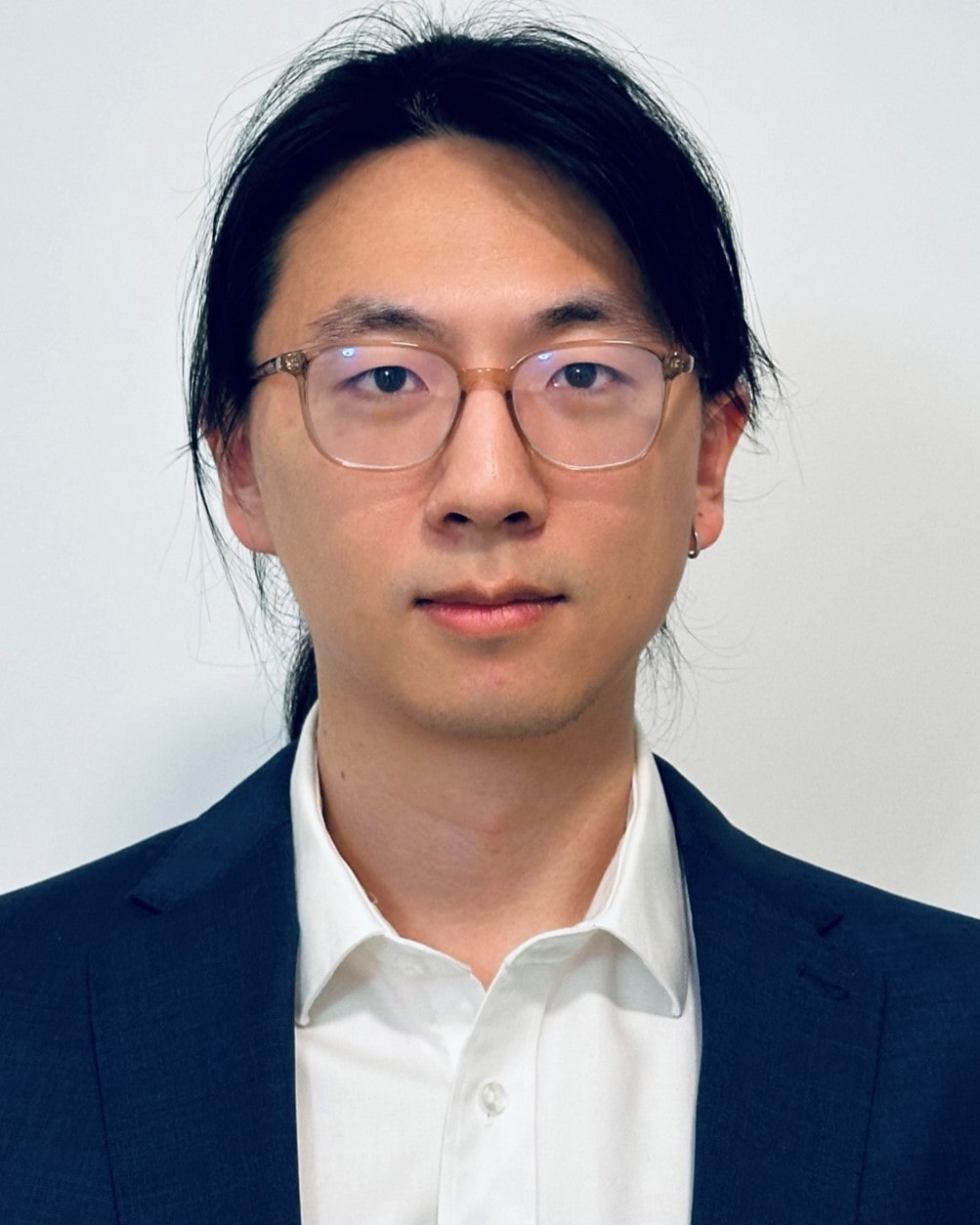}}]{Tianhao Wei}
Tianhao Wei received the B.S. degree in Computer Science from Zhejiang University in 2019, and the Ph.D. degree in Electrical and Computer Engineering (affiliated with the Robotics Institute) from Carnegie Mellon University in 2024. His research interests include safe and intelligent robotic systems and formal safety guarantees for control and neural networks.
\end{IEEEbiography}

\vspace{-4em}
\begin{IEEEbiography}[{\includegraphics[width=1in,height=1.25in,clip,keepaspectratio]{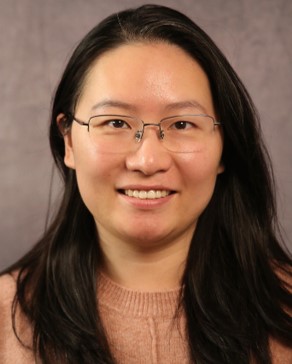}}]{Changliu Liu}
Changliu Liu is an assistant professor in the Robotics Institute, School of Computer Science, Carnegie Mellon University (CMU), where she leads the Intelligent Control Lab. She received her Ph.D. degree from University of California at Berkeley in 2017. Her research interests lie in the design and verification of human-centered intelligent systems with applications to manufacturing and transportation and on various robot embodiments, including robot arms, mobile robots, legged robots, and humanoid robots. Her work has been recognized by NSF Career Award, Amazon Research Award, Ford URP Award, Advanced Robotics for Manufacturing Champion Award, Young Investigator Award at International Symposium of Flexible Automation, and many best/outstanding paper awards.
\end{IEEEbiography}

\vspace{-4em}
\begin{IEEEbiography}[{\includegraphics[width=1in,height=1.25in,clip,keepaspectratio]{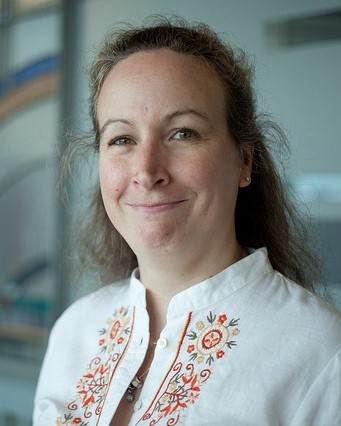}}]{Anouck Girard}
received the Ph.D. degree in ocean engineering from the University of California, Berkeley, CA, USA, in 2002. She has been with the University of Michigan, Ann Arbor, MI, USA, since 2006, where she is currently a Professor of Robotics and Aerospace Engineering, and the Director of the Robotics Institute. Her current research interests include vehicle dynamics and control, as well as decision systems. Dr. Girard was a Fulbright Scholar in the Dynamic Systems and Simulation Laboratory at the Technical University of Crete in 2022.
\end{IEEEbiography}

\vspace{-4em}
\begin{IEEEbiography}[{\includegraphics[width=1in,height=1.25in,clip,keepaspectratio]{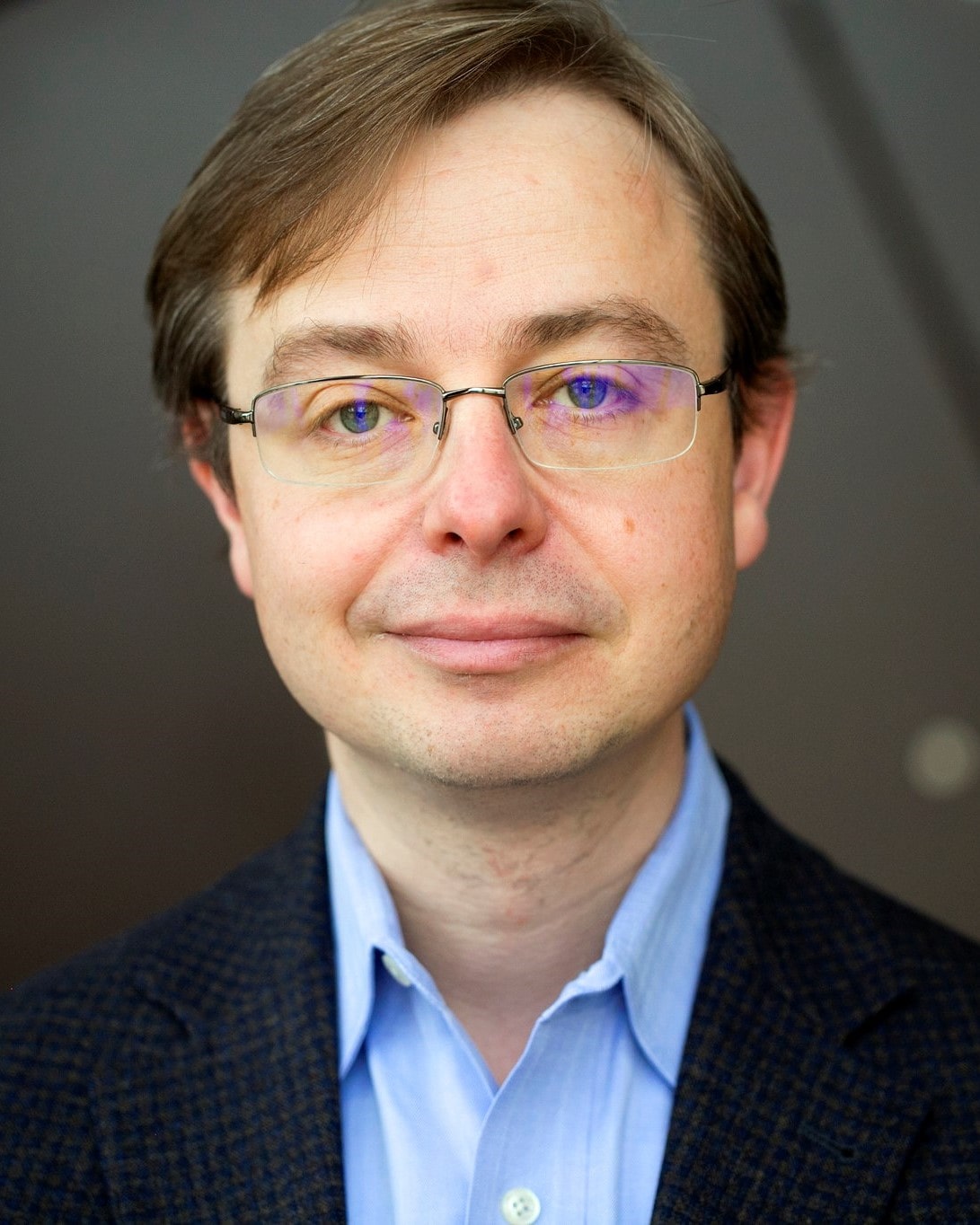}}]{Ilya Kolmanovsky}
received his Ph.D. degree in Aerospace Engineering in 1995 from the University of Michigan, Ann Arbor. He is presently a Pierre T. Kabamba Collegiate Professor in the Department of Aerospace Engineering at the University of Michigan.  His research interests are in control theory for systems with state and control constraints and in aerospace and automotive control applications.
\end{IEEEbiography}

\end{document}